\documentclass[11pt]{article}
\usepackage{graphicx,amsmath,amsfonts,amssymb,amsthm,fullpage,times,color}
\usepackage{ytableau}
\catcode`\@ 11
\catcode`\@ 12
\numberwithin{equation}{section}

\def\sf#1#2{{\textstyle\frac{#1}{#2}}} 
\newcommand{\myatop}[2]{\genfrac{}{}{0pt}{}{#1}{#2}}  
\def\Re{\mathop{\rm Re}\nolimits}
\def\Im{\mathop{\rm Im}\nolimits}
\def\Wr{\mathop{\rm Wr}\nolimits}

\allowdisplaybreaks

\theoremstyle{definition}  
\newtheorem{definition}{Definition}[section]

\newtheorem{lemma}{Lemma}[section]
\newtheorem{corollary}{Corollary}[section]
\newtheorem{proposition}{Proposition}[section]
\newtheorem{remark}{Remark}[section]
\newtheorem{example}{Example}[section]

\setlength{\textwidth}{6.8 in}
\setlength{\topmargin}{-0.2 in}
\setlength{\oddsidemargin}{-0.2 in}
\setlength{\textheight}{9.5 in}

\begin{document}

\title{Classification of KPI lumps}
\author{Sarbarish Chakravarty and Michael Zowada \\[1ex]
\small\it\
Department of Mathematics, University of Colorado, Colorado Springs, CO 80918 \\}
\date{}
\maketitle \kern-2\bigskipamount

\begin{abstract}
A large family of nonsingular rational solutions of the Kadomtsev-Petviashvili 
(KP) I equation are investigated.  These solutions are constructed via the Gramian 
method and are identified as points in a complex Grassmannian.
Each solution is a traveling wave moving with a uniform background velocity but have 
multiple peaks which evolve at a slower time scale in the co-moving frame.
For large times, these peaks separate and form well-defined wave patterns in the 
$xy$-plane. The pattern formation are described by the roots of well-known polynomials 
arising in the study of rational solutions of Painlev\'e II and IV equations.
This family of solutions are shown to be described by the classical Schur functions 
associated with partitions of integers and irreducible representations of the symmetric 
group of $N$ objects. It is then shown that there exists a one-to-one
correspondence between the KPI rational solutions considered in this article
and partitions of a positive integer $N$.
\end{abstract}

\thispagestyle{empty}

\section{Introduction}

An important example of nonlinear wave equations in $(2+1)$-dimensions  
is the Kadomtsev-Petviashvili (KP) equation, which is 
dispersive equation describing the propagation of small amplitude, 
long wavelength, uni-directional waves with small transverse variation.  
It was originally proposed by Kadomtsev and Petviashvili~\cite{KP70} 
to study ion-acoustic waves of small amplitude propagating in plasmas.
The KP equation has many physical applications and arises in such diverse
fields as plasma physics~\cite{IR00,L98}, fluid dynamics~\cite{AS81,A11,K18}, 
nonlinear optics~~\cite{PSK95,BWK16} and ferromagnetic media~~\cite{TF85}.
It is also an exactly solvable nonlinear equation with remarkably 
rich mathematical structure documented
in many research monographs (see e.g.~\cite{NMPZ1984,AC91,IR00,H04,K18}).

There are two mathematically distinct versions of the KP equation, referred
to as KPI and KPII. This article is concerned with the KPI equation which
can be expressed as
\begin{equation}\label{kp}
(4u_t+6uu_x+u_{xxx})_x=3u_{yy}.
\end{equation}
Here $u=u(x,y,t)$ represents the normalized wave amplitude at the point $(x,y)$ in
the $xy$-plane for fixed time $t$, and the subscripts denote partial derivatives.
The KPII equation is \eqref{kp} with $-3u_{yy}$ in the right hand side.
From the water wave theory perspective, KPI corresponds to large surface tension
while KPII arises in the small surface tension limit of the 
multiple-scale asymptotics~\cite{AS81,A11}.

The KPI equation admits large classes of exact rational solutions
known as lumps which are localized in the $xy$-plane and are non-singular
for all $t$. The simplest type of rational solutions was first discovered 
analytically by employing the dressing method~\cite{MZBIM77} and subsequently 
via the Hirota method~\cite{SA79}. These solutions consist of $N$ local maxima
(peaks) traveling with distinct velocities and their trajectories
remain unchanged before and after interaction.  
These solutions are often referred to as {\it simple lumps} in contrast
to yet another class of KPI rational solutions arise as bound states
formed by fusing the simple lump solutions in a certain manner. 
These are called the {\it multi-lump}  
solutions which were originally found in~\cite{JT78} by algebraic
techniques and further investigated by several 
authors~\cite{PS93,GPS93,ACTV00}. The multi-lump solution is
an ensemble of a finite number of localized structures (or peaks) 
interacting in a non-trivial manner
unlike the $N$-simple lump solution. The peaks move with the same center-of-mass
velocity but undergo anomalous scattering with a non-zero deflection angle
after collision. Furthermore, the peak amplitudes evolve in time and reaches
a constant asymptotic value which equals that of the simple $1$-lump peak.
Both simple and multi-lump rational solutions of the KPI equation 
are rational potentials associated with the time-dependent Schr\"odinger equation, 
and were studied via inverse scattering method~\cite{AV97,VA99}.

In this article, a large family of rational solutions of KPI 
described above as multi-lump solutions, are investigated. These solutions
constructed here via the Gramian method which expresses the solutions
in terms of the determinant (Gramian) of a Gram matrix. The entries of the Gram 
matrix are constructed out of the inner products of $n$ linearly independent
complex vectors which spans a $n$-dimensional subspace which represents
a non-degenerate point $V_P$ in a complex Grassmannian. Then the Gramian 
which is the $\tau$-function of the associated KPI solution is the
norm of the complex $n$-form obtained as the image of $V_P$  via the 
Pl\"ucker embedding of the Grassmannian. It is shown in this paper that this
geometric construction leads immediately to a positive definite polynomial
form of the $\tau$-function for all $(x,y,t) \in \mathbb{R}^3$ so that
the resulting KPI rational solution is nonsingular in the $xy$-plane
for all $t$ and decays as $(x^2+y^2)^{-1}$.

The next part of this paper describes how to interpret the multi-lump 
$\tau$-function as a sum of squares in terms of the classical Schur functions 
which arises in the theory of symmetric functions. The Schur functions are 
weighted homogeneous polynomials of degree $N$ in $N$ complex variables
with specified weights of each variable, and form a basis of 
symmetric polynomials over integers. The Schur functions play an important
role in the Sato theory of the KP equation~\cite{S81,OSTT88,K18}. 
However, the relationship between 
the multi-lump $\tau$-function and the Schur functions unfolded in this
article is new, and forms a key feature of the underlying multi-lump 
solution structure. The Schur functions then lead to a natural characterization 
that associates each KPI multi-lump solution to a unique partition $\lambda$ of 
a given positive integer $N$ that is the degree of the corresponding Schur function.
Finally, this last characterization can be utilized to formulate a comprehensive
classification of the multi-lump solutions which are thus referred to as the
$N$-lump solutions of the KPI equation. This classification is also new and 
completes the task that was initiated in an earlier comprehensive 
article~\cite{ACTV00}.

Finally, a detailed study of the long time behavior of the $N$-lump solutions
is carried out. Our investigation reveals that (a)\, for $|t| \gg 1$, the $N$-lump 
KPI solutions splits into $N$ distinct peaks which form a rich variety
of surface wave patterns in the $xy$-plane; and (b)\, the solution in the
$O(1)$-neighborhood of each peak is a single $1$-lump solution as $|t| \to \infty$
implying that the $N$-lump solution can be viewed as a superposition of $N$ $1$-lump
solutions for large times. The approximate location of the peaks are determined
in terms of the zeros of the Yablonskii-Vorob'ev and Wronskian-Hermite (heat)
polynomials which also describe certain classes of rational solutions of the
Painlev\'e II and IV equations, respectively. This occurrence is not accidental
but is related to the similarity reductions of the KPI $N$-lump solutions
although this matter will not be pursued further in this paper. Note that the 
conclusion in part (a) above is based on an assumption that the non-zero roots 
of the Wronskian-Hermite polynomials are simple~\cite{FHV12}.
It is also worth noting that some of the exotic surface patterns of the KPI
multi-lumps were observed earlier~\cite{GPS93,ACTV00,G18,DLZ21} and 
particularly, the long time asymptotics has been reported recently~\cite{YY21}
after our investigation was complete. However, the treatment presented in this
paper is new and different from the earlier works since it utilizes the special
property of the Schur functions as characteristics of irreducible representation
of the symmetric group $S_N$. In fact, this property plays a key role in our 
long time asymptotic analysis which shows that as a whole, the $N$-lump structure 
is a traveling wave which moves with a uniform velocity while the internal 
dynamics of the component lumps takes place at a slower time scale $O(|t|^{1/p})$,
where (generically) $p=2$ or $p=3$. In a recent paper~\cite{CZ21}, we analyzed the
lump dynamics for a special class of $N$-lump solutions, and showed
the dynamics corresponds to a special reduction of the Calogero-Moser system. 
We believe that is the case for the general KPI $N$-lumps although we do not 
pursue the matter in this article.

The paper is structured as follows: Section 2 describes the Gramian
construction of the $\tau$-function for the multi-lumps
and provides some illustrative examples of some simple multi-lump
solutions. A brief overview of the Schur functions and integer partition
theory is provided in Section 3, followed by a characterization of the
KPI solutions by a partition $\lambda$ of a positive integer $N$, that 
finally leads to a complete classification of the $N$-lump solutions 
in terms of integer partitions. The long time behavior of the $N$-lump
solutions is described in Section 4 which begins with a brief overview of
irreducible characters of the symmetric group $S_N$, that is essential
for the asymptotic analysis presented. Section 5
contains some concluding remarks including future directions
of this continued investigation of KPI lumps. 
\section{Construction of multi-lump solutions}
In this section we describe how to construct a large family
of multi-lump solutions via Gramians~\cite{H04} which arise from the
application of the binary Darboux transformation
to the KPI equation~\cite{MS91}. The building blocks for these solutions 
are given by a special type of complex polynomials called the generalized 
Schur polynomials introduced below.

\subsection{Generalized Schur polynomials}
Let $k$ be a complex parameter and $\theta := kx+k^2y+k^3t+\gamma(k)$ where 
$(x,y,t) \in \mathbb{R}^3$ and $\gamma(k)$ is an arbitrary, 
differentiable (to all orders) function of $k$. The generalized Schur polynomial 
$p_n(x,y,t,k)$ is then defined via
\begin{equation}
\phi_n:=\frac{1}{n!}\partial_k^n\exp(i\theta) = p_n\exp(i\theta)\,.
\label{sp}
\end{equation}
The $p_n$ is a polynomial in $n$ variables 
$\theta_1, \theta_2, \cdots, \theta_n$ with $\theta_j := i\sf{\partial_k^j\theta}{j!}$.
However, only the first three variables 
\begin{equation}
\theta_1 = i(x+2ky+3k^2t+\gamma_1), \quad \theta_2 =i(y+3kt+\gamma_2),
\quad \theta_3 = i(t+\gamma_3)
\label{theta}
\end{equation}
depend on $(x,y,t)$ while $\theta_j = i\gamma_j(k)$ for $j>3$ depend only on $k$.
For a fixed value of $k$, $\gamma_j(k) = \sf{\partial_k^j\gamma(k)}{j!}$, 
$j=1,\cdots,n$ are viewed as independent complex parameters which parametrize
each polynomial $p_n$.

A generating function for the $p_n$'s is given by the Taylor series
\begin{equation}
\exp(i\theta(k+h)) = \exp(i\theta) \sum_{n=0}^\infty p_n(k)h^n \,,  
\label{gen}
\end{equation}
which yields, after expanding $i\theta(k+h)=i\theta(k)+h\theta_1+h^2\theta_2+\cdots$,
and comparing with the right hand side, an explicit expression for $p_n$, namely
\begin{equation}
p_n(\theta_1,\ldots,\theta_n) = \hspace{-.2in} 
\sum_{\myatop{m_1,m_2,\cdots,m_n \geq 0}{m_1+2m_2+\cdots nm_n=n}}  
\prod_{j=1}^n\frac{\theta_j^{m_j}}{m_j!} \,.
\label{pn}
\end{equation}   
The first few generalized Schur polynomials are given by
\[p_0=1, \quad p_1=\theta_1, \quad p_2 = \sf12\theta_1^2+\theta_2 \quad
p_3 = \sf{1}{3!}\theta_1^3+\theta_1\theta_2+\theta_3 \ldots \,.\]
It follows from \eqref{pn} that $p_n$ is a {\it weighted}\, homogeneous
polynomial of degree $n$ in $\theta_j, \, j=1,\ldots,n$, i.e.,
$p_n(a\theta_1,a^2\theta_2,\cdots,a^n\theta_n)=
a^np_n(\theta_1,\theta_2,\cdots,\theta_n)$, where weight$(\theta_j)=j$.
Some useful properties for the $p_n$'s are listed below. They can
be derived using \eqref{sp} and \eqref{gen}, and will be
used throughout this article. 
\begin{subequations}
\begin{equation}
\partial^j_{\theta_1}p_n = \partial_{\theta_j} p_n = 
\left\{ \begin{matrix} p_{n-j}\,, & \quad j\leq n \cr
0\,, & \quad j>n \end{matrix} \right.  \label{propa} 
\end{equation}
\begin{equation}
p_{n+1}(k) = \sf{1}{n+1}\sum_{j=0}^n (j+1)\,\theta_{j+1}\,p_{n-j}\,, 
\quad n \geq 0, \qquad p_0=1 
\label{propb}
\end{equation} 
\begin{equation}
p_n(\theta_1+h_1, \theta_2+h_2, \cdots, \theta_n+h_n) =
\sum_{j=0}^n p_j(h_1,h_2,\cdots,h_j) \, p_{n-j}(\theta_1,\theta_2,\cdots,\theta_{n-j})
\label{propc}
\end{equation}
\end{subequations}
\begin{remark}
\begin{itemize}
\item[(a)] The generalized Schur polynomials were introduced in
the study of rational solutions for the Zakharov-Shabat and KP 
hierarchies~\cite{M79,P94,P98} where the phase was defined as 
a quasi-polynomial $\theta = kt_1+k^2t_2+k^3t_3+\cdots$ in the
multi-time variables $(t_1,t_2,t_3,\ldots)$.
In this paper, we restrict the $p_n$'s
to depend only on the first three variables $(t_1,t_2,t_3):=(x,y,t)$ while the
dependence on the remaining variables are parametric, through the complex
parameters $\theta_j= i\gamma_j$ for $j>3$.
\item[(b)] It is possible to recover the standard Schur polynomials 
which are important in the
Sato theory of the KP hierarchy~\cite{S81} (see also~\cite{OSTT88,K18})
by modifying the generating function \eqref{gen} as 
\begin{equation*}
\exp(\theta(k)) = \sum_{n=0}^\infty s_nk^n \,,  
\end{equation*}
where $\theta = kt_1+k^2t_2+k^3t_3+\cdots$. The Schur polynomials
$s_n(t_1,t_2,\ldots,t_n)$ are then given by \eqref{pn} by replacing
$\theta_j$ with $t_j$.
\end{itemize} 
\end{remark} 
\subsection{The multi-lump $\tau$-function}
The solution of the KPI equation \eqref{kp} can be expressed as
\begin{equation}
u(x,y,t) = 2 (\ln \tau)_{xx} \,,
\label{u}
\end{equation}
where the function $\tau(x,y,t)$ is known as the 
$\tau$-function~\cite{S81,H04}. We describe below an explicit
construction and the resulting properties of a $\tau$-function 
associated with a large family of rational multi-lump solutions of KPI.

Let $1 \leq m_1<m_2<\cdots<m_n$ denote $n$ distinct positive
integers and let $\phi_{m_j}(x,y,t,k)=p_{m_j}\exp(i\theta)$ 
be as in \eqref{sp} and $\bar{\phi}_{m_j}$ be its complex conjugate..
Define a $n \times n$ hermitian matrix $M$ whose entries are 
\begin{equation}
M_{ij} = \int_x^\infty \phi_{m_i}\bar{\phi}_{m_j}\,dx' =
\int_x^\infty p_{m_i}\bar{p}_{m_j}\exp i(\theta-\bar{\theta})\,dx' \,,
\label{M}
\end{equation}
where the parameter $k=a+ib$ is chosen
such that $b:=\Im(k)>0$ in order for the integral in \eqref{M} to converge. 
Then a $\tau$-function for KPI is 
\begin{equation}
\tau(x,y,t) = \det M \,, 
\label{tau}
\end{equation}
such that the corresponding function $u(x,y,t)$ in \eqref{u} satisfies
the KPI equation. This form of the $\tau$-function is called
the Gramian where $M$ is a Gram matrix which can be derived 
using a variety of algebraic techniques such as the binary Darboux
transformation~\cite{MS91} as well as the Hirota bilinear method~\cite{H04}.

The matrix $M$ is positive definite since for any vector 
$v \in \mathbb{C}^n$ 
\[v^{\dagger}Mv = \sum_{i,j}\bar{v}_iM_{ij}v_j =
\int_x^\infty \big(\sum_{i,j}\bar{v}_i\phi_{m_i}\bar{\phi}_{m_j}v_j\big)\,dx'
= \int_x^\infty \big(\sum_{i}|\bar{v}_i\phi_{m_i}|^2\big)\,dx' > 0 \,,
\]
so that $\tau(x,y,t)=\det M>0$. Consequently, the corresponding
KPI solution given by \eqref{u} is nonsingular in the $xy$-plane
for all $t$. Furthermore, evaluating the integral in \eqref{M} by 
integration by parts, yield
\begin{equation}
M_{ij}=\frac{\exp i(\theta-\bar{\theta})}{2b}H_{ij}\,, \qquad \quad
H_{ij} = \sum_{r=0}^{m_i+m_j}
\frac{\partial^r_x(p_{m_i}\bar{p}_{m_j})}{(2b)^r} \,.
\label{H}
\end{equation}
Consequently, \eqref{u} can be re-expressed as 
\[u = 2 (\ln \det M)_{xx} = 2 (\ln \det H)_{xx} \,,\] 
since the factor $\sf{e^{in(\theta-\bar{\theta})}}{(2b)^n}$ arising 
in $\det M$ is annihilated by $\ln(\cdot)_{xx}$. The generalized 
Schur polynomial $p_j$ is of degree $j$ in $x,y,t$, hence $\det H$ is 
also a polynomial. Consequently, $u(x,y,t)$ in \eqref{H} is a rational function
of its arguments and decays as $(x^2+y^2)^{-1}$ for fixed $t$.

It is evident from \eqref{H} that the matrix $H$ is also positive
definite. We explore further its underlying structure.
The expression for $H_{ij}$ in \eqref{H} can be expanded using
Leibnitz rule of derivatives as  
\begin{equation*}
H_{ij} = \sum_{r=0}^{m_i+m_j}\frac{1}{(2b)^r} 
\sum_{s=0}^r\binom{r}{s}\partial_x^sp_{m_i}\partial_x^{r-s}\bar{p}_{m_j} 
= \sum_{r=0}^{m_i}\sum_{s=0}^{m_j} 
\frac{1}{(2b)^{r+s}}
\binom{r+s}{s}\partial_x^rp_{m_i}\partial_x^s\bar{p}_{m_j}\,.   
\end{equation*}
Notice that the upper limits of both sums in the last equality above can be 
extended to $m_n$ without any loss of generality because
from \eqref{theta} and \eqref{propa} it follows
that $\partial_x^rp_j=i^r\partial^r_{\theta_1}p_j=i^rp_{j-r}$ if $r\leq j$
and $\partial_x^rp_j=0$ if $r>j$. Next consider $n$ complex vectors in $\mathbb{C}^{m_n+1}$
\[P_i := (p_{m_i}, \partial_xp_{m_i}, \cdots, \partial_x^{m_n}p_{m_i})^T\,,
\qquad i=1,2,\ldots, n \]
then the elements of $H_{ij}$ are given by the inner products  
\begin{equation}
H_{ij} = P_j^{\dagger}CP_i\,, \qquad 
\quad C_{rs} = \frac{1}{(2b)^{r+s}}\binom{r+s}{s} \,, 
\quad r,s=0,1,\ldots,m_n\,,
\label{gram}
\end{equation}
where $C$ is a real, symmetric $(m_n+1) \times (m_n+1)$ matrix. 
The matrix $H$ is known as the Gram matrix and $\det H$ is called
the Gramian of the vectors $P_1,P_2,\ldots,P_n$ which are linearly 
independent since $H$ is positive definite. A geometric interpretation
of the KPI $\tau$-function $\det H$ is given as follows:
Consider the complex vector space $\mathbb{C}^{m_n+1}$ endowed with
a hermitian inner product given by the matrix $C$. Then
$V_P := {\rm span}_{\mathbb{C}}\{P_1,P_2,\ldots,P_n\}$ is a complex
$n$-dimensional subspace of $\mathbb{C}^{m_n+1}$, i.e., a point in the
complex Grassmannian ${\rm Gr}_{\mathbb{C}}(n,m_n+1)$. A natural
representation of
$V_P \in {\rm Gr}_{\mathbb{C}}(n,m_n+1)$ is given by the 
$m_n+1 \times n$ matrix $P := (P_1,P_2,\ldots,P_n)$ whose 
$j^{\rm th}$ column is the vector $P_j$ such that the entries 
of $P$ are given by
\begin{equation}
P_{rj} = \partial_x^rp_{m_j} =  i^rW_{rj}\,, \quad
W_{rj} = \left\{\begin{matrix} p_{m_j-r}\,, 
& \quad r\leq m_j \cr 0\,, & \quad r>m_j \end{matrix} \right. \,,
\quad r=0,1,\ldots,m_n\,, \quad j=1,2,\ldots,n \,.  
\label{P}
\end{equation}
Since the point $V_P \in {\rm Gr}_{\mathbb{C}}(n,m_n+1)$
is independent of the choice of basis for the $n$-dimensional subspace,
its matrix representation is unique up to a right multiplication
$P \to PA$ for any $A \in {\rm GL}(n, \mathbb{C})$. Yet another way
to represent $V_P$ is via the Pl\"ucker map 
$V_P \to \Lambda^n\mathbb{C}^{m_n+1}$ whose image
is a one-dimensional subspace of the exterior product space 
$\Lambda^n \mathbb{C}^{m_n+1}$. Explicitly, this is given by the $n$-form
\[\omega_P = P_1 \wedge \cdots \wedge P_n = \hspace{-0.3in} 
\sum_{0\leq i_1<\cdots<i_n\leq m_n}\hspace{-0.3in} 
P(i_1,\ldots,i_n)\,e_{i_1}\wedge \cdots e_{i_n}\,, \]
where $P(i_1,\ldots,i_n)$ are the $n \times n$ maximal minors of
the matrix $P$, i.e., the determinants of the $\binom{m_n+1}{n}$
submatrices of $P$ with rows indexed by $0\leq i_1<\cdots<i_n\leq m_n$,
and $\{e_j\}_{j=0}^{m_n}$ is the standard basis of $\mathbb{C}^{m_n+1}$.
The maximal minors $P(i_1,\ldots,i_n)$ are called
the Pl\"ucker co-ordinates of the point $V_P \in {\rm Gr}_{\mathbb{C}}(n,m_n+1)$;
they are not all independent since they satisfy the Pl\"ucker relations
\[\sum_{r=1}^{n+1} (-1)^{r-1}P(l_1,\ldots l_{n-1},j_r)
P(j_1,\ldots,j_{r-1},j_{r+1},\ldots,j_{n+1}) =0\,,
\quad l_1<\ldots<l_{n-1}\,, \quad j_1<\ldots <j_{n+1} \,.\]
The Pl\"ucker co-ordinates are unique upto 
$P(i_1,\ldots,i_n) \to (\det A) P(i_1,\ldots,i_n), \,\, 
A \in {\rm GL}(n, \mathbb{C})$ due to a change of basis for $V_P$.
The inner product on $\mathbb{C}^{m_n+1}$ induces a natural inner
product on the vector space $\Lambda^n \mathbb{C}^{m_n+1}$ given by
$(v_1\wedge\cdots\wedge v_n, w_1\wedge\cdots\wedge w_n) = 
\det(v_i^{\dagger}C w_j)$. Then from \eqref{gram} it follows that
the polynomial form of the KPI $\tau$-function is simply the norm of 
the $n$-form $\omega_P$ representing the complex Grassmannian $V_P$. That is, 
\begin{equation*}
\tau_P(x,y,t) = \det H = (\omega_P,\omega_P) \,.
\end{equation*}

The hermitian matrix $C$ admits a unique
decomposition $C=U^{\dagger}DU$ where $U$ is a real, upper-triangular matrix
with 1's along its main diagonal and $D$ is a diagonal matrix with 
$D_{rr}=(2b)^{-2r}, \, r=0,1,\ldots,m_n$.
Hence, the matrix elements of $H$ from \eqref{gram} can be expressed as
\begin{equation}
H_{ij} = P_j^{\dagger}CP_i = Q_j^{\dagger}DQ_i \,, \quad
Q_i = UP_i\,, \quad 
U_{rs} = \left\{\begin{matrix}\frac{1}{(2b)^{s-r}}\binom{s}{r}\,,
& \quad r\leq s\cr 0\,, & \quad r>s \end{matrix} \right. \,,
\qquad r,s=0,1,\ldots,m_n \,.  
\label{Q}
\end{equation}
Let $Q=UP$ where $P$ is defined in \eqref{P}, then $H=Q^{\dagger}DQ$
which results in an explicit expression for $\tau_P$ as a sum of squares 
after using the Cauchy-Binet formula for determinants
\begin{subequations}
\begin{equation}
\tau_P=\det H = \hspace{-0.3in} \sum_{0 \leq l_1 < \cdots <l_n \leq m_n}
\hspace{-0.1in} \frac{|Q(l_1,\cdots,l_n)|^2}{(2b)^{2(l_1+\cdots+l_n)}}\,, 
\qquad Q(l_1,\cdots,l_n)=\hspace{-0.3in}\sum_{0 \leq r_1 < \cdots <r_n \leq m_n}
\hspace{-0.2in} U\binom{l_1 \cdots l_n}{r_1 \cdots r_n}P(r_1,\ldots,r_n) \,,
\label{square-a}
\end{equation}
where $U\tbinom{l_1 \cdots l_n}{r_1 \cdots r_n}$ is the $n \times n$ minor
of $U$ obtained from the submatrix whose rows and columns are indexed by
$(l_1,\dots l_n)$ and $(r_1,\dots r_n)$, respectively. Since $U$ is
upper triangular with $1$'s along its diagonal, it follows that
the principal minors $U\tbinom{l_1 \cdots l_n}{l_1 \cdots l_n}=1$ and
$U\tbinom{l_1 \cdots l_n}{r_1 \cdots r_n} \neq 0$ if and only if
$l_j \leq r_j$ for each $j=1,\ldots,n$.    

Since the matrix elements $H_{ij}$ in \eqref{H} are polynomials in 
the $p_n$, their complex conjugates and inverse powers of $2b$, it follows that
$\tau_P$ is a positive definite, weighted homogeneous polynomial of degree 
$2(m_1+m_2+\cdots+m_n)$ in $\theta_j, \bar{\theta}_j \, j=1,2,\ldots,m_n$ 
and $\Im(k)=b$ with weight$(\theta_j) =$ weight$(\bar{\theta}_j) =j$, and 
weight$(b)= -1$. Moreover, each of the generalized Schur polynomials $p_{m_j}$ may be
parametrized by an independent set of arbitrary complex parameters
$\{\gamma_r \in \mathbb{C}, \, r=1,2,\ldots,m_j\}$
that is distinct for each $j=1,2,\ldots,n$, 
by choosing $n$ distinct arbitrary functions for $\gamma(k)$
in the expression for $\theta$ in \eqref{sp}. In that case, $\tau_P$ would
depend on at most $2(m_1+m_2+\cdots+m_n)$ real parameters and $k := a+ib$.
While this is true when $n=1$, the actual number of independent real
parameters in $\tau_P$ is less than $2(m_1+m_2+\cdots+m_n)$ if $n>1$.
This will be clear from further analysis of the $\tau$-function in
\eqref{square-a}, which will be done next.
  
Before proceeding further, it is convenient to introduce a total
(lexicographic) ordering for the multi-index sets   
${\bf l}:=(l_1 \cdots l_n), \,\, 0\leq l_1<l_2 \cdots<l_n \leq m_n$,
defined as follows. 
\begin{definition}[Lexicographic Ordering]
Given two distinct multi-index sets 
${\bf l}, {\bf r}$, if $j$ is the {\it earliest} index where ${\bf l}$
and ${\bf r}$ differ, then ${\bf l} < {\bf r}$ if and only if 
$l_j<r_j$. 
\end{definition}
\noindent Notice that there is a unique first element with respect
to this total ordering namely $(0 1 \cdots n-1)$ which will henceforth
be denoted by ${\bf 0}$. 
\begin{example}
Suppose $n=2, m_n=3$. That is, each multi-index set is 
denoted by ${\bf l} = (l_1 l_2), \,\, 0 \leq l_1 < l_2 \leq 3$. Then 
${\bf 0} = (01) < (02) < (03) < (12) < (13) < (23)$ is the
lexicographic ordering of the six 2-index sets.
\end{example}
\noindent Using the above notations \eqref{square-a} can be re-expressed 
in an abridged form as
\begin{equation}
 \tau_P = \det H = \sum_{{\bf r} \geq {\bf 0}} 
\frac{|Q({\bf r})|^2}{(2b)^{2|{\bf r}|}}\,, \qquad \quad
Q({\bf r}) = \sum_{{\bf s} \geq {\bf r}}U\tbinom{{\bf r}}{{\bf s}}P({\bf s}) \,, 
\label{square-b}
\end{equation}
\label{square}
\end{subequations}
with $|{\bf r}| := r_1+\cdots + r_n$. The leading polynomial term
in $\tau_P$ is given by $\sf{|Q({\bf 0})|^2}{(2b)^{n(n-1)}}$, where  
\[ Q(0 1 \cdots n-1) = Q({\bf 0}) = P({\bf 0}) + 
\sum_{{\bf s} > {\bf 0}}U\tbinom{{\bf 0}}{{\bf s}}P({\bf s}) \] 
since $U\tbinom{{\bf 0}}{{\bf 0}} = U\tbinom{0 1\cdots n-1}{0 1 \cdots n-1}=1$.
The maximal minors $P({\bf r})=P(r_1 \cdots r_n)$ for any 
multi-index set ${\bf r}$ and in particular $P({\bf 0})$, are given by 
$P({\bf r})=i^{|{\bf r}|}W({\bf r})$ and 
$P({\bf 0})= i^{\sf{n(n-1)}{2}}W({\bf 0})$,
where 
\begin{equation}
W({\bf r}) = \left|\begin{matrix}
p_{m_1-r_1} & p_{m_2-r_1} & \cdots & p_{m_n-r_1}\\
p_{m_1-r_2}& p_{m_2-r_2} & \cdots & p_{m_n-r_2}\\
\vdots& \vdots & \vdots & \vdots  \\
p_{m_1-r_n}& p_{m_2-r_n} &\cdots & p_{m_n-r_n} 
\end{matrix}\right| \,, 
\quad  
W({\bf 0}) = \left|\begin{matrix}
p_{m_1} & p_{m_2} & \cdots & p_{m_n}\\
p_{m_1-1}& p_{m_2-1} & \cdots & p_{m_n-1}\\
\vdots& \vdots & \vdots & \vdots  \\
p_{m_1-n+1}& p_{m_2-n+1} &\cdots & p_{m_n-n+1} 
\end{matrix}\right|   
\label{schur} 
\end{equation} 
are the $n \times n$ minors of the $m_n+1 \times n$ matrix $W$ defined 
in \eqref{P}.
Notice that $P({\bf r})$ is a weighted homogeneous polynomial in 
$\theta_1, \theta_2, \ldots, \theta_{m_n}$ of degree 
$(m_1+\cdots m_n)-|{\bf r}|$. Particularly when ${\bf r} = {\bf 0}$,
the degree of the leading principal minor $P({\bf 0})$ given by  
$N := (m_1+\cdots m_n)-\sf{n(n-1)}{2}$ is the highest, whereas
when ${\bf r} = {\bf m}=(m_1, \cdots,m_n)$, $P({\bf m})$ is the
last non-vanishing minor in the lexicographic ordering with   
$|P({\bf m})|=1$. Finally, note that one can factor out
$(2b)^{-n(n-1)}$ from $\tau_P$ in \eqref{square} as well as the 
gauge freedom:\, $P({\bf r}) \to (\det A) P({\bf r}) \Rightarrow
\tau_P \to |\det A|^2 \tau_P, \,\, A \in GL(n, \mathbb{C})$ since 
neither of these contribute to the solution \eqref{u}. 
The results obtained thus far are summarized below.
\begin{proposition}
The KPI $\tau$-function $(2b)^{n(n-1)}\tau_P(x,y,t)$ is a positive definite,
weighted homogeneous polynomial in $x,y,t$ and parameter $b=\Im(k)$
of degree $2N = 2(m_1+\cdots m_n)-n(n-1)$ where the weights of $x,y,t$
are $1,2,3$ respectively and weight$(b)=-1$.
The corresponding KPI solution $u(x,y,t)$ is a non-singular rational
function in the $xy$-plane, decaying as $(x^2+y^2)^{-1}$ for any fixed
value of $t$. Moreover, the KPI $\tau$-function can be expressed as a 
graded sum of squares of real polynomials in $x,y,t$ and $b$ whose (weighted)
degrees are in descending order, namely $2N, 2(N-1), \ldots, 0$; the gradation
is marked by dividing each square by an appropriate factor 
$(2b)^{2r}, \, r=0,1,\ldots,N$ such that its overall degree is $2N$.
\label{prop1}
\end{proposition}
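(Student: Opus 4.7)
My proof plan is to assemble observations already established throughout Section~2 rather than introduce new machinery. The proposition makes four assertions -- positivity, weighted homogeneity of degree $2N$, the graded sum-of-squares structure, and the $(x^2+y^2)^{-1}$ decay -- which I would establish in that order.

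Positivity is immediate from the Gram-matrix computation already carried out after \eqref{M}: $v^\dagger M v > 0$ for every nonzero $v \in \mathbb{C}^n$ because $b>0$, so $\det M > 0$; and since by \eqref{H} the matrices $M$ and $H$ differ only by a positive scalar, $\tau_P = \det H > 0$ on all of $\mathbb{R}^3$. The weighted-homogeneity claim is a direct reading of \eqref{H}: with weights $1,2,3,-1$ assigned to $x,y,t,b$, definition \eqref{theta} makes each $\theta_j$ weighted-homogeneous of weight $j$, so $p_{m_i}\bar{p}_{m_j}$ has weight $m_i+m_j$; and because $\partial_x$ lowers the weight by one while $(2b)^{-1}$ raises it by one, every summand in $H_{ij}$ carries the common weight $m_i+m_j$. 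Leibniz then gives $\det H$ weight $2(m_1+\cdots+m_n)$, and multiplying by $(2b)^{n(n-1)}$ reduces this to $2N$, which is the stated degree.

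For the graded sum-of-squares decomposition I would start from \eqref{square-a}. Since $U$ is upper-triangular with unit diagonal, each minor $U\binom{{\bf r}}{{\bf s}}$ is a single monomial of the form $c_{{\bf r},{\bf s}}(2b)^{-(|{\bf s}|-|{\bf r}|)}$; weight-counting then forces each $Q({\bf r})$ to be weighted-homogeneous of weight $|{\bf m}|-|{\bf r}|$. Regrouping the sum in \eqref{square-b} by the common value $r := |{\bf r}| - n(n-1)/2 \in \{0,1,\dots,N\}$ and factoring out the overall $(2b)^{-n(n-1)}$ gives
\[
(2b)^{n(n-1)}\tau_P \;=\; \sum_{r=0}^{N}\frac{S_r}{(2b)^{2r}}, \qquad S_r := \sum_{|{\bf r}|=r+n(n-1)/2}|Q({\bf r})|^2,
\]
in which each $S_r$ is a sum of squares of polynomials of weighted degree $N-r$, so the squares carry the descending weighted degrees $2N,2(N-1),\dots,0$ asserted in the proposition. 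The decay $u \sim (x^2+y^2)^{-1}$ then follows because the $r=0$ summand $|W({\bf 0})|^2/(2b)^{n(n-1)}$ dominates for fixed $t$ as $x^2+y^2\to\infty$; since $W({\bf 0})$ is a Wronskian-type polynomial of weighted degree $N$, $\ln\tau_P$ grows logarithmically and two $x$-derivatives give the quadratic decay.

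The main obstacle in this plan is the bookkeeping in the third step. Each $Q({\bf r})$ already carries negative powers of $b$ through the $U\binom{{\bf r}}{{\bf s}}$ factors, so one must verify that, after regrouping by $|{\bf r}|$ and pulling out $(2b)^{-n(n-1)}$, the resulting $S_r$ really does assemble into a clean sum of squares of polynomials of the asserted weighted degree. The key fact that makes the accounting close is precisely the single-monomial nature of $U\binom{{\bf r}}{{\bf s}}$, which forces each $Q({\bf r})$ to carry a uniform weighted degree, so that the $N+1$ graded levels land on the correct weights $2(N-r)$ and produce the descending sequence $2N, 2(N-1), \dots, 0$.
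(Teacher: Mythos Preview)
Your proposal is correct and follows essentially the same route as the paper: Proposition~2.1 in the paper is explicitly presented as a summary of the derivations in Section~2.2 (positivity from the Gram form of $M$, weighted homogeneity from \eqref{H}, the Cauchy--Binet sum-of-squares formula \eqref{square-a}--\eqref{square-b}, and the rational decay from the polynomial nature of $\det H$), and you assemble exactly these ingredients. Your observation that each minor $U\binom{\mathbf r}{\mathbf s}$ is a constant times $(2b)^{-(|\mathbf s|-|\mathbf r|)}$, forcing $Q(\mathbf r)$ to have uniform weighted degree $|\mathbf m|-|\mathbf r|$, is precisely the bookkeeping that underlies the graded structure the paper asserts.
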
 
\noindent In order to estimate the number of independent parameters
in the multi-lump $\tau$-function it suffices to consider the leading 
term $P({\bf 0})$ contributing to $\tau_P$ in \eqref{square}. As mentioned 
earlier, each of the polynomials
$p_{m_j}$ in $P({\bf 0})$ is parametrized by a distinct set of $m_j$ 
complex parameters for $j=1,\ldots,n$. However when $n=2$, it is
possible to eliminate 1 complex parameter from the determinant $W({\bf 0})$
by elementary column operations and subsequently redefining the
arbitrary parameters so that $P({\bf 0})$ depends on only $m_1+m_2-1$ 
complex parameters. Similarly, one can eliminate 2 parameters when $n=3$,
and by induction one finds that in general, $P({\bf 0})$ depends on 
$N$ free complex parameters. Instead of a technical proof, the following 
example illustrates the basic idea.
\begin{example}
Let $n=2$ and $W({\bf 0})=\left|
\begin{smallmatrix} p_2 & p_4 \\ p_1 & p_3 \end{smallmatrix}\right|,$
where $\theta_1, \theta_2$ in $p_2$ are linear in the arbitrary parameters 
$\gamma_1, \gamma_2$, and $\theta_j$ in $p_4$ are 
linear in another set of parameters 
$\tilde{\gamma}_j, \, j=1,\ldots,4$. Thus there are 6 free 
complex parameters, one of which can be eliminated by the following
process. Denote by   
$\delta_j = \tilde{\gamma}_j - \gamma_j, \, j=1,2$ and use \eqref{propc}
to expand $p_4(\theta_1+\delta_1, \theta_2+\delta_2,\theta_3,\theta_4)
= p_4(\theta_1,\ldots,\theta_4) + a_1p_3+a_2p_2+a_3p_1+a_4p_0$
and similarly $p_3(\theta_1+\delta_1, \theta_2+\delta_2,\theta_3) = 
p_3(\theta_1,\ldots,\theta_3) + a_1p_2+a_2p_1+a_3p_0$
where $a_j = p_j(\delta_1,\delta_2,0,0), \, j=1,\ldots,4$.
Substituting these in $W({\bf 0})$, the coefficient 
$a_2=\sf{\delta_1^2}{2}+\delta_2$ in the second column is then eliminated by 
elementary column operation.
In order to eliminate $\delta_2$ completely from $W({\bf 0})$, next
redefine the arbitrary parameter $\tilde{\gamma}_3 \to \tilde{\gamma}_3+a_3$
and use \eqref{propc} once more, to absorb the coefficient $a_3$.
The coefficient $a_4$ can similarly be absorbed by redefining $\tilde{\gamma}_4$.
Thus $W({\bf 0})$ depends on only the parameter $\delta_1=a_1$ together
with $\gamma_1, \gamma_2$ and the (redefined) 
parameters $\tilde{\gamma}_3, \tilde{\gamma}_4$. 
\end{example} 
\noindent This process of eliminating variables described above can be generalized 
using induction in a straight-forward although tedious fashion. This leads
to the fact that the $\tau$-function $\tau_P$ in \eqref{square} and
the corresponding KPI solution given by \eqref{u} is parametrized by
$2N$ independent real parameters. In the next subsection we will
give examples which indicate that the rational solutions KPI constructed
from $\tau_P$ admit $N$ distinct peaks in the $xy$-plane which evolve
in time. The $2N$ real parameters can be chosen arbitrarily to
specify the locations of the peaks in the $xy$-plane at a given
time $t=t_0$.  
\begin{remark}
\begin{itemize}
\item[(a)] Several authors~\cite{DM11,C18,YY21} have made different
choices for the $\phi_n$ in \eqref{sp} to construct the $n \times n$ 
matrix $M$ \eqref{M}. These are essentially of the form
$$\phi_n = \sum_{j=0}^n a_{nj}(k)p_j(\theta_1,\cdots,\theta_j)\exp (i\theta(k))$$  
where the sum can be reduced to a single generalized Schur polynomial
$p_n(\theta_1+h_1,\cdots,\theta_n+h_n)$ using \eqref{propc} for
suitable choices for the $h_j$'s which can then be absorbed in the arbitrary
constants $\gamma_j$ appearing in the $\theta_j$ variables for $j=1,\ldots,n$.
Thus, the classes of KPI rational solutions obtained by
such choices are the same as those obtained from \eqref{u}--\eqref{tau}
which are simpler.

\item[(b)] The determinant $W({\bf 0})$ introduced in \eqref{schur}
is the wronskian form of the $\tau$-function for the KPI equation. 
However, the corresponding rational
solution $u(x,y,t)$ given by \eqref{tau} is singular in the $xy$-plane for 
any given $t$. The singularities occur at the zeros of the $W({\bf 0})$.

\item[(c)] Since the KPI equation admits a constant solution 
$u(x,y,t)=c, \,\, c \in \mathbb{R}$, one may also consider
the multi-lump solutions in a constant background. 
In this case, equation \eqref{u} should read as $u = c + 2(\ln \tau)_{xx}$ 
where $\tau$ is given by \eqref{tau} except that the $\phi_n$
in \eqref{M} are given by  
$\phi_n = \sf{1}{n!}\,\partial_k^n\exp(i\theta') = p'_n\exp(i\theta')$ where
$\theta' = kx+(k^2-c)y+(k^3-\sf34 kc)t+\gamma(k)$.
The new generalized Schur polynomials $p'_n$ are defined in the same way
as before, the only change being $x \to x-\sf34 ct$ in the definition
of $\theta_1$ in \eqref{theta}.

\item[(d)] The construction of $\tau_P$ in Section 2.2 requires that the
positive integers $m_j$ labeling the polynomials $p_{m_j}$ be distinct, else 
some columns of the matrices $P$ and $H$ will be identical causing
$\tau_P=\det H=0$. However, it is possible to choose $m_1=0$ instead
of $m_1 \geq 1$. Then the columns $P_1=Q_1=e_0 \in \mathbb{C}^{m_n+1}$
in \eqref{Q} since $p_{m_1}=p_0=1$. As a result the $n \times n$ minors 
of $P$ will have the
form $P(l_1,\cdots,l_n) = 0, \,\, l_1 \neq 0$ and if $l_1=0$, then
$P(0,l_2,\cdots,l_n) = \tilde{P}(l_2,\cdots,l_n), \,\, 
1 \leq l_2 < \cdots < l_n \leq m_n$ where $\tilde{P}$ is
the $m_n \times (n-1)$ matrix obtained by stripping off the first row
and columns of $P$ in \eqref{P}. Thus after re-indexing the columns
of $\tilde{P}$ as $\tilde{m}_j = m_{j+1}-1, \,\, j=1,\ldots,n-1$
(and pulling out an insignificant factor of $i$), one obtains
a new $m_n \times (n-1)$ $P$-matrix instead of the one in \eqref{P}.
This is equivalent to constructing a {\it reduced} KPI solution from the
generalized Schur polynomials $p_{m_{j+1}-1}, \, j=1,\ldots,n-1$
which forms a new $(n-1) \times (n-1)$ matrix $M$ in \eqref{M}.  
If now $\tilde{m}_2=0$, i.e., $m_2=1$, then the reduction 
process described above is repeated once more, and in fact iterated
until $m_j > j-1$ to obtain a non-trivial $\tau$-function. Clearly, the initial 
choice of $\{m_1,\ldots,m_n\} = \{0,1,\ldots,n-1\}$ leads trivially to
$(2b)^{n(n-1)}\tau_P=1$. 

\end{itemize}
\end{remark}
\subsection{Examples of multi-lumps}
We further illustrate the construction outlined in Section 2.2 
with some examples of the KPI multi-lump solutions that are obtained
via this method. It is convenient to first introduce a set of coordinates 
$r,s$ defined in terms of co-moving coordinates $x',y'$ as follows 
\begin{equation}
r=x'+2ay', \quad s = 2by', \quad \text{where} \quad
x'=x-3(a^2+b^2)t, \quad y'=y+3at \,, 
\label{rs}
\end{equation}
and $a=\Re(k), b=\Im(k)$. It will be clear from the discussion below that 
instead of the $x,y$, the natural coordinates to use are the $r,s$
coordinates, in terms of which $\theta_j$ in \eqref{theta} are given by
\begin{equation}
\theta_1 = i(r+is+\gamma_1),  \quad \theta_2 = \frac{is}{2b} -3bt +i\gamma_2,
\quad \theta_3=i(t+\gamma_3)
\label{thetars}
\end{equation}
\subsubsection{The 1-lump solution} The simplest rational solution of the 
class described in this paper is obtained by choosing $n=m_1=1$.
In this case both $M$ and $H$ are $1 \times 1$ matrices in \eqref{M}
and \eqref{H}, respectively. Moreover, in \eqref{P} 
$P = (p_1, ip_0)^T$ is a single column vector. Then \eqref{Q} together with 
the definition of $D$ above it, yields
\[ U = \begin{pmatrix} 1 & \sf{1}{2b} \\ 0 & 1 \end{pmatrix}\,, \qquad
Q = UP\,, \quad 
D=\begin{pmatrix} 1 & 0 \\ 0 & \sf{1}{4b^2} \end{pmatrix} \,, \qquad  
H= Q^{\dagger}DQ = |p_1+\sf{i}{2b}p_0|^2+\sf{1}{4b^2}p_0 \,,   
\]
where the last expression is also the sum of squares form for 
$\tau_P = \det H$ as in \eqref{square} since $H$ is a $1 \times 1$ matrix.
After using $p_0=1, p_1=\theta_1$ and \eqref{thetars}, one obtains 
the $\tau$-function
\[ \tau_1= |\theta_1+\sf{i}{2b}|^2 +\sf{1}{4b^2} = 
(r+\sf{1}{2b}-r_0)^2+(s-s_0)^2 + \sf{1}{4b^2}, 
\qquad r_0=-\Re(\gamma_1), \,\, s_0=-\Im(\gamma_1) \,,\]
and then from \eqref{u}, the KPI solution is given by
\begin{equation}
u_1(x,y,t) = 2(\ln \tau_1)_{xx} =  
4\,\frac{-(r+\sf{1}{2b}-r_0)^2+(s-s_0)^2+\sf{1}{4b^2}}
{\big[(r+\sf{1}{2b}-r_0)^2+(s-s_0)^2+\sf{1}{4b^2}\big]^2}\,.    
\label{u1}
\end{equation}
\begin{figure}[t!]
\begin{center}
\raisebox{0.1in}{\includegraphics[scale=0.65]{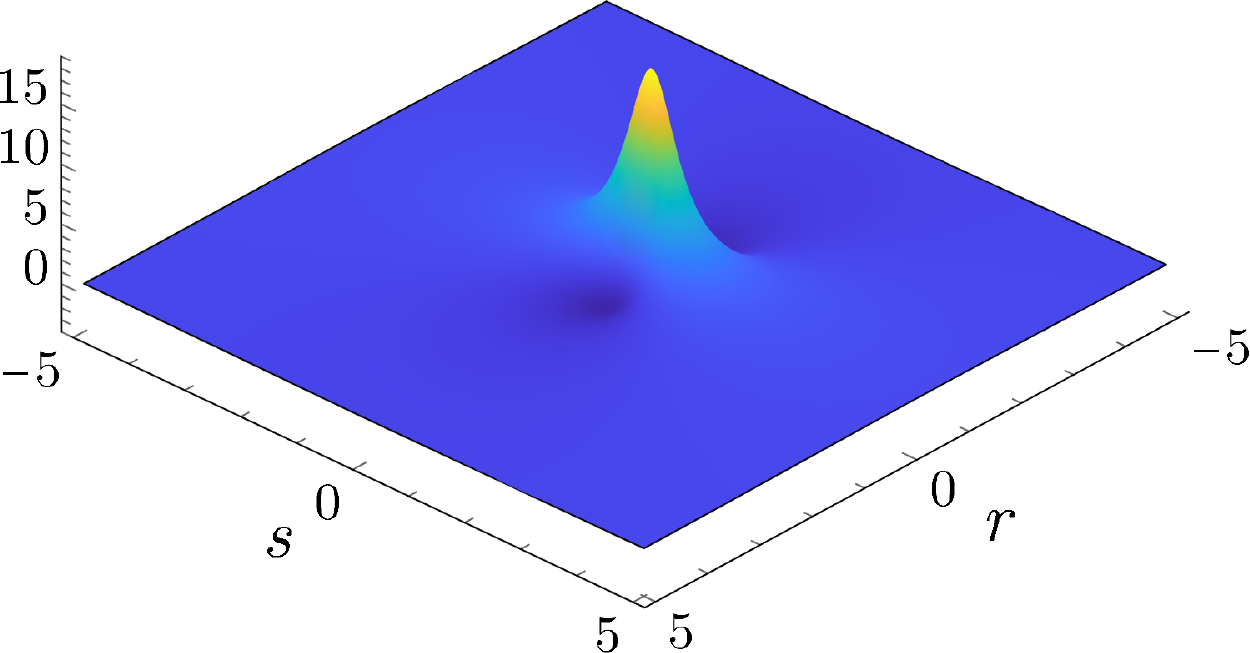}} \qquad
\includegraphics[scale=0.6]{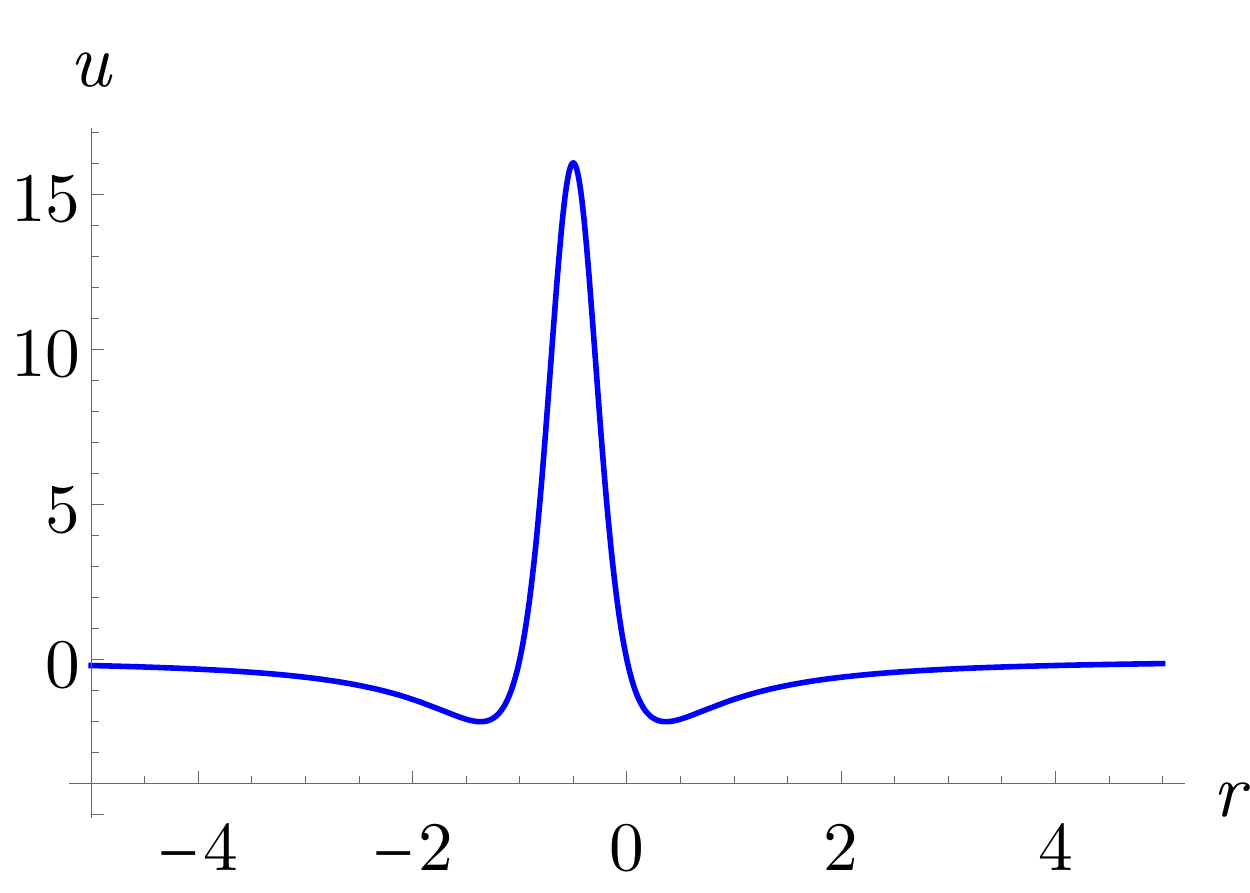}
\end{center}
\vspace{-0.2in}
\caption{$1$-lump solution of the KPI equation:\, $a=\gamma_1=0, b=1$.
Right: Vertical cross-section showing the maximum and two minima.}
\label{1lump}
\end{figure}
Notice that the solution is {\it stationary} in the $rs$-plane since
the only time dependence enters via the co-moving coordinates $x',y'$.
Hence, \eqref{u1} represents a rational traveling waveform with a single peak 
(maximum) at $(r_0-\sf{1}{2b}, s_0)$, of height $16b^2$, and two local minima 
symmetrically located from the peak at 
$(r_0-\sf{1}{2b}\pm \sf{\sqrt{3}}{2b}, \, s_0)$ 
and depth $-2b^2$ determined by $b$ and
the complex parameter $\gamma_1$ as illustrated in Figure~\ref{1lump}.
The wave moves in the $xy$-plane with
a uniform velocity $(3(a^2+b^2), -3a)$ at an angle $\tan^{-1}(-a/(a^2+b^2))$
with the positive $x$-axis. This solution is referred to as the $1$-lump 
solution since it has a 
single peak which also coincides with the fact that $N=1$
in this case. It will be shown in Section 4 that $N$ indeed
represents the number of peaks when $|t| \gg 1$ for this class of rational
solutions.

Since $u_1$ is the $x$-derivative of
the rational function $F_{1x}/F_1$ which decays as $|x| \to \infty$ 
for all $y,t$, one has
that $\int_{-\infty}^\infty u_1\,dx=0$. However, $u_1$ is not a 
$L^1(\mathbb{R}^2)$ function although $u_1 \in L^2(\mathbb{R}^2)$ and 
$\int\!\!\int_{\mathbb{R}^2}u_1^2 = 16\pi b$; the latter is a conserved 
quantity for the KPI equation \eqref{kp}.  
\subsubsection{A 2-lump solution}    
Next we consider the simplest case for $n=2$ where $m_1=1, m_2=2$.
Here $M$ is a $2 \times 2$ matrix and there are 2 column
vectors in $P$ which is a $3 \times 2$ matrix. The matrices $P, U$
and $D$ are given by
\[ P = \begin{pmatrix} p_1 & p_2 \\ ip_0 & ip_1 
\\ 0 & -p_0 \end{pmatrix}\,, \quad \qquad
U= \begin{pmatrix} 1 & \sf{1}{2b} & \sf{1}{4b^2} \\ 
0 & 1 & \sf{1}{b}  \\ 0 & 0 & 1 \end{pmatrix}\,, \qquad \quad
D= \mathrm{diag} (1, \,\, \sf{1}{2b}, \,\, \sf{1}{4b^2})  
\]
from which the $2 \times 2$ minors are computed as follows: 
\begin{gather*}
P({\bf 0})= P(01) = i(p_1^2-p_0p_2), \,\, P(02)=-p_1p_0, \,\,  
P(12)=-ip_0^2, \\
U\tbinom{01}{01}=U\tbinom{02}{02}=U\tbinom{12}{12} = 1,
\quad U\tbinom{02}{01} = U\tbinom{12}{01}=U\tbinom{12}{02}=0, \quad
U\tbinom{01}{02} = \sf{1}{b}=2U\tbinom{02}{12}, \quad 
U\tbinom{01}{12}=\sf{1}{4b^2} \,.  
\end{gather*}
\begin{figure}[t!]
\begin{center}
\includegraphics[scale=0.42]{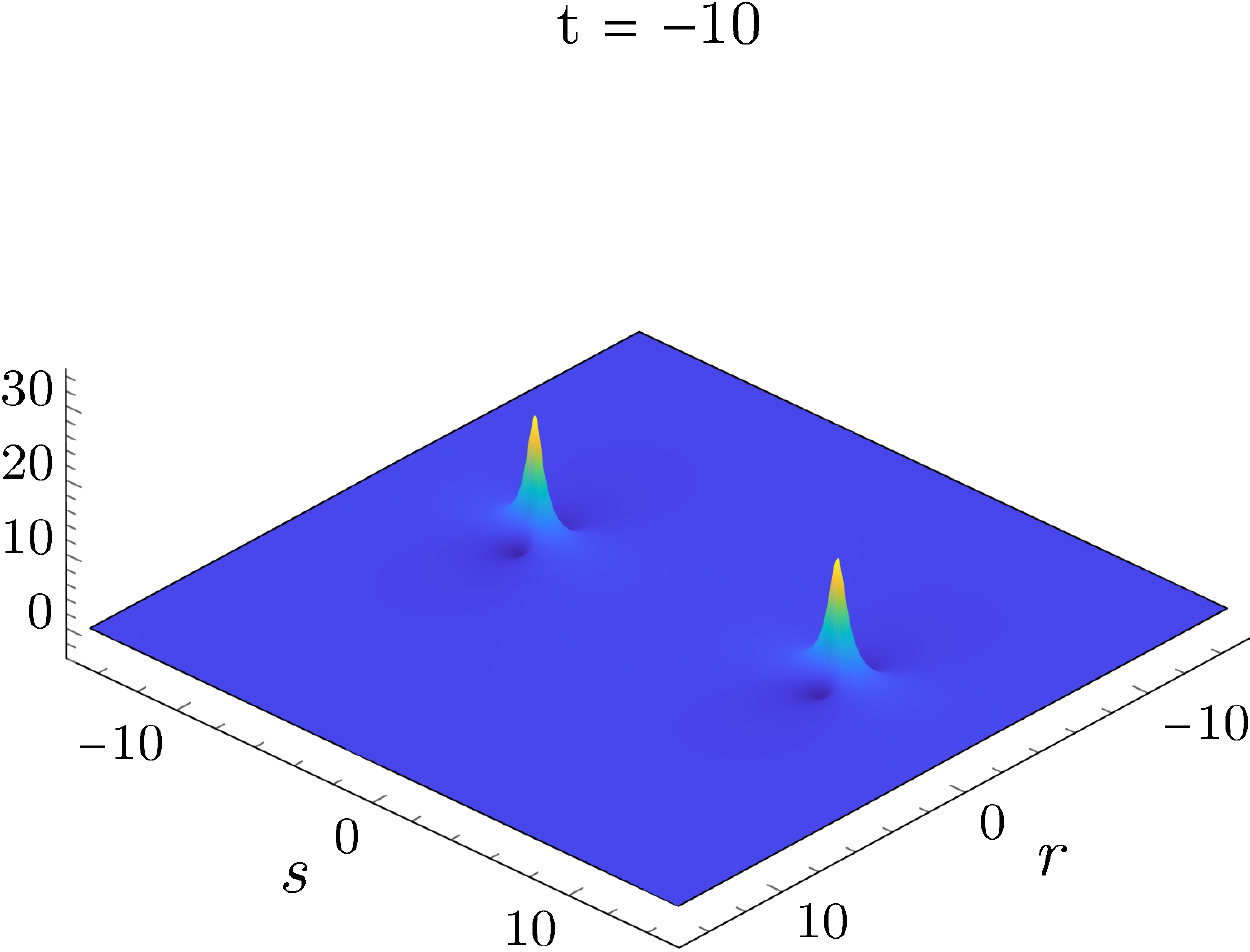} \quad
\includegraphics[scale=0.42]{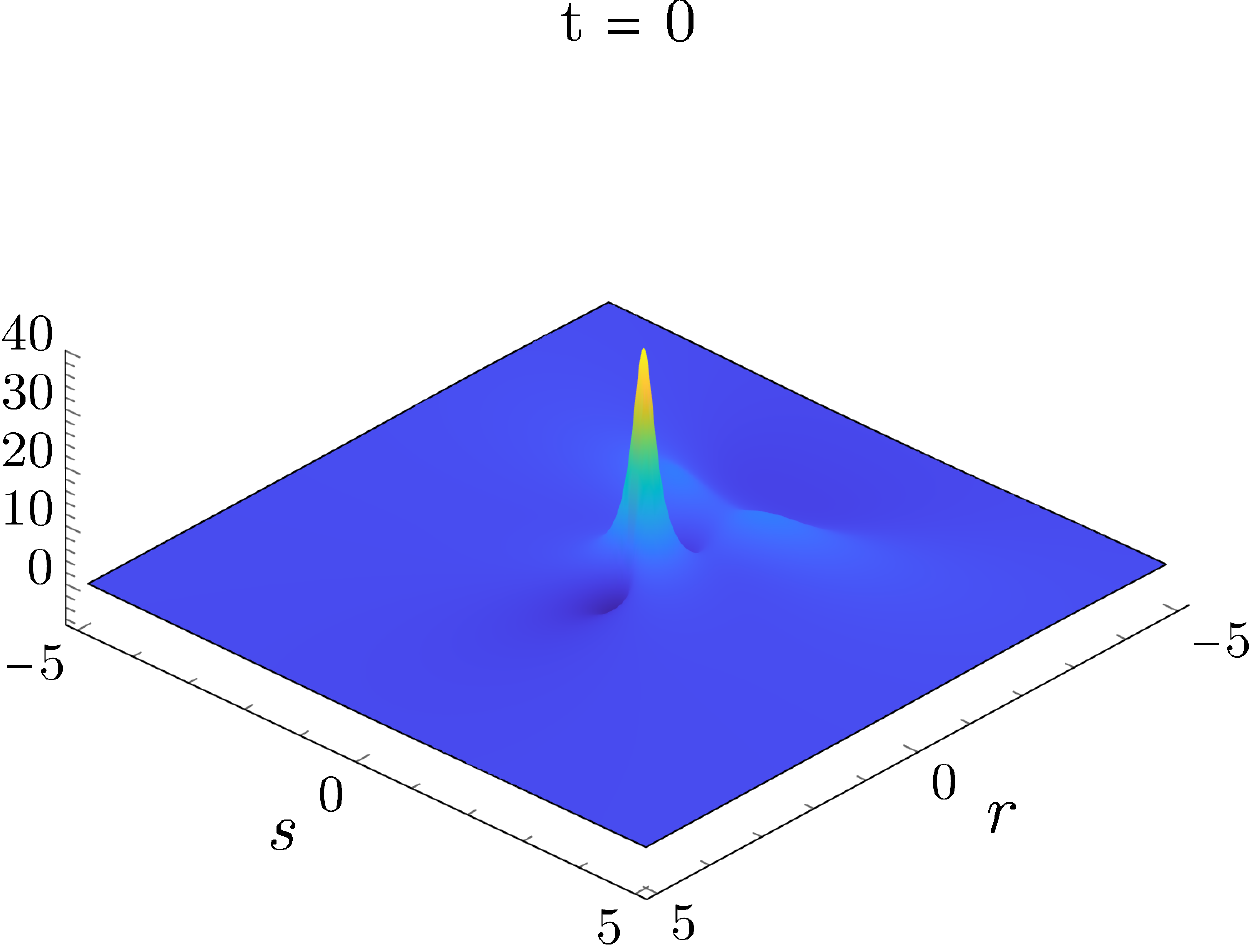} \quad
\includegraphics[scale=0.42]{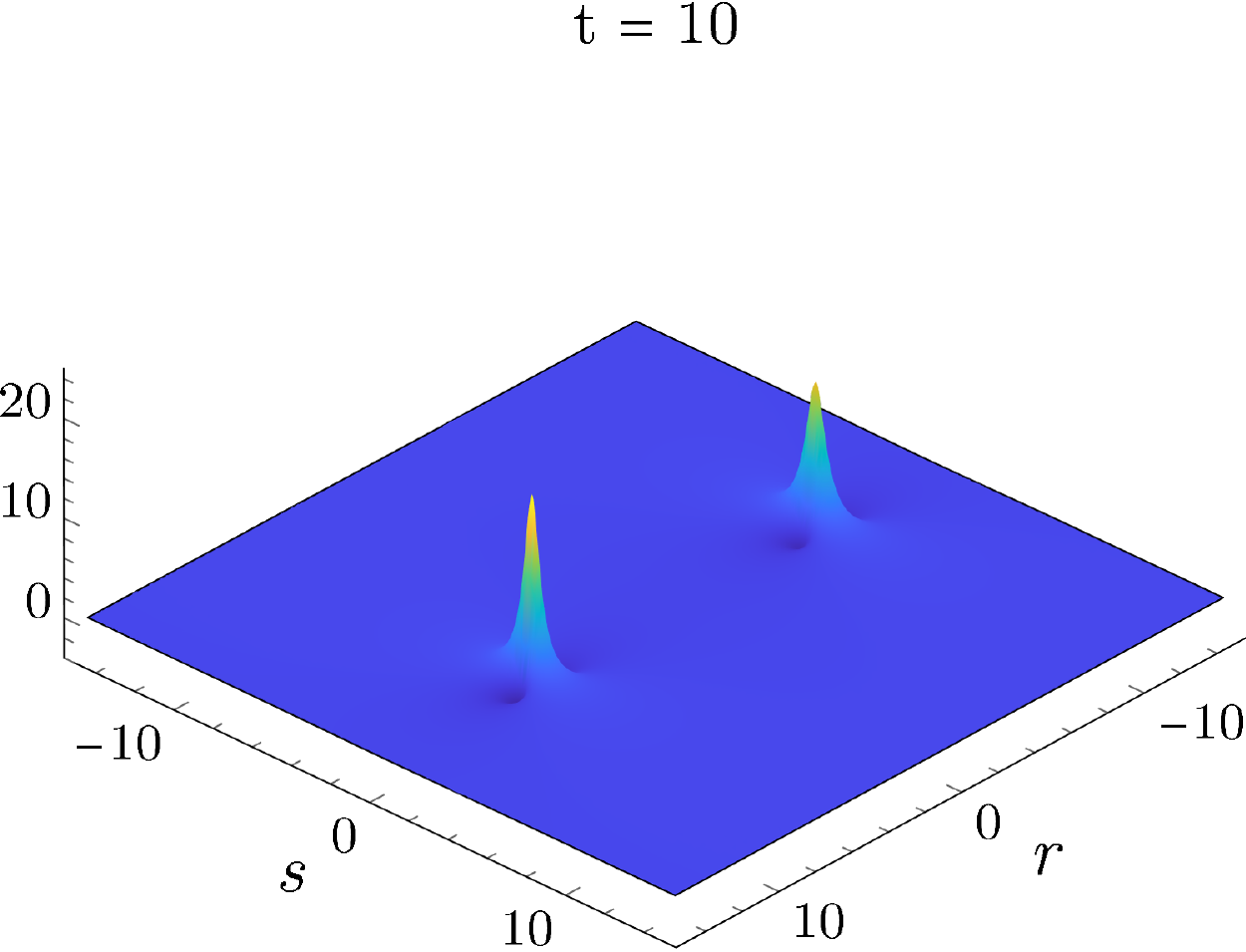}
\end{center}
\vspace{-0.2in}
\caption{A $2$-lump solution interacting in the $rs$-plane 
corresponding to $\tau_2$ in \eqref{tau2}.
Parameters:\, $a=\gamma_1=\gamma_2=0, b=1$.}
\label{2lump}
\end{figure}
From \eqref{pn} it follows that $P(01), P(02)$ and $P(12)$
are weighted homogeneous polynomials in $\theta_1,\theta_2$ of degree 
$(m_1+m_2)-|{\bf r}|$, with $|{\bf r}| = r_1+r_2 =1,2,3$, respectively,
as shown in Section 2.2.  Next, calculating the minors 
using \eqref{square} leads to the sum of squares expression
for the $\tau$-function  
\[\tau_P = \frac{|Q(01)|^2}{(2b)^2} + 
\frac{|Q(02)|^2}{(2b)^4}+\frac{|Q(12)|^2}{(2b)^6}=
\sf{1}{4b^2}\left[|(\sf{\theta_1^2}{2}-\theta_2)+\sf{1}{b}\theta_1
-\sf{1}{4b^2}|^2 + \sf{1}{4b^2}|\theta_1+\sf{i}{2b}|^2 
+ \sf{1}{16b^4}\right] \,,\]
where \eqref{pn} is used to express $p_0=1, p_1, p_2$. Note that $\tau_P$
is a weighted homogeneous polynomial of degree $2(m_1+m_2) =6$ 
in $\theta_1,\theta_2, b$ as mentioned below \eqref{square-a}.  
Finally using \eqref{thetars} and factoring out $\sf{1}{4b^2}$ from 
$\tau_P$ above one obtains the reduced $\tau$-function
\begin{equation}
\tau_2 = |\sf12((r+\sf{1}{b})^2-s^2)-3bt-\sf{1}{4b^2}+ i(r+\sf{3}{2b})s|^2 + 
\sf{1}{4b^2}|(r+\sf{1}{2b})+is|^2+\sf{1}{16b^4} \,,
\label{tau2}
\end{equation}
where we have set the constants $\gamma_1=\gamma_2=0$ for simplicity.
In terms of the variables $r,s,t$ (equivalently, $x,y,t$)
and $b=\Im(k)$, $\tau_2$ is a weighted homogeneous polynomial
of degree $2N$ with $N=(m_1+m_2)-\sf{n(n-1)}{2}=2$ as in Proposition~\ref{prop1}.
The leading order (in $r,s,t$) contribution arises from the
from the first square term  $|Q({\bf 0})|^2$ through 
$P({\bf 0})=i\big(\sf{1}{2}\theta_1^2-\theta_2\big)$ which is of degree
$N=2$ in $\theta_1, \theta_2$.

Notice that unlike $\tau_1$, the polynomial $\tau_2(r,s,t)$ 
in \eqref{tau2} does depend 
explicitly on $t$ so that the solution $u_2$ obtained from \eqref{u} 
is non-stationary in the co-moving $rs$-plane. The explicit expression
for $u_2(x,y,t)$ is complicated, so it is not included here.
Figure~\ref{2lump} illustrates that
$u_2$ consists of 2 localized lumps (local maxima) along the 
$s$-axis that are well 
separated as $t \ll -1$; these lumps get attracted to each other
and overlap for finite $t$ to form a transient large amplitude peak 
that splits again into two localized lumps which then recede from
each other when $t \gg 1$ but along the $r$-axis. Furthermore, the 
height of each peak
also evolve with time and approaches the constant height of the 1-lump
solution as $|t| \to \infty$. The interaction process is an example of 
anomalous scattering rather than the usual solitonic interaction
of the simple $n$-lump solutions of KPI found in~\cite{MZBIM77}.
This particular $2$-lump solution and another solution which
corresponds to $n=1, m_1=2$, were found earlier in~\cite{GPS93,VA99,ACTV00}.
The latter solution was also studied more recently in \cite{CZ21}, where 
the structure of the solution was analyzed in details. 
The analysis for the solutions corresponding
to $\tau_2$ in \eqref{tau2} is similar, so we do not include it here.
The locations of the lump peaks for these two solutions
are related by time reversal symmetry $t \to -t$ when $|t| \gg 1$. 
Note that $N=2$ in both cases. The classification scheme developed in 
Section 3 will establish that there are {\it only} 2 possible $2$-lump 
solutions obtained from the method outlined in Section 2.2. 
\subsubsection{A $3$-lump solution}
Our next illustrative example is $n=2, m_1=1, m_2=3$ which
corresponds to $N=3$. In this case, the matrices $P, U, D$
are as follows:
\[P = \begin{pmatrix} p_1 & p_3 \\ ip_0 & ip_2 
\\ 0 & -p_1 \\ 0 & -ip_0 \end{pmatrix}\,, \quad \qquad
U= \begin{pmatrix} 1 & \sf{1}{2b} & \sf{1}{(2b)^2} & \sf{1}{(2b)^3}\\ 
0 & 1 & \sf{2}{2b} & \sf{3}{(2b)^2} \\ 0 & 0 & 1 & \sf{3}{2b}\\
0 & 0&0&1\end{pmatrix}\,, \qquad \quad
D= \mathrm{diag} \big(1,\,\,\sf{1}{(2b)^2},\,\,\sf{1}{(2b)^4}\,\,\sf{1}{(2b)^6}\big)\,,  
\]
and the index set for the maximal minors 
${\bf r}=(r_1 r_2) \in \{(01), (02), (03), (12), (13), (23)\}$.
Then using \eqref{square-b} and \eqref{pn} the $\tau$-function 
can be expressed as
\begin{equation}
\tau_3 = (2b)^2 \tau_P = 
|(\sf{\theta_1^3}{3}-\theta_3)+\sf{i}{b}(\theta_1+\sf{i}{2b})^2|^2 +
\sf{1}{(2b)^2}|(\theta_1+\sf{i}{b})^2+\sf{1}{4b^2}|^2 +
\sf{1}{(2b)^4}|2(\theta_1+\sf{i}{b})|^2 + \sf{1}{(2b)^6} \,.  
\label{tau3} 
\end{equation}
\begin{figure}[t!]
\begin{center}
\includegraphics[scale=0.44]{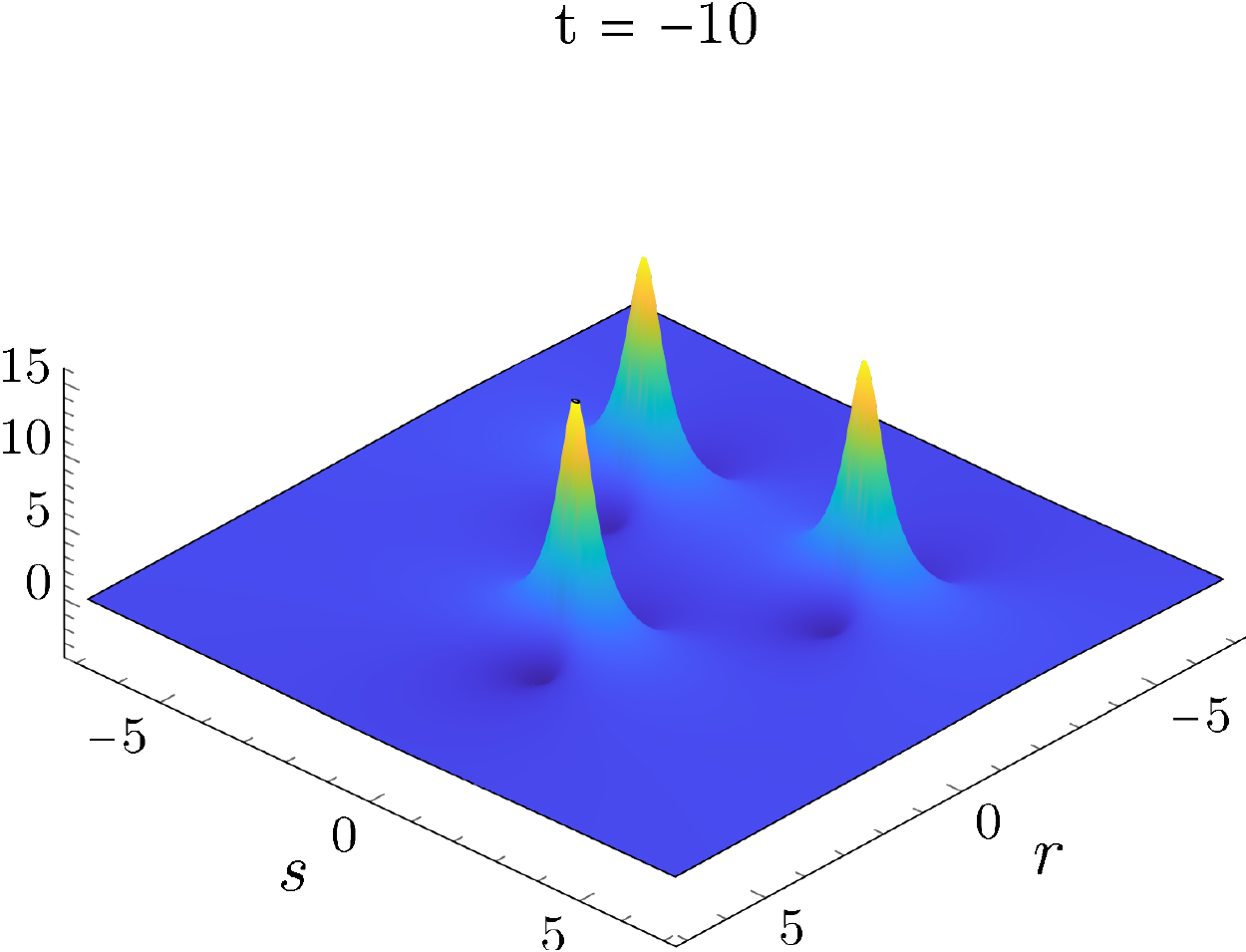}
\includegraphics[scale=0.44]{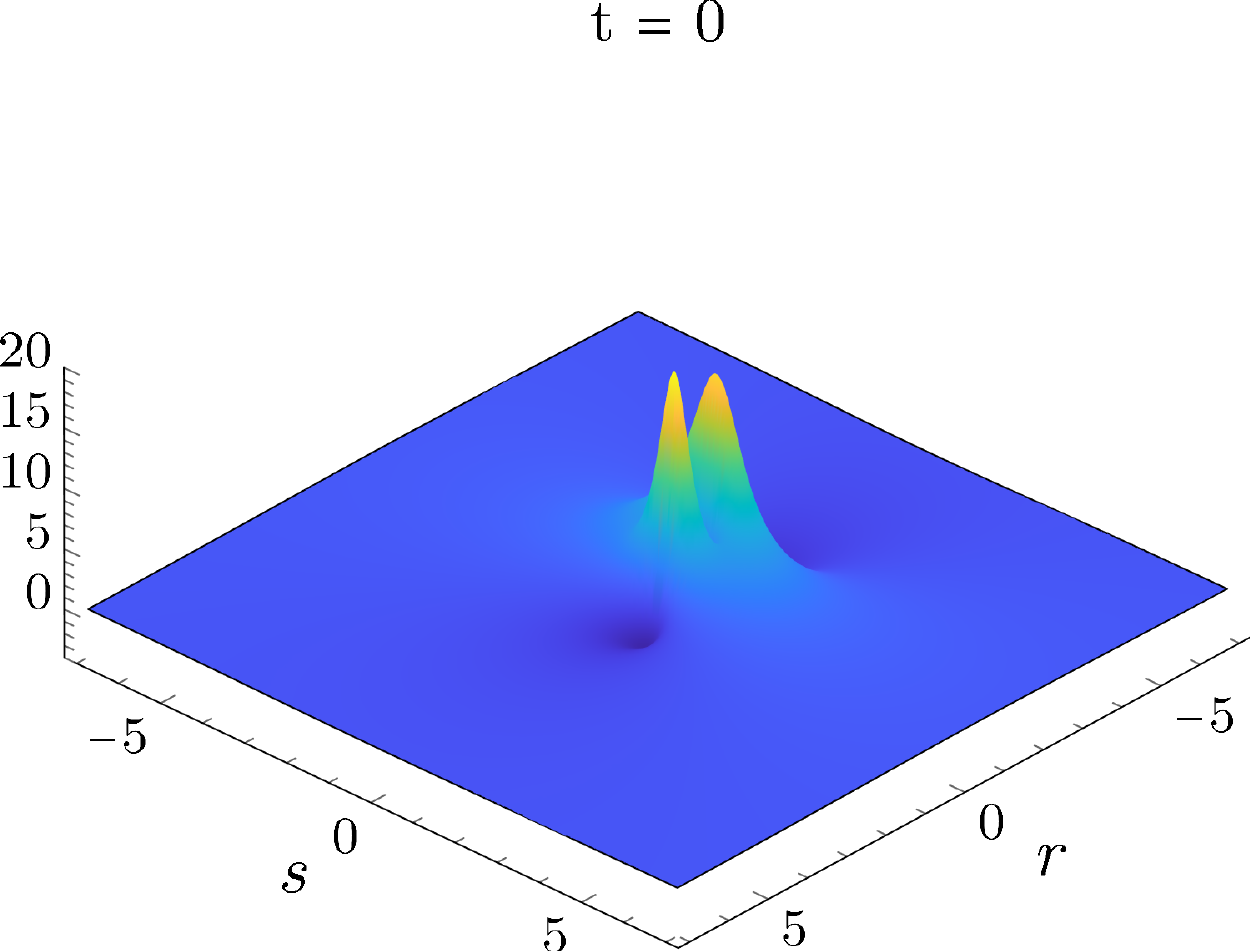}
\includegraphics[scale=0.44]{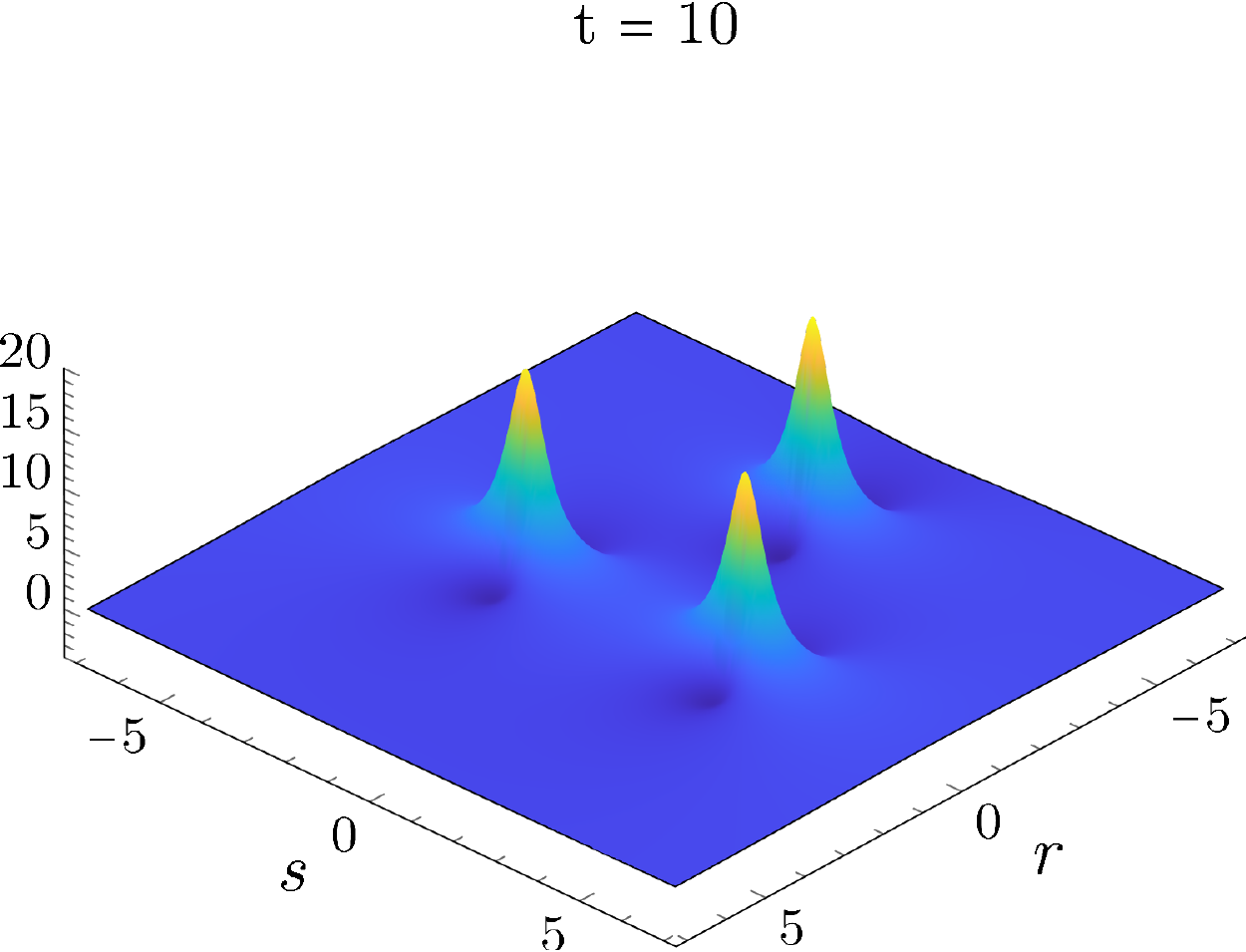}
\end{center}
\vspace{-0.2in}
\caption{$3$-lump solution of the KPI equation:\,
$a=\gamma_1=\gamma_2=\gamma_3=0, b=1$.}
\label{3lump}
\end{figure}
Figure~\ref{3lump} shows that the corresponding KPI solution $u_3(x,y,t)$  
is a $3$-lump solution which forms a triangular pattern in the $rs$-plane
for $|t| \gg 1$ with the 3 peaks located at the vertices of the triangle. 
As time progresses starting from large negative values, the lumps are 
attracted to each other and overlap, then
for $t \gg 1$ the 3 peaks re-appear and recede from each other
forming a time-reversed triangular structure. One of the peaks is located
along the $r$-axis while the other two are located symmetrically from
the $r$-axis. It will be shown in Section 4 that the approximate peak locations
for $|t| \gg 1$ can be estimated from the zeros of the first $|\cdot|^2$ term
i.e., by setting $Q({\bf 0})=0$ in \eqref{tau3}. Indeed, after substituting
\eqref{thetars} in \eqref{tau3}, $Q({\bf 0})=0$ implies that
$(z+\sf{1}{b})^3+3t-\sf{1}{4b^3} = 0$ where $z=r+is$. The peak locations
are then given by
\[z_j = r_j+is_j \sim -\frac{1}{b} + x_jt^{1/3} + 
\frac{1}{12x_j^2b^3}t^{-2/3} + O(|t|^{-5/3})\,, \quad \qquad |t| \gg 0 \,,\]
where $x_j, \, j=1,2,3$ are the roots of $x^3+3=0$. These
are shown in Figure~\ref{3lumppeaks}. Note from Figure~\ref{3lumppeaks}
that the peak locations exhibit reflection symmetry across the line $r=-\sf{1}{b}$ 
as $t \to -t$.
\begin{figure}[h!]
\begin{center}
\includegraphics[scale=0.42]{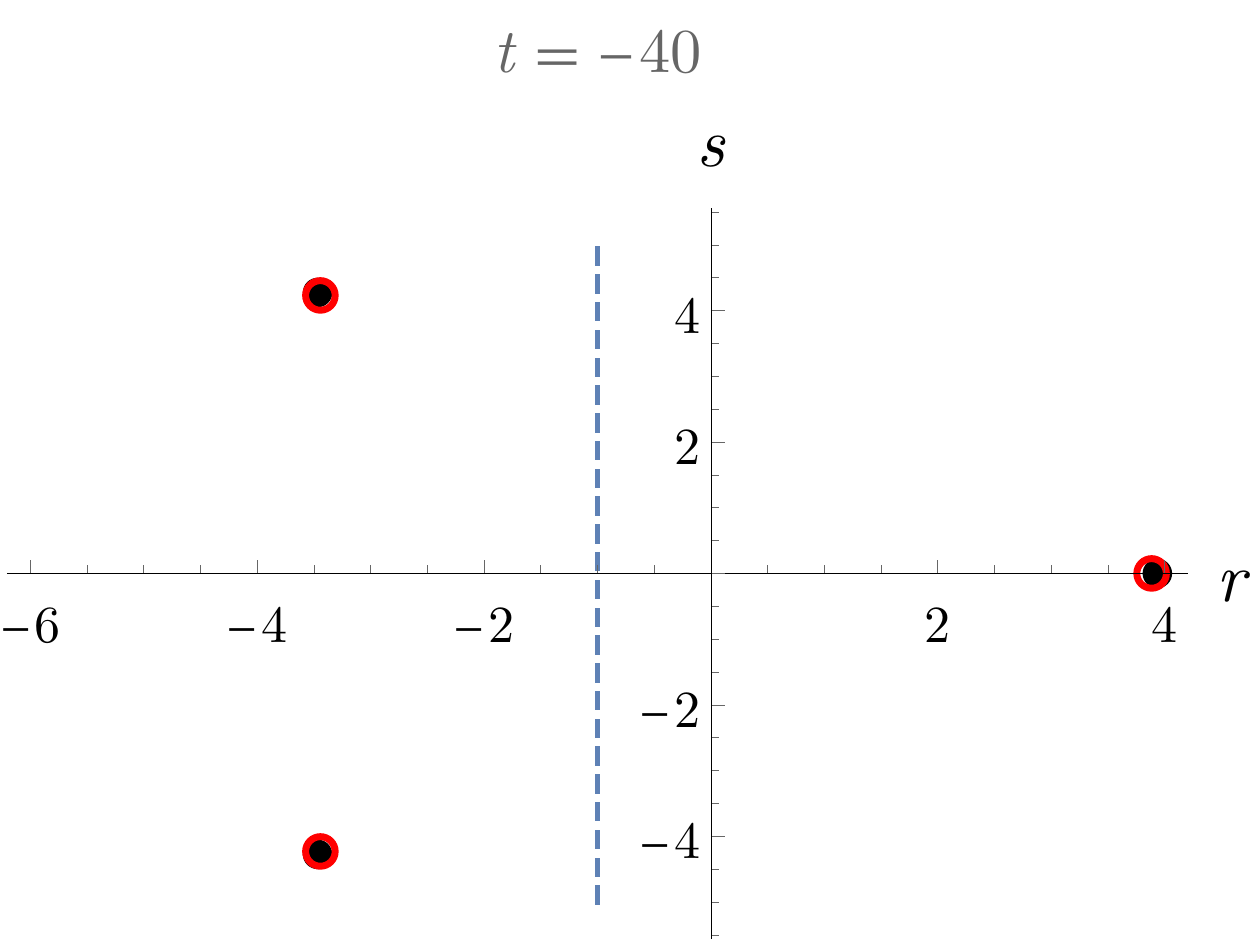} \qquad \quad
\includegraphics[scale=0.42]{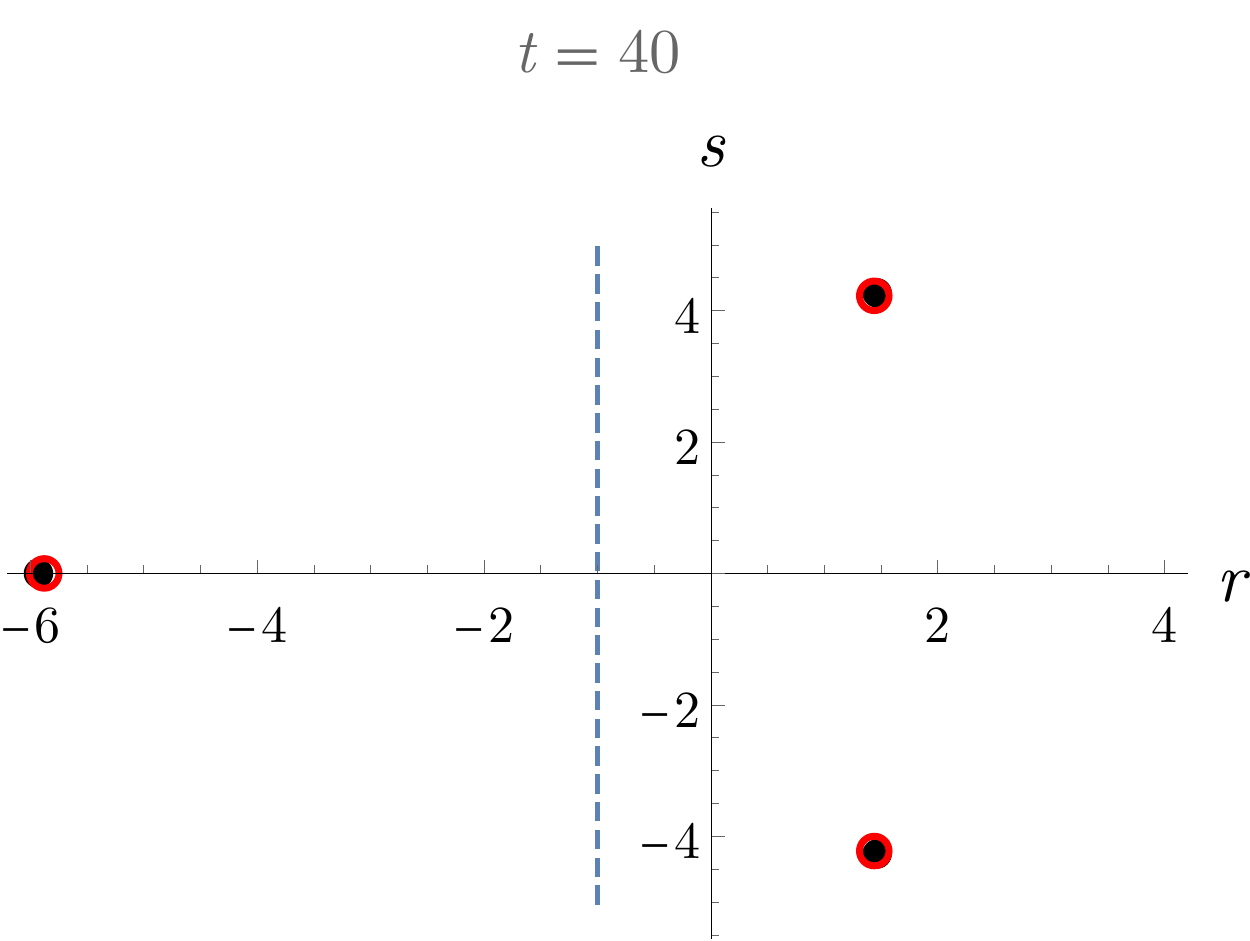}
\end{center}
\vspace{-0.2in}
\caption{$3$-lump peak locations at $t=-40$ (left panel) and $t=40$ (right panel).
Black dots represent locations $z_j$ where $Q({\bf 0})=0$, red open circles are exact locations.
As $t \to -t$, the peak locations are reflected across the vertical 
line $r=-\sf{1}{b}$. KP parameters: same as in Figure~\ref{3lump}.}
\label{3lumppeaks}
\end{figure}
Finally, we remark that in local co-ordinates 
$(r,s)=(r_j+h, s_j+k), \,\, (h,k) \sim O(1)$ near each peak, 
$\tau_3$ in \eqref{tau3} reduces to a $1$-lump 
$\tau$-function $\tau_1(h,k)$ for $|t| \gg 1$. Specifically,
$\tau_3 \sim (3|t|)^{4/3}\left(h^2+k^2+\sf{1}{4b^2}+O(|t|^{-1/3})\right)$.
This implies that the $3$-lump solution $u_3$ is a superposition
of three $1$-lump solution as $|t| \to \infty$. This result also holds
for general multi-lumps as will be shown in Section 4.2.
\section{Classification of KPI multi-lumps}
In this section we provide a simple yet useful characterization of
the KPI rational solutions constructed in Section 2.2, in terms of
integer partitions. Then we classify these solutions utilizing
certain ideas from the integer partition theory. Given a set of
distinct positive integers $\{m_1,\ldots, m_n\}$ indexing the
generalized Schur polynomials $p_{m_i}, \, i=1,\ldots,n$, the key observation
is that there is a one-to-one correspondence between the $\tau$-function
$\tau_P$ and the leading order principal minor $P({\bf 0})$.
Indeed, from $P({\bf 0})=i^{|{\bf 0}|}W({\bf 0})$ where 
the determinant $W({\bf 0})$ is given by \eqref{schur}, one can
immediately identify the generalized Schur polynomials $p_{m_i}$ which
are used to construct the KPI solution as in \eqref{H}. On the other hand,
$P({\bf 0})$ can be extracted from the first square term of   
the expression of $\tau_P$ in \eqref{square-b}, or even directly from
\eqref{P} as the leading principal minor of the matrix $P$ which is 
used to construct the Gram matrix $H$ in \eqref{gram} that then yields
the $\tau$-function $\tau_P=\det H$ in \eqref{square}. 
\subsection{Characterization of multi-lumps via integer partition}
In order to establish the relationship between the KPI lumps
and partitions of integers we first introduce some basic ideas from 
partition theory and its connection with the symmetric polynomials.
\begin{definition}[Partition]
A {\it partition} is a decomposition of a nonnegative integer $N$ 
given by a non-decreasing sequence 
$\lambda = (\lambda_1,\ldots \lambda_n)$  such
that $0 \leq \lambda_1 \leq \lambda_2 \leq \cdots \leq \lambda_n$
and $|\lambda| := \lambda_1+\cdots + \lambda_n=N$. The number of
non-zero parts $l(\lambda) \leq n$ of $\lambda$ is called its {\it length} 
and $|\lambda|=N$, its {\it size}. The strictly increasing sequence $m=(m_1, \ldots, m_n)$
defined by $m_j=\lambda_j+(j-1), \,\, j=1,\ldots,n$ is called the 
{\it degree vector} of the partition $\lambda$.
\end{definition}
\noindent It is often convenient to describe a partition $\lambda$ of $N \in \mathbb{N}$
in terms of its associated Young diagram $Y_{\lambda}$ which is a rectangular 
array of left-justified boxes (or dots) such that the i$^{\mathrm th}$ row 
from the top contains $\lambda_{n-i+1}$ boxes, $i=1,\ldots,n$. Thus $Y_{\lambda}$
consists of $n$ rows, $l(\lambda)$ of which contain non-zero number of
boxes, and $\lambda_n$ columns. The total number of boxes in $Y_\lambda$
is, of course $N$. The {\it conjugate} $\lambda'$ of a partition
$\lambda$ is a partition whose Young diagram $Y_{\lambda'}$ is the transpose
of $Y_{\lambda}$ obtained by interchanging its rows and columns.
Clearly, $|\lambda'|=|\lambda|=N, (\lambda')' = \lambda$, the length
$l(\lambda')=\lambda_n$, also $\lambda'$ has $l(\lambda) \leq n$
columns. A partition is called {\it self-conjugate} if $\lambda=\lambda'$.
\begin{example}
Following are the Young diagrams of the partition $\lambda=(1,2,4)$, its conjugate 
$\lambda' = (1,1,2,3)$ and a self-conjugate partition $\mu=(1,2,3)$
also known as a staircase (or triangular) partition.
\[ Y_{\lambda}=\raisebox{0.5em}{\ytableausetup{boxsize=0.8em}\ydiagram{4,2,1}}
\hspace{1in}
Y_{\lambda'}=\raisebox{0.85em}{\ytableausetup{boxsize=0.8em}\ydiagram{3,2,1,1}}
\hspace{1in}
Y_{\mu}=\raisebox{0.5em}{\ytableausetup{boxsize=0.8em}\ydiagram{3,2,1}}
 \]
Here, $|\lambda|=|\lambda'|=7$ which is the number of boxes in the diagrams 
$Y_\lambda$ and $Y_{\lambda'}$. The number of non-zero rows in 
$Y_\lambda = l(\lambda)=3=$ number of columns in $Y_{\lambda'}$. 
Similarly, the number of non-zero rows in
$Y_{\lambda'} = l(\lambda')=4=\lambda_4=$ number of columns in $Y_{\lambda}$.
For the self-conjugate partition, $|\mu|=6$, and $l(\mu) = 3 = \mu_3$. 
\end{example}

For a given pair of partitions $\mu$ and $\lambda$, $\mu \subset \lambda$ means
that $Y_{\mu} \subset Y_{\lambda}$, i.e., $\mu^i \leq \lambda^i, 
\,\, 1 \leq i \leq l(\mu)$ where $\mu^i, \lambda^i$ are the number of boxes in 
the i$^{\mathrm th}$ row (from top) of $Y_{\mu}, Y_{\lambda}$, respectively. 
Alternatively, suppose $\lambda$ has $n$ parts then the number
of parts in $\mu$ can be extended to $n$ by appending $0$ parts
if necessary. Then $\mu \subset \lambda$ implies $\mu_i \leq \lambda_i, \,\,
1 \leq i \leq n $. The set difference $Y_\lambda - Y_\mu$ is called a 
skew diagram $Y_{\lambda/\mu}$ which may be disconnected.
\begin{example}
Let $\lambda = (1,3,4)$ and $\mu = (1,2)$ then
\[ Y_{\lambda}=\raisebox{0.5em}{\ytableausetup{boxsize=0.8em}\ydiagram{4,3,1}}
\hspace{1in}
Y_{\mu}=\raisebox{0.3em}{\ytableausetup{boxsize=0.8em}\ydiagram{2,1}}
\hspace{1in}
Y_{\lambda/\mu}=\raisebox{0.5em}{\ytableausetup{boxsize=0.8em}\ydiagram{2+2,1+2,1}}
\]
If $\rho = (2,3)$ but $\lambda$ is the same as above then
\[Y_{\rho}=\raisebox{0.3em}{\ytableausetup{boxsize=0.8em}\ydiagram{3,2}}
\hspace{1in}
Y_{\lambda/\rho}=\raisebox{0.5em}{\ytableausetup{boxsize=0.8em}\ydiagram{3+1,2+1,1}}
\]
\end{example}
\noindent {\bf Schur and skew-Schur functions}:\,
Associated with each partition $\lambda$ of $N \in \mathbb{N}$
and degree vector $m = (m_1,\ldots,m_n)$, there is a unique weighted homogeneous
polynomial of degree $N$ in the variables $\theta_j$ defined by the
second $n \times n$ determinant in \eqref{schur} namely,
$W({\bf 0}) = \det(p_{m_j-i+1})=\det(p_{\lambda_j+j-i})$.
It will be referred to as the {\it Schur function} 
throughout this article and will be denoted by $W_{\lambda}$.
Historically, the Schur functions were introduced in the study of symmetric
functions in $n$ variables $(x_1,\ldots,x_n)$ with integer coefficients,
and was defined as a quotient
of determinants namely $W_{\lambda}(x_1,\ldots,x_n) =
\det(x_i^{m_j})/\det(x_i^{j-1})$. The last formula can be re-expressed via
Jacobi-Trudi identity as $W_{\lambda} = \det(h_{\lambda_j+j-i})$ where
$h_r = \sum x_{i_1} \cdots x_{i_r}$, $1 \leq i_1 \leq \cdots \leq i_r \leq n$
are the complete symmetric polynomials of degree $r$. Next, introducing the
power sums $\theta_j = \sf{1}{j}(x_1^j+\cdots+x_n^j), \, j\geq 1$ it can be shown that
$h_r(x_1,\ldots,x_n)=p_r(\theta_1,\ldots,\theta_r)$, and one recovers
the expression for $W_{\lambda}$ in terms of the generalized Schur polynomials
$p_r$. 

Just as each Young diagram $Y_\lambda$ is associated to the 
Schur function $W_{\lambda}$, similarly each skew diagram $Y_{\lambda/\mu}$
is associated with a {\it skew Schur function} defined as follows.
Let $\lambda, \mu$ be two partitions with degree vectors $m$ and $n$, i.e.,
$m_i = \lambda_i+(i-1), \, n_i = \mu_i+(i-1), \,\, i=1,\ldots,n$, and
let $\mu \subset \lambda$. Then the Schur function $W_\lambda$ and the
skew Schur function $W_{\lambda/\mu}$ are defined as   
\begin{equation}
W_\lambda = \det(p_{m_j-i+1}) = \Wr(p_{m_1},\ldots,p_{m_n})\,,
\qquad
W_{\lambda/\mu} = \det(p_{m_j-n_i}) = \det(p_{\lambda_j-\mu_i+j-i})\,,  
\label{ss}
\end{equation}  
where the wronskian in the first expression is with respect to $\theta_1$,
and is the same epression as $W({\bf 0})$ in \eqref{schur}. 
Both $W_\lambda$ and $W_{\lambda/\mu}$ are weighted homogeneous polynomials
in the $\theta_j$s of degree $|\lambda|$ and $|\lambda|-|\mu|$ respectively, 
which are also equal to the number of boxes in the Young diagram $Y_\lambda$
and the skew diagram $Y_{\lambda/\mu}$. It follows
that if $\mu = \lambda$ i.e., $Y_{\lambda}-Y_{\mu} = \emptyset$ (a diagram with
no boxes) then $W_{\emptyset}=\Wr(p_0,p_1,\ldots,p_{n-1})=1$.
Also, if $\mu = \emptyset$ then $W_{\lambda/\emptyset} = W_{\lambda}$, and if
$\mu \supset \lambda$ then $W_{\lambda/\mu} =0$.
The Schur functions $W_{\lambda}$ forms a basis of the symmetric functions
over the ring of integers. The Schur function together with its properties and
its role in the representation theory of the symmetric group $S_N$
are well documented in the literature~\cite{M37,FH91,M95} (and references therein).

Notice that the multi-index sets ${\bf r}=(r_1 \ldots r_n)$ with 
$0 \leq r_1 < r_2 < \cdots < r_n \leq m_n$ introduced to label the
$n \times n$ minors in \eqref{square} form the degree vector of a partition
denoted by $\lambda({\bf r})$. Each part $\lambda_i({\bf r}) = r_i-i+1$
for $1 \leq i \leq n$ and $|\lambda({\bf r})| = |{\bf r}| - |{\bf 0}|$ where
recall that $|{\bf r}|=r_1+\cdots+r_n$.
Furthermore, in addition to the lexicographic ordering in Definition 2.1,
there is a natural partial ordering among the multi-index sets defined by
$$ {\bf r} \preceq {\bf s} \quad \Leftrightarrow \quad r_i \leq s_i\,,
\qquad 1 \leq i \leq n\,.$$
The corresponding partitions then satisfy $\lambda({\bf r}) \subset \lambda({\bf s})$
and are partially ordered by the inclusion 
$Y_{\lambda({\bf r})} \subset Y_{\lambda({\bf s})}$ of the corresponding Young 
diagrams as mentioned after Example 3.1. Note that if ${\bf r} = {\bf 0}$ then the partition
$\lambda({\bf 0})=\emptyset$ which has only $0$ parts, and when ${\bf r} = {\bf m}$
the last multi-index in the lexicographic order, $\lambda({\bf m})=\lambda$.
\begin{example}
Consider the multi-index sets of Example 2.1. The ordering described above
corresponds to the partially ordered set (poset) shown below where the arrows 
indicate the ordering ${\bf r} \preceq {\bf s}$. 
\[ \begin{matrix}
 &  & \quad\nearrow & \raisebox{0.1 in}{(03)} & \hspace{-.1 in} \searrow &  \\
(01) & \rightarrow & (02) &  & (13) & \rightarrow & (23) \,, \\
 &  & \quad \searrow & \raisebox{-0.1 in}{(12)} & \hspace{-.1 in} \nearrow &  \\
\end{matrix} \]  
Note that the index sets $(03)$ and $(12)$ are incomparable 
although both $(03) \preceq (13), \,\, (12) \preceq (13)$.
The partitions $\lambda({\bf r})$ corresponding to ${\bf r} = (r_1 r_2)$ are
given by \\
$\lambda((01)) = (0,0) = \emptyset, \quad \lambda((02)) = (0,1), \quad 
\lambda((03))=(0,2), \quad \lambda((12)) = (1,1), \quad \lambda((13))=(1,2)$, and  
$\lambda((23))=(2,2)$. Then the associated Young diagrams are 
\[ Y_{\emptyset} \,, \qquad  
Y_{(0,1)}=\ytableausetup{boxsize=0.6em}\ydiagram{1}\,, \qquad 
Y_{(0,2)}=\ytableausetup{boxsize=0.6em}\ydiagram{2}\,, \qquad
Y_{(1,1)}=\raisebox{0.4em}{\ytableausetup{boxsize=0.6em}\ydiagram{1,1}}\,, \qquad
Y_{(1,2)}=\raisebox{0.4em}{\ytableausetup{boxsize=0.6em}\ydiagram{2,1}}\,, \qquad
Y_{(2,2)}=\raisebox{0.4em}{\ytableausetup{boxsize=0.6em}\ydiagram{2,2}} \,. 
\]
Notice that neither $Y_{(0,2)}$ nor $Y_{(1,1)}$ are contained in each other but
both are subdiagrams of $Y_{(1,2)}$. 
\end{example}
The main purpose of the background in partition theory is to relate the
expressions for the KPI $\tau$-function $\tau_P$ in \eqref{square} with
the Schur and skew Schur functions. It was already mentioned earlier
that in \eqref{ss} $W_\lambda=W({\bf 0})$ which is the second determinant 
in \eqref{schur}. The second formula
in \eqref{ss} states that for a given multi-index set ${\bf r}$ 
and the associated partition $\mu=\lambda({\bf r})$ the skew Schur function 
$W_{\lambda/\mu}=W_{\lambda/\lambda({\bf r})}=W({\bf r})$ where $W({\bf r})$ is the
first determinant in \eqref{square}. Then using the fact that the 
$n \times n$ minors of the upper triangular
matrix $U$ in \eqref{square-a} satisfy $U\tbinom{{\bf r}}{{\bf s}} = 0$ unless
${\bf r} \preceq {\bf s}$, the minors in \eqref{square-b} can now be re-expressed 
in terms of the Schur and the skew Schur functions in the following manner
\begin{subequations}
\begin{equation}
Q({\bf 0}) = i^{|{\bf 0}|}\Big(W_{\lambda} + \sum_{\lambda(s)\neq\emptyset} \!\!
i^{|\lambda(s)|}U\tbinom{{\bf 0}}{{\bf s}}W_{\lambda/\lambda({\bf s})} \Big) \,,
\label{skewschur-a}
\end{equation}
where the sum is over all non-empty partitions $\lambda({\bf s})$.
If $\lambda({\bf s}) \subset \lambda$, i.e.,
partitions whose Young diagram is contained in $Y_{\lambda}$, then
the corresponding skew Schur function $W_{\lambda/\lambda({\bf s})}$ is
a weighted homogeneous polynomial in the $\theta_j$ of degree 
$|\lambda|-|\lambda({\bf s})|, \,\, 1 \leq |\lambda({\bf s})| \leq |\lambda|$.
Otherwise, $W_{\lambda/\lambda({\bf s})}=0$ when $\lambda({\bf s}) \supset \lambda$.
Similarly, for each ${\bf r} > {\bf 0}$, the second expression in \eqref{square-b}
gives    
\begin{equation}
Q({\bf r})  = i^{|{\bf r}|}\Big(W_{\lambda/\lambda({\bf r})} + 
\sum_{\lambda({\bf s})} 
i^{|\lambda({\bf s})|-|\lambda({\bf r})|}U\tbinom{{\bf r}}{{\bf s}}
W_{\lambda/\lambda({\bf s})} \Big) \,, \qquad \lambda({\bf r}) \subset \lambda({\bf s}) 
\subseteq \lambda \,,  
\label{skewschur-b}
\end{equation}  
where the sum is over all $\lambda({\bf s})$ such that the Young diagrams
satisfy $Y_{\lambda({\bf r})} \subset Y_{\lambda({\bf s})} \subseteq Y_\lambda$. 
The skew Schur functions $W_{\lambda/\lambda({\bf s})}$ in this sum are  
weighted homogeneous polynomials in the $\theta_j$ of degree 
$|\lambda|-|\lambda({\bf s})|$ with
$|\lambda({\bf r})| < |\lambda({\bf s})| \leq |\lambda|$.
\label{skewschur} 
\end{subequations}
\begin{example}
Consider the $3$-lump solution in Section 2.3.3. In this case,
the partition corresponding to the degree vector $m=(m_1,m_2)=(1,3)$
is $\lambda = (1,2)$. The $2 \times 2$ minors of $P, Q, U$ and $D$
are labeled by the set of multi-indices $\{{\bf r}=(r_1 r_2)\}$ 
which corresponds to the poset in Example 3.3. Using the partitions 
$\lambda({\bf r})$ enumerated in Example 3.3 and employing either
equation \eqref{schur} or \eqref{ss},
the Schur function $W_{(1,2)}=W({\bf 0})$ and the skew schur functions 
$W_{\lambda/\lambda({\bf r})}=W({\bf r})$ for this example are found to be
\begin{gather*} 
W((01))=W_{(1,2)} = \begin{vmatrix} p_1 & p_3 \\ p_0 & p_2 \end{vmatrix}\,, \quad  
W((02))=W_{(1,2)/(0,1)} = \begin{vmatrix} p_1 & p_3 \\ 0 & p_1 \end{vmatrix}\,, \quad
W((03))=W_{(1,2)/(0,2)} = \begin{vmatrix} p_1 & p_3 \\ 0 & p_0 \end{vmatrix}\,, \\
W((12))=W_{(1,2)/(1,1)} = \begin{vmatrix} p_0 & p_2 \\ 0 & p_1 \end{vmatrix}\,, \quad
W((13))=W_{(1,2)/(1,2)} = \begin{vmatrix} p_0 & p_2 \\ 0 & p_0 \end{vmatrix} \quad
W((23))=W_{(1,2)/(2,2)} = \begin{vmatrix} 0 & p_1 \\ 0 & p_0 \end{vmatrix}\,,  
\end{gather*}
Notice that $W_{(1,2)/(1,2)}=W_{\emptyset}=1$ and $W_{(1,2)/(2,2)}=0$ since
the partition $(2,2) \supset (1,2)$. Then from \eqref{skewschur-a} and the
matrix $U$ given in Section 2.3.3, 
$$Q({\bf 0})= Q((01))= i\big[W_{(1,2)} +\sf{i}{b}W_{(1,2)/(0,1)} 
-\sf{3}{4b^2}W_{(1,2)/(0,2)}-\sf{1}{4b^2}W_{(1,2)/(1,1)}-\sf{i}{4b^3}W_{\emptyset}\big]\,,$$  
which is the argument inside the leading $|\cdot|^2$ term of $\tau_3$ in 
\eqref{tau3} when expressed in terms of $\theta_j$. The arguments inside the 
remaining $|\cdot|^2$ terms in
$\tau_3$ are obtained from \eqref{skewschur-b}. For example,
$$Q((02)) = -\big[W_{(1,2)/(0,1)}+\sf{3i}{2b}W_{(1,2)/(0,2)}+
\sf{i}{2b}W_{(1,2)/(1,1)}-\sf{3}{4b^2}W_{\emptyset}\big]\,,$$ 
and the rest follow in a similar fashion.
\end{example} 
We rename the KPI $\tau$-function $\tau_P$ as $\tau_\lambda$. Then the characterization 
of $\tau_\lambda$ in terms of integer partition can be stated as follows.
\begin{proposition}
Let $\lambda = (\lambda_1, \ldots, \lambda_n)$ be a partition of a 
positive integer $N$ and $W_{\lambda}$ be the associated Schur function. 
Then there is a unique (upto a $GL(n,\mathbb{C})$ gauge freedom)
KPI $\tau$-function given by \eqref{square-b} and \eqref{skewschur}  
\[ \tau_{\lambda} = \frac{1}{(2b)^{2|{\bf 0}|}}\Big|W_{\lambda} + 
\sum_{\lambda(s) \neq \emptyset} \!\!
i^{|\lambda(s)|}U\tbinom{{\bf 0}}{{\bf s}}W_{\lambda/\lambda({\bf s})} \Big|^2 +
\sum_{{\bf r} > {\bf 0}}\frac{1}{(2b)^{2|{\bf r}|}}\Big|W_{\lambda/\lambda({\bf r})} + 
\sum_{\lambda({\bf r}) \subset \lambda({\bf s}) \subseteq \lambda} \!\!\!
i^{|\lambda({\bf s})|-|\lambda({\bf r})|}U\tbinom{{\bf r}}{{\bf s}}
W_{\lambda/\lambda({\bf s})} \Big|^2 \,,   
\]
where ${\bf r} = (r_1 \ldots r_n), \,\,
0 \leq r_1 < \cdots < r_n \leq m_n$ is the degree vector of the partition
$\lambda({\bf r})$. $\tau_\lambda$ is expressed as a sum of squares, where each 
square term is a weighted homogeneous
polynomial in $\theta_j, \, j=1,\ldots,\lambda_n+n-1$ and their complex conjugates,
of degree $2(|\lambda|-|\lambda({\bf r})|), \,\, 0 \leq |\lambda({\bf r})| \leq |\lambda|$ 
with $|\lambda|=N$.
\end{proposition}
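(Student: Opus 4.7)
The proof is essentially a bookkeeping argument that assembles pieces already in place: the sum-of-squares form \eqref{square-b}, the skew-Schur identification \eqref{skewschur}, and the bijection between partitions and strictly increasing degree vectors. I would organize it in three stages.

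First, I would establish the bijection underlying the statement. Given any partition $\lambda=(\lambda_1,\ldots,\lambda_n)$ with $0\le\lambda_1\le\cdots\le\lambda_n$, the degree vector $m_j=\lambda_j+(j-1)$ is a strictly increasing sequence $0\le m_1<m_2<\cdots<m_n$, and conversely any such sequence recovers a unique partition via $\lambda_j=m_j-(j-1)$. If $m_1=0$ (equivalently $\lambda_1=0$), the reduction described in Remark 2.2(d) strips the trivial column and replaces $(m_1,\ldots,m_n)$ by the degree vector of the same partition with its leading zero parts deleted, producing an equivalent $\tau$-function. So without loss of generality one may assume $\lambda$ has exactly $n=l(\lambda)$ nonzero parts, making the assignment $\lambda\mapsto(p_{m_1},\ldots,p_{m_n})$ a well-defined map, and $\tau_P$ is determined by $\lambda$ up to the $GL(n,\mathbb{C})$ gauge freedom inherited from the choice of basis for $V_P$ noted after \eqref{gram}.

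Second, I would identify every minor $W(\mathbf{r})$ with a skew Schur function. By definition \eqref{ss}, $W_{\lambda/\mu}=\det(p_{\lambda_j-\mu_i+j-i})=\det(p_{m_j-n_i})$ where $n_i=\mu_i+(i-1)$ is the degree vector of $\mu$. Comparing with the first determinant in \eqref{schur}, the entry in row $i$, column $j$ of $W(\mathbf{r})$ is $p_{m_j-r_i}$. Setting $\mu=\lambda(\mathbf{r})$ with $\mu_i=r_i-(i-1)$, so that its degree vector is exactly $\mathbf{r}$, we obtain $W(\mathbf{r})=W_{\lambda/\lambda(\mathbf{r})}$, which vanishes precisely when $\lambda(\mathbf{r})\not\subset\lambda$ (so the skew diagram is empty/ill-defined), matching the minor vanishing that occurs when any row index exceeds $m_n$. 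In particular $W(\mathbf{0})=W_{\lambda/\emptyset}=W_\lambda$.

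Third, I would substitute these identifications into the expressions \eqref{square-b} for $\tau_P$. Using $P(\mathbf{r})=i^{|\mathbf{r}|}W(\mathbf{r})$ and recalling that $U\tbinom{\mathbf{r}}{\mathbf{s}}$ vanishes unless $\mathbf{r}\preceq\mathbf{s}$ (because $U$ is upper triangular), the inner sums in \eqref{skewschur-a}--\eqref{skewschur-b} range exactly over partitions $\lambda(\mathbf{s})$ with $\lambda(\mathbf{r})\subset\lambda(\mathbf{s})\subseteq\lambda$. Factoring the overall phase $i^{|\mathbf{r}|}$ out of the modulus in each term of \eqref{square-b}, one arrives at the stated formula for $\tau_\lambda$. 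The weighted degree claim follows immediately since $W_{\lambda/\lambda(\mathbf{s})}$ is homogeneous of degree $|\lambda|-|\lambda(\mathbf{s})|$ in the $\theta_j$, and inside the $\mathbf{r}$-th square the minimal-degree (i.e.\ $\mathbf{s}=\mathbf{r}$) term has degree $|\lambda|-|\lambda(\mathbf{r})|$ while all other terms in that square have strictly smaller degree; after modulus squaring one obtains real polynomials of total weighted degree $2(|\lambda|-|\lambda(\mathbf{r})|)$, consistent with the graded structure asserted in Proposition~\ref{prop1}.

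The only step that is not purely notational is verifying that the Jacobi-Trudi style determinant in \eqref{ss} truly agrees with the $W(\mathbf{r})$ arising from the Gramian construction, and that the triangularity of $U$ exactly matches the inclusion order of Young diagrams; both are immediate once the translation $\mathbf{r}\leftrightarrow\lambda(\mathbf{r})$ is set up, so the main obstacle is really just keeping the bookkeeping clean. Uniqueness, finally, is a direct consequence of the bijection established in the first step: any two $\tau$-functions of the form $\tau_P$ with the same underlying partition share the same degree vector and hence, modulo the $GL(n,\mathbb{C})$ gauge of the Plücker representative, the same expression.
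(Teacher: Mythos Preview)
Your proposal is correct and follows essentially the same route as the paper, which does not give a separate proof block for this proposition but rather states it as a summary of the development in Sections~2.2 and~3.1: the sum-of-squares Gramian formula \eqref{square-b}, the identification $W(\mathbf{r})=W_{\lambda/\lambda(\mathbf{r})}$ from \eqref{ss}--\eqref{schur}, and the partition/degree-vector bijection of Definition~3.1. One small slip: in your third stage you write ``the minimal-degree (i.e.\ $\mathbf{s}=\mathbf{r}$) term'' --- this should be the \emph{maximal}-degree term, since $|\lambda(\mathbf{s})|>|\lambda(\mathbf{r})|$ for $\mathbf{s}\succ\mathbf{r}$ makes $W_{\lambda/\lambda(\mathbf{s})}$ of strictly lower degree than $W_{\lambda/\lambda(\mathbf{r})}$; the rest of that sentence is already consistent with this reading.
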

\begin{remark}
\begin{itemize}
\item[(a)] Throughout this article the parts $\lambda_j$ of a partition
$\lambda=(\lambda_1,\ldots, \lambda_n)$ are consistently listed in 
{\it non-decreasing} order (cf. Definition 3.1) contrary to the more 
standard convention of listing the parts in non-increasing order.
\item[(b)] Yet another way to denote a partition of a positive integer $N$ 
is to indicate the number of times (multiplicity)
each positive integer $1,2,\ldots$ occur as a part in the partition.
That is, $\lambda = (1^{\alpha_1},2^{\alpha_2},\ldots, N^{\alpha_N})$, where
$\alpha_i \geq 0, \,\, i=1,2,\ldots, N$ satisfy 
$\alpha_1+2\alpha_2+\cdots N\alpha_N=N=|\lambda|$. Let $Y_{\lambda}$ be
the Young diagram of $\lambda$. The sum of boxes in each column
of $Y_{\lambda}$ starting from the leftmost column is successively given by
$\mu_N = \alpha_1+\cdots+\alpha_N, \,\, \mu_{N-1} = \alpha_2+\cdots+\alpha_N,
\ldots \,\, \mu_1 = \alpha_N$ so that $\mu_1 \leq \mu_2 \leq \cdots \mu_N$.
Thus, $\mu = (\mu_1, \ldots, \mu_N)$ is also a partition with $|\mu|=N$.  
In fact, $\mu = \lambda'$ is the partition conjugate to $\lambda$ where
the difference between successive rows from the top 
$\mu_j - \mu_{j-1} = \alpha_{N-j+1}, \,\, j=N, N-1, \ldots, 2, \, \mu_1 = \alpha_N$.
\item[(c)] The Schur function $W_\lambda$ is a wronskian as shown
in \eqref{ss} but the skew Schur functions $W_{\lambda/\mu}$ are {\it not} wronskians. 
They are in fact, components of the $n$-form $\omega_P$ representing the 
Grassmannian $P \in {\rm Gr}_{\mathbb{C}}(n,m_n+1)$ introduced in Section 2.2.
However, viewed as functions of $\theta=(\theta_1, \theta_2, \ldots)$, it can
be shown that $W_{\lambda/\mu}(\theta) = W_\mu(\tilde{\partial})W_\lambda(\theta)$,
where $\tilde{\partial} = (\partial_{\theta_1}, \sf{1}{2}\partial_{\theta_2}, 
\sf{1}{3}\partial_{\theta_3}, \ldots)$~\cite{K18}.
\end{itemize} 
\end{remark} 
\subsection{The $N$-lump solutions}
Propositions 2.1 and 3.1 demonstrate that the 
$\tau$-function $\tau_\lambda$ and the corresponding rational solution of KPI
given by \eqref{u} can be identified with a partition $\lambda$
of a positive integer $N$. In particular, each $\tau_\lambda$ is uniquely
characterized by its Schur function $W_{\lambda}$ 
which is a weighted homogeneous polynomial in $\theta_j$ of degree
$|\lambda|=N$. Henceforth in this paper, we refer to these solutions
as the {\it $N$-lump} solutions. Our nomenclature will be further justified
in Section 4 where it will be shown that the $N$-lump solution
separates into $N$ distinct peaks whose heights approach the
$1$-lump peak height asymptotically as $|t| \to \infty$. Evidence of
the latter feature has already been seen in the examples of Section 2.3.
The $N$-lump solutions for a given positive integer $N$ can be enumerated
according to the underlying partition $\lambda$ of $N$. To avoid redundancies,
we will consider only those partitions $\lambda=(\lambda_1, \ldots, \lambda_n)$ 
in this subsection, whose smallest part $\lambda_1 > 0$ so that the
length of the partition $l(\lambda)=n$. It will be useful to
introduce a lexicographic ordering among the partitions of the same 
size $|\lambda|=N$ in a similar way as in Definition 2.1.
\begin{definition} (Lexicographic ordering)
Let $\lambda$ and $\mu$ be two distinct partitions of $N$. If $j$ is the 
earliest index such that $\mu_j \neq \lambda_j$ then $\mu < \lambda$
if and only if $\mu_j < \lambda_j$.
\end{definition} 
\noindent Note that the lexicographic ordering is a total ordering. The
{\it smallest} partition $\lambda = (1,1,\ldots,1):= (1^N)$ which
has $N$ parts, and the {\it largest} partition $\lambda = (N)$ 
that has only one part. 
\begin{example}
All possible partitions for $N=4$ are:\, $(1^4) < (1^2,2)< (1,3)< (2^2)<(4)$
and the associated Young diagrams are given by
\[ Y_{(1^4)}=\raisebox{1.0em}{\ytableausetup{boxsize=0.6em}\ydiagram{1,1,1,1}}\,, \qquad 
Y_{(1^2,2)}=\raisebox{0.6em}{\ytableausetup{boxsize=0.6em}\ydiagram{2,1,1}}\,, \qquad
Y_{(1,3)}=\raisebox{0.4em}{\ytableausetup{boxsize=0.6em}\ydiagram{3,1}}\,, \qquad
Y_{(2^2)}=\raisebox{0.4em}{\ytableausetup{boxsize=0.6em}\ydiagram{2,2}}\,, \qquad
Y_{(4)}=\ytableausetup{boxsize=0.6em}\ydiagram{4} \,. 
\]
There are two conjugate pairs, namely 
$(1^4)'=(4), \, (1^2,2)'=(1,3)$, while $(2^2)$ is self-conjugate.
Hence there are 5 types of $4$-lump solutions indexed by the Schur
functions:\, $W_{(1^4)} = \Wr(p_1,p_2,p_3,p_4), \,\,W_{(1^2,2)}=\Wr(p_1,p_2,p_4), 
\,\, W_{(1,3)}=\Wr(p_1,p_4), \,\, W_{(2^2)}=\Wr(p_2,p_3), \,\, W_{(4)} = p_4$.
In each case, $W_{\lambda}$ is a weighted homogeneous polynomial
in $\theta_j, \,\, j=1,\ldots,4$ of degree 4. 
\end{example}
For a given positive integer $N$, the total number of distinct 
$N$-lump solutions of the KPI equation is given by $p(N)$, which is 
the total number of ways to partition $N$. The Euler generating 
function for $p(N)$ is given by
\begin{equation}
p(x) = \sum_{N} p(N)x^N = \prod_{j\geq 1}\frac{1}{1-x^j} 
= \frac{1}{(1-x)(1-x^2)(1-x^3) \cdots} \,,
\label{pN}
\end{equation}
$p(0)=0$ and $p(j)=0$ if $j<0$. In addition, 
there exists a recurrence formula 
\[p(N) = \sum_{k\geq 1}(-1)^{k-1}\Big(p\big(N-\sf{3k^2-k}{2}\big)
+p\big(N-\sf{3k^2+k}{2}\big)\Big) 
=p(N-1)+p(N-2)-p(N-5)-p(N-7)+\cdots \,, 
\] 
that follows from Euler's pentagonal number identity
\[\frac{1}{p(x)}=\prod_{j\geq 1}(1-x^j) = 
1+\sum_{k \geq 1}(-1)^k(x^{(3k^2-k)/2} + x^{(3k^2+k)/2})\,. \]
The first few values of $p(N)$ are $p(1)=1, p(2)=2, p(3)=3, p(4)=5,
p(5)=7, \ldots$. It is also possible to further refine the class
of $N$-lump solutions according to a fixed number of parts $n$ of 
the partition. The number of partitions of $N$ with $n (\leq N)$ 
parts $p(N,n)$ is obtained from the generating function
\begin{equation} 
p(x,y)=\sum_{N}p(N,y)x^N = \prod_{j \geq 1}\frac{1}{(1-yx^j)}\,, \qquad
p(N,y)= \sum_{n=1}^Np(N,n)y^n \,.
\label{pNn} 
\end{equation}
Setting $y=1$ above, yields $\sum_{n}p(N,n) = p(N)$. It is easy to
verify from \eqref{pNn} that $p(N,1)=p(N,N) =1$. Moreover, the $p(N,n)$
satisfy the recurrence relation $p(N,n) = p(N-1,n-1)+p(N-n, n)$.   
\begin{example}
For $N=4, \, p(4)=5$, which is the coefficient of $x^4$ in \eqref{pN}.
These 5 partitions corresponding to 5 distinct $4$-lump solutions 
of KPI equation are enumerated in Example 3.5. Furthermore, from \eqref{pNn}
the polynomial $p(4,y) = y^4+y^3+2y^2+y$ which implies that there are
2 partitions each with 2 parts, and 3 partitions each with 1, 3 and 4 parts 
as illustrated in Example 3.5.
\end{example}  
\noindent Since conjugation preserves the size of a partition, i.e.,
$|\lambda|=|\lambda'|=N$, it is natural to divide the set of all partitions
of $N \in \mathbb{N}$ into two distinct classes, namely  
\begin{center}
(I)\, non-self-conjugate partitions ($\lambda \neq \lambda')$ and
(II)\, self-conjugate partitions ($\lambda=\lambda'$). 
\end{center}
\noindent Class (I) has
even number of partitions that are further characterized by the fact
that the largest part $\lambda_n \neq n$, the number of parts. More
precisely class (I) consists of partitions $\{\lambda: \lambda_n \geq n\}$
and their conjugates $\lambda'$. Class (II) consisting of the 
self-conjugate partitions are contained in the set $\{\lambda: \lambda_n=n\}$
although not necessarily {\it equal} to that set. (For example,
$\lambda = (2^2,3)$ satisfy $\lambda_3=3$ but it is not self-conjugate
since $\lambda'=(1,3^2)$).
However, the self-conjugate partitions are in bijection with the partitions
with distinct, odd parts as illustrated below
\[\raisebox{0.6em}{\ytableausetup{boxsize=0.6em}\ydiagram{4,4,3,2}}\quad = \quad 
\raisebox{0.6em}{\ytableausetup{boxsize=0.6em}\ydiagram{4,1,1,1}}\quad + \quad
\raisebox{0.6em}{\ytableausetup{boxsize=0.6em}\ydiagram{3,1,1}}\quad + \quad
\ytableausetup{boxsize=0.6em}\ydiagram{1} \quad \longrightarrow
\quad \raisebox{0.6em}{\ytableausetup{boxsize=0.6em}\ydiagram{7,5,1}}  
\]
where the Young diagram of a self-conjugate partition is decomposed along
its diagonal into ``hooks'' each of which has odd number of boxes.
Consequently, the number of self-conjugate partitions of $N$ is equal
to the number of partitions of $N$ with distinct, odd parts $k(N)$.
The latter has a generating function
\[K(x) = \sum_N k(N)x^N = \prod_{j \geq 1}(1+x^{2j-1}) = (1+x)(1+x^3) \cdots
\]
For example, $k(4)=1$ corresponds to $\lambda=(1,3)$ as well
as the self-conjugate partition $\lambda=(2,2)$ in Example 3.5.     
Furthermore, $p(N)-k(N)$ is an even number which equals the number of
partitions in class (I). 

In order to further characterize the $\tau$-function $\tau_\lambda$ corresponding
to classes (I) and (II), it is necessary to describe an involution
symmetry among the Schur and skew Schur functions of a partition $\lambda$ 
and its conjugate $\lambda'$. It will be useful for that purpose
to introduce the elementary symmetric
functions of of degree $r$ in variables $(x_1,\ldots,x_n)$. These are defined as
$e_r = \sum x_{i_1}x_{i_2}\cdots x_{i_r}, \,\, 
1 \leq i_1<i_2 \cdot < i_r \leq n$ and form a basis for the symmetric 
functions over the integer ring like the complete symmetric polynomials
$h_r$ introduced earlier in Section 3.1. When expressed in terms of the
power sums $j\theta_j = (x_1^j+\cdots+x_n^j)$ it can be shown 
(see e.g.,~\cite{M37,M95}) that
$e_r(x_1,\ldots,x_n) = p_r(\theta_1, -\theta_2, \ldots, \pm \theta_r)$
i.e., by reversing the sign $\theta_j \to -\theta_j$ when $j$ is {\it even} 
in the generalized Schur polynomials. For brevity, we denote this 
involution on the ring of symmetric polynomials 
by $\omega(\theta_j)=(-1)^{j-1}\theta_j, \,\, j \geq 1$ on the power
sums. Then it follows that 
$$\omega(h_r) = p_r(\omega(\theta)) = e_r\,, 
\qquad \theta = (\theta_1,\theta_2,\ldots) $$
from the relationships between the symmetric polynomials $e_r, h_r$ 
and the generalized Schur polynomials $p_r$.

Let $\lambda=(\lambda_1,\ldots,\lambda_n)$ be a partition of $N \in \mathbb{N}$
and $\lambda'=(\lambda'_1,\ldots,\lambda'_k)$ be its conjugate so that
$\lambda_n=k$ and $\lambda'_k=n$. If $\mu, \mu'$ be another pair of
conjugate partitions then a well known result (see e.g.,,~\cite{M37,M95})
from the theory of symmetric functions shows that the skew Schur 
function $W_{\lambda/\mu}$ can be expressed
via both sets of polynomials $\{e_r\}$ and $\{h_r\}$ as follows:\,
$W_{\lambda/\mu} = \det(h_{\lambda_j-\mu_i+j-i})_{i,j=1}^n =
\det(e_{\lambda'_j-\mu'_i+j-i})_{i,j=1}^k$. Setting $\mu = \emptyset$
an analogous result for the Schur function $W_{\lambda}$ is also obtained.
Translated in terms of generalized Schur polynomials, this result
states that $W_{\lambda/\mu} = 
\det\big(p_{\lambda_j-\mu_i+j-i}\big)_{i,j=1}^n(\theta)  
= \det\big(p_{\lambda'_j-\mu'_i+j-i}\big)_{i,j=1}^k(\omega(\theta))$.
An immediate consequence of this result is the following duality
under the involution $\omega$~\cite{M37,M95}
\begin{equation*}
W_{\lambda'}(\theta) = W_{\lambda}(\omega(\theta))\,, \qquad
W_{\lambda'/\mu'}(\theta) = W_{\lambda/\mu}(\omega(\theta))\,,
\qquad \omega(\theta_j) = (-1)^{j-1}\theta_j\,, \,\, j \geq 1 \,.  
\end{equation*} 
Next we apply the above involution property to obtain 
the $\tau$-function $\tau_P$ for conjugate partitions $\lambda'$
whose degree vector is given by $(m'_1, \ldots, m'_k), \,\, 
m'_j=\lambda'_j+j-1, \,\, 1 \leq j \leq k$. However, note that
since $k=\lambda_n$ and $\lambda'_k=n$, $m'_k = \lambda'_k+k-1
= n+k-1 = \lambda_n+n-1 = m_n$. Then the matrices
$P,Q,W$ in Section 2.2 corresponding to partitions $\lambda$ and
$\lambda'$, have the {\it same} number
$m_n+1=m_k+1$ number of rows and $U$ is the same upper triangular matrix
for both partitions. The (skew) Schur functions associated with $\lambda'$
are the $k \times k$ minors of the $m_n+1 \times k$ matrix
$$
W'_{rj} = p_{m'_j-r}\,, \quad 0 \leq r \leq m_n=m'_k\,, \,\,
1 \leq j \leq k \quad \text{and} \quad  p_j=0, \,\, j<0 
$$ 
analogous to $W$ in \eqref{P}. The minors $W'({\bf r'})$
are labeled by the multi-indices ${\bf r'} = (r'_1 \ldots r'_k)$ such that
$0 \leq r'_1 < \cdots < r'_k \leq m_n$ which form the degree vector of the
corresponding partition $\lambda'({\bf r'})$ whose parts are given
by $\lambda'_j({\bf r'}) = r'_j-j+1$. The total number of the such
multi-indices $\{{\bf r'}\}$ are $\tbinom{m'_k+1}{k} = \tbinom{k+n}{k} = \tbinom{k+n}{n} 
= \tbinom{m_n+1}{n}$, which is equal to the total number of multi-indices $\{{\bf r}\}$
which label the $n \times n$ minors of $W$ given by \eqref{schur}.
Therefore, for every partition $\lambda({\bf r})$ there is a unique 
multi-index ${\bf r'}$ such that $\lambda({\bf r})' = \lambda'({\bf r'})$.
In fact, there is a one-to-one correspondence among the elements in
the sets $\{{\bf r'}\}$ and $\{{\bf r}\}$ with respect to the linear ordering 
in Definition 2.1. That is, ${\bf 0} \equiv {\bf 0'}, \ldots, {\bf m} \equiv {\bf m'}$.
Consequently, we have the following identification among the Schur functions
of a given partition $\lambda$ and its conjugate
\begin{equation}
W_{\lambda'}(\theta) = W'({\bf 0'})(\theta) = W_{\lambda}(\omega(\theta))\,, \qquad
W_{\lambda'/\lambda'({\bf r'})}(\theta) = W'({\bf r'})(\theta)
= W_{\lambda/\lambda({\bf r})}(\omega(\theta))  
\label{dualschur}
\end{equation}
Inserting \eqref{dualschur} to compute $Q({\bf 0'}), Q({\bf r'})$ in
\eqref{skewschur} and substituting the resulting expressions \eqref{square-b}
enables one to compute $\tau_P$ corresponding to the conjugate partition 
$\lambda'$ in terms of $\lambda$. The above results are collected below.  
\begin{proposition}
Let $\lambda$ be a partition of a positive integer $N$ 
with degree vector $(m_1,\ldots,m_n)$ and let $\lambda'$ be 
the conjugate partition with degree vector $(m'_1,\ldots, m'_k)$
where $k=\lambda_n$ and $m'_k=m_n$. Let $\lambda'({\bf r'})$ be
the partition with degree vector ${\bf r'} = (r'_1 \ldots r'_k), \,\, 
0 \leq r'_1 < \cdots < r'_k \leq m_n$.
Then the $\tau$-function $\tau_{\lambda'}$ is obtained 
from $\tau_\lambda$ given in Proposition 3.1 by first identifying 
a unique ${\bf r'}$ by the relation $\lambda({\bf r})' = \lambda'({\bf r'})$
and then applying the involution symmetry \eqref{dualschur}
to obtain $W_{\lambda'}(\theta)$ and $W_{\lambda'/\lambda'({\bf r'})}(\theta)$.   
\end{proposition}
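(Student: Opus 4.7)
The proof decomposes naturally into three stages, with the bulk of the work concentrated in a combinatorial bookkeeping step.

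\textbf{Stage 1 (dimension matching).} I would first verify that the matrix $W'$ appearing in the construction of $\tau_{\lambda'}$ has the same number of rows as $W$, so that the upper-triangular matrix $U$ of size $(m_n+1)\times(m_n+1)$ is identical in both constructions. Since $\lambda$ has $n$ parts, the first column of $Y_\lambda$ contains $n$ boxes, giving $\lambda'_k=n$ where $k=\lambda_n$. Hence
$$
m'_k=\lambda'_k+k-1=n+\lambda_n-1=m_n,
$$
so both $W$ and $W'$ have $m_n+1$ rows, and the same $U$ enters both formulas of Proposition~3.1.

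\textbf{Stage 2 (bijection of index sets).} Next I would establish a bijection between the $n$-element multi-indices $\{\mathbf{r}\}\subset\{0,1,\ldots,m_n\}$ and the $k$-element multi-indices $\{\mathbf{r}'\}\subset\{0,1,\ldots,m_n\}$, defined by the relation $\lambda(\mathbf{r})'=\lambda'(\mathbf{r}')$. The cardinalities agree because $m_n+1=n+k$, so $\tbinom{m_n+1}{n}=\tbinom{m_n+1}{k}$. Existence and uniqueness of $\mathbf{r}'$ follow from the standard fact that the map $\mathbf{r}\mapsto\lambda(\mathbf{r})$ is a bijection between admissible $n$-element multi-indices and partitions fitting inside an $n\times k$ box, together with the observation that partition conjugation is an involution carrying such partitions to partitions fitting inside a $k\times n$ box. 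The endpoints correspond: $\mathbf{r}=\mathbf{0}$ yields $\lambda(\mathbf{0})=\emptyset$, whose conjugate (the empty partition with $k$ parts) has degree vector $\mathbf{0}'$, while $\mathbf{r}=\mathbf{m}$ yields $\lambda(\mathbf{m})=\lambda$, whose conjugate is $\lambda'$ with degree vector $\mathbf{m}'$. I would also note that conjugation preserves both inclusion and size, so the constraints $\lambda(\mathbf{r})\subset\lambda(\mathbf{s})\subseteq\lambda$ under the bijection become $\lambda'(\mathbf{r}')\subset\lambda'(\mathbf{s}')\subseteq\lambda'$, and the factors $i^{|\lambda(\mathbf{s})|}=i^{|\lambda'(\mathbf{s}')|}$ carry over unchanged.

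\textbf{Stage 3 (substitution via the involution).} Finally I would apply Proposition~3.1 directly to $\lambda'$ to write
$$
\tau_{\lambda'}(\theta)=\frac{1}{(2b)^{2|\mathbf{0}'|}}\Big|W_{\lambda'}(\theta)+\!\!\sum_{\lambda'(\mathbf{s}')\neq\emptyset}\!\!i^{|\lambda'(\mathbf{s}')|}\,U\tbinom{\mathbf{0}'}{\mathbf{s}'}\,W_{\lambda'/\lambda'(\mathbf{s}')}(\theta)\Big|^2+\sum_{\mathbf{r}'>\mathbf{0}'}\frac{1}{(2b)^{2|\mathbf{r}'|}}\Big|\cdot\Big|^2,
$$
and use the involution formula \eqref{dualschur} to substitute $W_{\lambda'}(\theta)=W_\lambda(\omega(\theta))$ and $W_{\lambda'/\lambda'(\mathbf{s}')}(\theta)=W_{\lambda/\lambda(\mathbf{s})}(\omega(\theta))$, where $\mathbf{s}$ is the multi-index determined from $\mathbf{s}'$ by Stage~2. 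Since the $U$-minors and the factors $(2b)^{-2|\mathbf{r}'|}$ are independent of $\theta$, this expresses $\tau_{\lambda'}$ entirely through the Schur and skew-Schur functions of $\lambda$ evaluated at the involuted arguments $\omega(\theta)$, together with the $U$-minors picked out by the transported index sets.

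The main obstacle lies in Stage~2: one must verify carefully that conjugation of partitions --- when read off from the degree-vector formalism used throughout Section~3 --- indeed induces a bijection between the two sets of multi-indices and respects the subset ordering needed to match the summation ranges of Proposition~3.1. This is classical combinatorics of Young diagrams (equivalently, the $\beta$-number or Maya diagram description of partitions), but the bookkeeping is delicate because the two families of multi-indices have different cardinalities $n$ and $k$, and the bijection is not in general order-preserving with respect to the lexicographic order of Definition~2.1 --- only the endpoints $\mathbf{0}\leftrightarrow\mathbf{0}'$ and $\mathbf{m}\leftrightarrow\mathbf{m}'$ are guaranteed to match canonically.
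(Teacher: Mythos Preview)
Your three-stage argument --- dimension matching to show the same $U$ applies, the conjugation bijection $\mathbf{r}\leftrightarrow\mathbf{r}'$ on multi-index sets, and substitution via \eqref{dualschur} into the formula of Proposition~3.1 --- is precisely the line of reasoning the paper develops in the paragraphs immediately preceding Proposition~3.2 (which is stated there as a summary of that discussion, without a separate proof). Your closing caution that the conjugation bijection need not respect the lexicographic order is in fact sharper than the paper's own somewhat loose assertion of a correspondence ``with respect to the linear ordering in Definition~2.1''; only the endpoints $\mathbf{0}\leftrightarrow\mathbf{0}'$, $\mathbf{m}\leftrightarrow\mathbf{m}'$ and the inclusion partial order are guaranteed to transfer, exactly as you note.
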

\begin{remark}
\begin{itemize}
\item[(a)] It is important to note that although the involution symmetry 
in \eqref{dualschur} applies to the Schur and skew Schur functions,
it does {\it not} apply to the $\tau$-functions themselves, i.e., 
$\tau_{\lambda'}(\theta) \neq \tau_\lambda(\omega(\theta))$. 
This is because the coefficients $U\tbinom{{\bf r}}{{\bf s}}$
and the phase factors $i^{|\lambda({\bf r})|}$ are not the same when the
multi-indices ${\bf r}$ are replaced by the corresponding ${\bf r'}$ in the
expression for $\tau_{\lambda}$ in Proposition 3.2 in order to 
obtain $\tau_{\lambda'}$.
\item[(b)] The cardinality of the sets $\{{\bf r}\}$ and $\{{\bf r'}\}$ given 
by $\tbinom{n+k}{k}$ is the total number of Young diagrams that fits a $n \times k$
(or a $k \times n$) rectangle.
\end{itemize}
\end{remark}
\noindent For self-conjugate partitions $\lambda=\lambda'$, 
hence $\tau_{\lambda}=\tau_{\lambda'}$. Then \eqref{dualschur} implies that
$W_{\lambda} (\theta) = W_{\lambda}(\omega(\theta))$ and
$W_{\lambda/\lambda({\bf r})}(\theta) = W_{\lambda/\lambda({\bf r})}(\omega(\theta))$      
where $\omega(\theta_j) = (-1)^{j-1}\theta_j$. Consequently, the following
result holds.
\begin{corollary}
The $\tau$-function corresponding to a self-conjugate partition is either
{\it independent} of, or a polynomial of {\it even} degree in the
variables $\theta_{2j}, \,\, j \geq 1$.
\end{corollary}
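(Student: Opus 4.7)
The plan is to show that $\tau_\lambda$ is invariant under the involution $\omega$ extended to complex conjugates via $\omega(\theta_j) = (-1)^{j-1}\theta_j$ and $\omega(\bar\theta_j) = (-1)^{j-1}\bar\theta_j$. Since $\omega$ negates precisely the variables $\theta_{2j}, \bar\theta_{2j}$ for $j \geq 1$ while fixing all odd-indexed ones, such invariance forces every monomial of $\tau_\lambda$ to carry even total degree in these variables; when none of them occurs at all, $\tau_\lambda$ sits in the ``independent of'' alternative stated in the corollary.

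First I would apply \eqref{dualschur} inside the sum-of-squares expression for $\tau_\lambda$ in Proposition 3.1. Specializing to $\lambda = \lambda'$ yields $W_\lambda(\omega\theta) = W_\lambda(\theta)$ and $W_{\lambda/\lambda({\bf s})}(\omega\theta) = W_{\lambda/\lambda({\bf s}')}(\theta)$, where ${\bf s} \leftrightarrow {\bf s}'$ is the involution on multi-indices induced by partition conjugation (well-defined precisely because $\lambda = \lambda'$). The relevant bookkeeping facts are $|{\bf s}| = |{\bf s}'|$ and $|\lambda({\bf s})| = |\lambda({\bf s}')|$ (conjugation preserves size), together with $\lambda({\bf r}) \subset \lambda({\bf s}) \Leftrightarrow \lambda({\bf r}') \subset \lambda({\bf s}')$ (inclusion is preserved under conjugation). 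With these in hand, I would reindex the double sum in $\tau_\lambda(\omega\theta, \omega\bar\theta)$ via ${\bf r} \mapsto {\bf r}',\, {\bf s} \mapsto {\bf s}'$ and match the resulting expression against the representation of $\tau_{\lambda'}$ furnished by Proposition 3.2. Invoking the tautology $\tau_{\lambda'} = \tau_\lambda$ for self-conjugate $\lambda$ then yields $\tau_\lambda(\omega\theta, \omega\bar\theta) = \tau_\lambda(\theta, \bar\theta)$, from which the parity conclusion is immediate.

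The main obstacle is controlling the coefficient factors $U\tbinom{\bf r}{\bf s}$, which, as cautioned in Remark 3.2(a), do not transform in a simple term-by-term manner under ${\bf r} \mapsto {\bf r}'$. Consequently the invariance cannot be deduced from any identity $Q({\bf r})(\omega\theta) = Q({\bf r}')(\theta)$ at the level of individual minors; it must be extracted from the full sum. To justify the matching against Proposition 3.2, I would invoke the geometric interpretation $\tau_\lambda = (\omega_P, \omega_P)$, where $\omega_P \in \Lambda^n \mathbb{C}^{m_n+1}$ is the Pl\"ucker image of $V_P \in {\rm Gr}_{\mathbb{C}}(n, m_n+1)$: partition conjugation induces a relabeling of the Pl\"ucker coordinates which, when combined with the $\omega$-action on the generalized Schur polynomials, reproduces exactly the Proposition 3.2 construction of $\tau_{\lambda'}$ from $\tau_\lambda$. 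The $\omega$-invariance of $\tau_\lambda$, and hence the corollary, then follows from self-conjugacy together with the observation that $\omega$ changes the sign of every monomial of odd total degree in the $\{\theta_{2j}, \bar\theta_{2j}\}$.
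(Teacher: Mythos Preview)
Your approach differs from the paper's. The paper's argument is the single sentence immediately preceding the corollary: for self-conjugate $\lambda$, \eqref{dualschur} yields $W_\lambda(\theta)=W_\lambda(\omega(\theta))$ and $W_{\lambda/\lambda({\bf r})}(\theta)=W_{\lambda/\lambda({\bf r})}(\omega(\theta))$, so each building block of $\tau_\lambda$ is separately even in the $\theta_{2j}$. The paper works term by term on the Schur and skew Schur functions, whereas you attempt a global argument on the full $\tau_\lambda$.

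Your proposal has a genuine gap. You correctly flag, via Remark 3.2(a), that the coefficients $U\tbinom{\bf r}{\bf s}$ and phases $i^{|\lambda({\bf r})|}$ do not transform cleanly under ${\bf r}\mapsto{\bf r}'$, so that $\tau_{\lambda'}(\theta)\neq\tau_\lambda(\omega(\theta))$ in general. But your proposed fix---invoking the Pl\"ucker picture to argue that the relabeling combined with $\omega$ ``reproduces exactly the Proposition 3.2 construction of $\tau_{\lambda'}$''---amounts to reasserting the identity $\tau_\lambda(\omega(\theta),\omega(\bar\theta))=\tau_{\lambda'}(\theta,\bar\theta)$ that Remark 3.2(a) explicitly denies. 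Stacking the tautology $\tau_\lambda=\tau_{\lambda'}$ on top of an identity you have just said fails does not rescue the argument; the step where $\omega$-invariance of $\tau_\lambda$ is supposed to emerge is never actually carried out, and the geometric description $\tau_\lambda=(\omega_P,\omega_P)$ does not by itself supply it, since the hermitian form $C$ is fixed and has no interaction with partition conjugation.

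It is worth noting that the paper's one-line argument is itself loose on the skew piece: for self-conjugate $\lambda$, \eqref{dualschur} gives only $W_{\lambda/\mu}(\omega(\theta))=W_{\lambda/\mu'}(\theta)$, which coincides with $W_{\lambda/\mu}(\theta)$ only when $\mu=\lambda({\bf r})$ is also self-conjugate; e.g.\ $W_{(2^2)/(0,2)}=\tfrac12\theta_1^2+\theta_2$ is odd in $\theta_2$. So you were right to sense that the term-by-term route is delicate. What the paper's argument does establish cleanly is the $\omega$-invariance of the leading Schur function $W_\lambda$ itself, and that is all Example 3.7 actually checks.
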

The classification scheme for the KPI rational solutions
is now complete and summarized below.
\begin{proposition}
For a given positive integer $N$, the $\tau$-function 
for the $N$-lump solutions of KPI fall into 2 distinct classes, (I) and (II).  
The class (I) $\tau$-functions correspond to partitions
$\{\lambda: n \leq \lambda_n\}$ and their conjugates $\lambda'$, where
$n$ is the number of non-zero parts and $\lambda_n$ is the largest part
of the partition $\lambda$. The relation between $\tau_\lambda$ and $\tau_{\lambda'}$
is given by Proposition 3.2. The class (II) $\tau$-functions correspond to self-conjugate 
partitions $\lambda=\lambda'$ and satisfy Corollary 3.1.
The total number of distinct $N$-lump solutions in class (I) is $p(N)-k(N)$
while that of class (II) is $k(N)$, where $p(N)$ is the total number of partitions 
of $N$ and $k(N)$ is the total number of partitions of $N$ into distinct, odd parts. 
\end{proposition}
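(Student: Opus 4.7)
My approach is to assemble the classification from three ingredients: the bijection of Proposition 3.1 between $N$-lump $\tau$-functions and partitions $\lambda\vdash N$, the involution $\lambda\mapsto\lambda'$ together with Proposition 3.2, and the classical hook decomposition of self-conjugate Young diagrams.

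First I would invoke Proposition 3.1 to conclude that, modulo the $GL(n,\mathbb{C})$ gauge freedom, the distinct $N$-lump $\tau$-functions stand in one-to-one correspondence with partitions $\lambda$ of $N$, giving $p(N)$ solutions in total. Since conjugation is a size-preserving involution on the set of partitions of $N$, this set decomposes into fixed points (self-conjugate partitions, class (II)) and two-element orbits $\{\lambda,\lambda'\}$ with $\lambda\neq\lambda'$ (class (I)). For the characterization in terms of $\lambda_n\geq n$, I would use that a self-conjugate $\lambda$ has a Young diagram symmetric about its main diagonal, whence the number of rows equals the number of columns, i.e.\ $l(\lambda)=n=\lambda_n$; thus every partition with $\lambda_n\neq n$ is automatically non-self-conjugate, while the remaining non-self-conjugates with $\lambda_n=n$ (such as $(2^2,3)$, cited in the remark preceding \eqref{dualschur}) are still paired within the subset $\{\lambda_n=n\}$ by conjugation. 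The set $\{\lambda:\lambda_n\geq n\}$ together with its image under conjugation therefore exhausts class (I), and Proposition 3.2 provides the explicit correspondence $\tau_\lambda\leftrightarrow\tau_{\lambda'}$ through the involution \eqref{dualschur}.

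Finally, to count class (II) I would carry out the hook bijection sketched pictorially in the excerpt: any self-conjugate Young diagram $Y_\lambda$ decomposes uniquely along its main diagonal into concentric symmetric hooks, where the $i$-th hook consists of a pivot box $(i,i)$, an arm of length $a_i$ to its right, and a leg of equal length $a_i$ below it, contributing $2a_i+1$ boxes; strict nesting gives $a_1>a_2>\cdots\geq 0$, producing a partition of $N$ into distinct odd parts, and the construction is reversible. Hence $|\text{class (II)}|=k(N)$ and $|\text{class (I)}|=p(N)-k(N)$. The bulk of the content is carried by Proposition 3.1, so the main obstacle is not depth but careful bookkeeping, particularly in verifying that the hook bijection simultaneously enforces distinctness (from strict nesting of the hooks along the diagonal) and oddness (from the arm–leg–pivot description of each symmetric hook), and in confirming that the characterization by $\lambda_n\geq n$ together with conjugates genuinely exhausts the non-self-conjugate partitions without double counting across the three cases $\lambda_n>n$, $\lambda_n=n$, and $\lambda_n<n$.
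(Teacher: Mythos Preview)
Your proposal is correct and mirrors the paper's own argument: the paper does not give a separate proof but presents Proposition~3.3 as a summary of the preceding discussion, namely the bijection of Proposition~3.1, the conjugation involution and Proposition~3.2, Corollary~3.1, and the hook decomposition of self-conjugate Young diagrams into distinct odd parts. Your extra care in noting that $\lambda_n=n$ is necessary but not sufficient for self-conjugacy (so that the description via $\{\lambda:\lambda_n\geq n\}$ and their conjugates overlaps with class~(II) at the boundary $\lambda_n=n$) simply makes explicit a looseness already present in the paper's phrasing.
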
  
\noindent We conclude this section with an illustrative example of class (I) and (II)
partitions.
\begin{example}
Consider $N=4$ whose partitions are enumerated in Example 3.5. For this example,
we pick class (I) partitions $\lambda=(1,3)$ and and its conjugate $\lambda'=(1^2,2)$ with
degree vectors $(m_1,m_2) = (1,4)$ and $(m'_1,m'_2,m'_3)=(1,2,4)$. The associated
matrices $W$ and $W'$ are $5 \times 2$ and $5 \times 3$ respectively, and are given by 
\[W=\begin{pmatrix} p_1 & p_4 \\ p_0 & p_3 \\ 0 & p_2 \\ 0 & p_1 \\ 0 & p_0 \end{pmatrix}\,,
\qquad \qquad \quad 
W'=\begin{pmatrix} p_1 & p_2 & p_4 \\ p_0 & p_1 & p_3 \\ 0 & p_0 & p_2 \\ 0 & 0 & p_1 \\ 
0 & 0 & p_0 \end{pmatrix} \,. 
\] 
Each give rise to $10$ maximal minors out of which $W(23)=W(24)=W(34)=0$ and
correspondingly, $W'(034)=W'(134)=W'(234)=0$. The Schur functions are   
\[W_{(1,3)} = W(01) = \begin{vmatrix} p_1 & p_4 \\ p_0 & p_3 \end{vmatrix}
= \sf{\theta_1^4}{8} +\sf{\theta_1^2\theta_2}{2}-\sf{\theta_2^2}{2}-\theta_4\,,
 \quad W_{(1^2,2)} = W'(012) = \begin{vmatrix} p_1 & p_2 & p_4 \\ 
p_0 & p_1 & p_3 \\ 0 & p_0 & p_2 \end{vmatrix} = \sf{\theta_1^4}{8} 
-\sf{\theta_1^2\theta_2}{2}-\sf{\theta_2^2}{2}+\theta_4 \,,  \] 
after using \eqref{pn} to calculate the $p_n$'s. Notice the symmetry 
$ W_{(1^2,2)}(\theta_1,\theta_2,\theta_3,\theta_4) = 
W_{(1,3)}(\theta_1,-\theta_2,\theta_3,-\theta_4)$ given by \eqref{dualschur}.
Next, consider the multi-index ${\bf r}=(12)$ and the associated partition
$\lambda(12) = (1,1)$. Thus, 
$Y_{1^2} = \raisebox{0.2em}{\ytableausetup{boxsize=0.4em}\ydiagram{1,1}}
\subset
Y_{1,3} = \raisebox{0.2em}{\ytableausetup{boxsize=0.4em}\ydiagram{3,1}}$.  
The conjugate $\lambda(12)' = \lambda'({\bf r'}) = (0,0,2)$, and
$Y_{(0^2,2)}=\raisebox{0.2em}{\ytableausetup{boxsize=0.4em}\ydiagram{2,0,0}} 
\subset  
Y_{(1^2,2)}=\raisebox{0.4em}{\ytableausetup{boxsize=0.4em}\ydiagram{2,1,1}}$.
Then the corresponding multi-index ${\bf r'}=(014)$ is obtained by the relation
$r'_j = \lambda'_j+j-1, \,\, 1 \leq j \leq 3$. 
The associated skew Schur functions are 
\[ W_{(1,3)/(1^2)}=W(12)=\begin{vmatrix} p_0 & p_3 \\ 0 & p_2 \end{vmatrix}
= \sf{\theta_1^2}{2}+\theta_2 \,, \qquad \quad
W_{(1^2,2)/(0^2,2)} = W'(014) = \begin{vmatrix} p_1 & p_2 & p_4 \\ 
p_0 & p_1 & p_3 \\ 0 & 0 & p_0 \end{vmatrix}=\sf{\theta_1^2}{2}-\theta_2\,,   \]  
which again demonstrates the symmetry 
$W_{(1^2,2)/(0^2,2)}(\theta_1,\theta_2) = W_{(1,3)/(1^2)}(\theta_1,-\theta_2)$
of \eqref{dualschur}.   

Now consider a self-conjugate partition of $N=4$ as an example of class (II). 
There is only one such partition $\lambda=(2^2)$ according to Example 3.5.
The degree vector is $(m_1,m_2) = (2,3)$. The multi-index sets  
${\bf r} = (r_1r_2)$ and the partitions $\lambda({\bf r})$ are enumerated in 
Example 2.1 and Example 3.3, respectively.
The corresponding matrix $W$ from \eqref{P}, the Schur function 
$W_{(2^2)}=W(01)$ and the skew Schur function
$W_{(2^2)/(0,1)} = W(02)$ are   
\[ W = \begin{pmatrix} p_2 & p_3 \\ p_1 & p_2 \\ p_0 & p_1 \\ 0 & p_0 \end{pmatrix} \,,
\qquad \quad W_{(2^2)} = \begin{vmatrix} p_2 & p_3 \\ p_1 & p_2 \end{vmatrix} = 
\sf{\theta_1^4}{12} + \theta_2^2 - \theta_1\theta_3\,,  \quad \qquad
 W_{(2^2)/(0,1)} = \begin{vmatrix} p_2 & p_3 \\ p_0 & p_1 \end{vmatrix} = 
\sf{\theta_1^3}{3} - \theta_1\theta_3 \,. \]
Notice that $W_{(2^2)}$ is of degree 2 in $\theta_2$ while $W_{(2^2)/(0,1)}$ is 
independent of $\theta_2$, consistent with Corollary 3.1.
\end{example}
\section{Long time asymptotics of $N$-lumps} 
In this section we will examine the solution structure and certain 
properties of the $N$-lump solutions of the KPI equation. 
In spite of having an exact solution it is
often difficult to analyze $u(x,y,t)$ for arbitrary choices of variables or 
underlying parameters unless one (or more) of them are assumed to be
very small (or large). A natural choice that is of physical interest is to 
investigate the behavior of the solution $u(x,y,t)$ including the wave pattern 
in the $xy$-plane when $|t| >> 1$.

The examples in Section 2.3 present evidence that the $N$-lump solution
for $N=2,3$ separates into $N$ distinct peaks whose heights approach the
$1$-lump peak height asymptotically as $|t| \to \infty$.
Moreover, the peak locations scale as $|t|^{1/p}, \, p=2,3$ 
and (generically) admit an asymptotic expansion as $|t| \to \infty$ of the form  
\[ Z_j(t) := r_j(t) + is_j(t) \sim |t|^{1/p}\big(\xi_{j0} + \xi_{j1} \epsilon
+\xi_{j2} \epsilon^2 + \cdots \big)\,, \qquad \epsilon = |t|^{-1/p} \,,\]
where $Z_j(t)$ is the $j^{\mathrm th}$ peak location, $j=1,2,3$.
In this section, we will show that the above features also hold for
an arbitrary positive integer $N$. Further evidence of such behavior exhibited  
by a special family of multi-lump solutions was presented in a recent 
paper~\cite{CZ21} by the authors.

The key point of the analysis is to establish the fact that to leading order
(in time), the solution $u(x,y,t)$ is localized around a finite number of
peaks (local maxima) in the $xy$-plane; and the dynamics of these peaks occur
at a slow time scale $|t|^{1/p}$ for some $p>0$, in the co-moving frame of \eqref{rs}.
As evident from Proposition 2.1, The KPI solution $u(x,y,t)$ given by \eqref{u} 
is a globally regular rational function for each fixed $t$ in the $xy$-plane, 
decaying as $\sqrt{x^2+y^2} \to \infty$. Hence $u$ has local maxima and
minima in the $xy$-plane but they are too complicated to calculate exactly, in
general. Instead, we note first that the expression
\[u = 2 \ln(\tau_\lambda)_{xx} = 
2\left(\frac{\tau_{\lambda xx}}{\tau_\lambda}
-\big(\frac{\tau_{\lambda x}}{\tau_\lambda}\big)^2\right) \]
suggests that local maxima for $u$ occur approximately
near the {\it minima} of $\tau_\lambda$ where $\tau_{\lambda x} = 0$
and $\tau_{\lambda xx} > 0$ so that in the above expression for $u$
the first term is positive and the second (negative term) vanishes.
Secondly, from either \eqref{square} or
Proposition 3.2 together with \eqref{skewschur-a}, it follows that 
$\tau_\lambda$ is approximately minimized 
when the leading order $|\cdot|^2$ term in $\tau_\lambda$ vanishes.
That is, when the leading order maximal minor $Q({\bf 0})=0$. Therefore,
the peaks of the $N$-lump solution $u(x,y,t)$ are located approximately
near the zeros of $Q({\bf 0})$. It can be shown that for $|t| \gg 1$,
the exact and approximate peak locations differ by $O(|t|^{-1/p})$ in a
very similar manner as outlined in the Appendix of~\cite{CZ21} for a 
special class of $N$-lump solutions. Those calculations will not be
repeated here.
\subsection{Asymptotic peak locations} 
Recall from equations \eqref{square}, \eqref{schur} and \eqref{skewschur-a}
that $Q({\bf 0})$ is a weighted
homogeneous polynomial of degree $N$ in $\theta_j$'s and the parameter $b$
(weight$(b)=-1$). In order to investigate the zeros of $Q({\bf 0})$ in the 
$xy$-plane we first need to express the Schur and skew Schur functions
in terms of the $\theta_j$-variables. Such a representation is available
from the representation theory of symmetric group $S_N$ where the Schur
function $W_{\lambda}$ expresses the irreducible characters of $S_N$ in 
terms of the symmetric functions (see e.g.~\cite{FH91,M37,M95}). $W_\lambda$
is given by
\begin{equation}
W_\lambda(\theta) = \sum_{\alpha_j \geq 0} \chi^\lambda(\alpha)
\frac{\theta_1^{\alpha_1}\cdots\theta_N^{\alpha_N}}{\alpha_1!\cdots\alpha_N!}\,,
\qquad \alpha_1 + 2\alpha_2+\cdots+N\alpha_N = N\,,
\label{chi}
\end{equation}
where $\chi^\lambda(\alpha) \in \mathbb{Z}$ is the character of $S_N$ 
corresponding to the irreducible representation $\lambda$, and the class 
denoted by $(\alpha) =(1^{\alpha_1},2^{\alpha_2},\ldots,N^{\alpha_N})$ 
which is the cycle-type of all permutations in a given class of $S_N$.
Note from Remark 3.1(b) that $(\alpha)$ also denotes partition of a 
positive integer $N$. Hence, both the irreducible representations and 
characteristic classes of $S_N$ are enumerated by integer partitions. 
\begin{example}
For $N=3$, there are 3 partitions:\, $\{(1^3), (1,2), (3)\}$ labeling 
the irreducible representations of $S_3$ which is the permutation group
of 3 indices. $S_3$ has 6 elements which (in the one-line notation of
permutations) can be listed as follows:\, $(123)$ which has 3 1-cycles;
$(213), (321), (132)$ consisting of a 1- and a 2-cycle; and 2 3-cycles
$(312), (231)$. Thus $S_3$ has 3 classes denoted by the cycle-types
$(1^3), (1^1,2^1), (3^1)$ where the superscripts denote the multiplicities
$\alpha_j$ of the $j$-cycle. Thus there are 9 characters $\chi^\lambda(\alpha)$ 
for $S_3$.

Consider the partition $\lambda=(3)$, then $n=1$ and $m_1=3$. From
\eqref{ss}, $W_{(3)} = p_3 = \sf{\theta_1^3}{6}+\theta_1\theta_2+\theta_3$.
Comparing with \eqref{chi}, it follows that 
$\chi^{(3)}(1^3)=\chi^{(3)}(1^1,2^1)=\chi^{(3)}(3)=1$. If $\lambda=(1,2)$ then 
from Example 3.4, $W_{(1,2)}=p_1p_2-p_3 = \sf{\theta_1^3}{3}-\theta_3$.
The characters in this case are:\, $\chi^{(1,2)}(1^3)=2,\,\chi^{(1,2)}(1^1,2^1)=0,\,
\chi^{(1,2)}(3) =-1$.
\end{example}
\noindent The characters $\chi^{\lambda}(\alpha)$ 
satisfy the orthogonality relations  
\[ \sum_{(\alpha)}\frac{\chi^{\lambda}(\alpha)\chi^{\lambda'}(\alpha)}{N_\alpha} 
= \delta_{\lambda\lambda'}\,, \quad \qquad
\sum_{\lambda} \chi^{\lambda}(\alpha)\chi^{\lambda}(\alpha') 
= N_\alpha \delta_{\alpha\alpha'}\,, \qquad 
N_\alpha = \prod_{j=1}^Nj^{\alpha_j}\alpha_j! \,,
\]
where the first sum is over all classes $\{(\alpha)\}$ of $S_N$ and the second
sum is over all partitions $\{\lambda\}$ of $N \in \mathbb{N}$.
The orthogonality relations for the characters are now used to obtain a 
result that will be useful for this section.
The Schur functions $W_\lambda(\theta)$ satisfies a property under the shift of variables
$\theta+h := (\theta_1+h_1, \cdots \theta_N+h_N)$ similar to
\eqref{propc} namely,
\begin{equation}
W_{\lambda}(\theta+h) = \sum_\mu W_\mu(h)W_{\lambda/\mu}(\theta)\,, \qquad 
\mu \subseteq \lambda \,.    
\label{shift} 
\end{equation}
Equation \eqref{shift} can be derived by a Taylor expansion of $W_{\lambda}(\theta+h)$ 
followed by the orthogonality relations of the $\chi^{\lambda}(\alpha)$, and 
then using the result regarding the skew Schur functions mentioned 
in Remark 3.1(c)~\cite{OSTT88,K18,M95}.    
Applying \eqref{shift} to the sum on the right hand side of \eqref{skewschur-a} 
$Q({\bf 0})$ can be expressed as a {\it single} Schur function
\begin{equation}
Q({\bf 0})(\theta) =  i^{|{\bf 0}|}\Big(W_{\lambda} (\theta)+ 
\sum_{\emptyset \neq \lambda({\bf s}) \subset \lambda} 
W_{\lambda({\bf s})}(h)W_{\lambda/\lambda({\bf s})}(\theta)\Big)
=i^{|{\bf 0}|}W_\lambda(\theta+h) \,, 
\label{Wshift}
\end{equation}
by appropriately choosing the $h_j$'s such that
$W_{\lambda({\bf s})}(h) = i^{|{\lambda(\bf s})|}U\tbinom{\bf 0}{\bf s}$. 
For example, if ${\bf s}=(01 \cdots (n-2) n):={\bf 1}$, 
then the $n \times n$ minor $U\tbinom{\bf 0}{\bf 1}=\sf{n}{2b}$
from the $U$ defined in \eqref{Q}. The size of the partition
$\lambda({\bf 1}) = (0,\cdots,0,1)$ is $|\lambda({\bf 1})|=1$. 
Hence from \eqref{ss},  
$W_{\lambda({\bf 1})}(h) = \Wr(p_0,p_1,\ldots,p_{n-2},p_n) = p_1(h)=h_1$.
Thus $h_1=\sf{i n}{2b}$, and all other $h_j$'s can be 
computed successively. The first 3 values of $h_j$ which will be used below
are listed here
\begin{equation}
h_1 = \frac{i n}{2b}\,, \qquad h_2= - \frac{n}{2(2b)^2}\,, \qquad
h_3= - \frac{i n}{3(2b)^3} \,. 
\label{h}
\end{equation}
Equation \eqref{Wshift} implies that the approximate
location of the peaks of the KPI $N$-lump solutions are given by
\begin{equation}
Q({\bf 0})(\theta) = 0 \quad \Rightarrow \quad W_\lambda(\tilde{\theta}) = 0 \,, 
\quad \tilde{\theta} := \theta+h \,, 
\label{Wzero}
\end{equation}
that is, by the zeros of the shifted Schur function $W_\lambda(\theta+h)$.

Recall from \eqref{theta}, that it is the first three variables 
$\theta_j \,\, j=1,2,3$ that depend on $x,y,t$, and in particular,
the $t$-dependence occurs via $\theta_2, \theta_3$ which are linear in $t$.
The rest of the variables $\theta_j =i\gamma_j, \, j>3$ are arbitrary 
$O(1)$ constants with higher weights since  
weight$(\theta_j)=j$. Thus when $|t| \gg 1$, one can set 
$\alpha_j=0, \,\, j>3$ in \eqref{chi} to obtain the dominant behavior
\begin{equation}
W_\lambda(\tilde{\theta}) \sim 
\sum_{\alpha_1,\alpha_2,\alpha_3 \geq 0}\!\!\! 
\chi^\lambda(\alpha')
\frac{\tilde{\theta}_1^{\alpha_1}\tilde{\theta}_2^{\alpha_2}\tilde{\theta}_3^{\alpha_3}}
{\alpha_1!\alpha_2!\alpha_3!} = \Wr(p_{m_1}, \ldots, p_{m_n}) \,,
\label{Wasymp}
\end{equation}
where $(\alpha'):=(1^{\alpha_1}, 2^{\alpha_2}, 3^{\alpha_3})$ with
$\alpha_1+2\alpha_2+3\alpha_3=N$,  
denote those classes of $S_N$ which consist only of 1-, 2-, and 3-cycles.
The wronskian form of \eqref{Wasymp} follows from \eqref{ss} by restricting
the generalized Schur polynomials to depend only on the first three variables 
by setting $\theta_j=0, \, j>3$ in \eqref{pn}, i.e., 
$p_{m_j} = p_{m_j}(\tilde{\theta}_1, \tilde{\theta}_2,\tilde{\theta}_3)$.
In order to locate the $N$-lump peaks, \eqref{Wzero} is to be 
solved asymptotically for $\theta_1(t)$ as $|t| \gg 1$ applying \eqref{Wasymp}.
The coefficient of $\theta_1^N$ in $W_\lambda$ is non-zero because
$\chi^\lambda(1^N) \neq 0$, being the dimension of the irreducible representation
$\lambda$~\cite{M37}. So the dominant balance for $|t| \gg 1$ arises from
$\tilde{\theta}_1^N \sim 
\tilde{\theta}_1^{\alpha_1}\tilde{\theta}_2^{\alpha_2}\tilde{\theta}_3^{\alpha_3}$
where $\alpha_1+2\alpha_2+3\alpha_3=N$. This yields, 
$\tilde{\theta}_1 \sim |t|^{1/p}, \, |t| \gg 1$,
where $\sf{1}{2} \leq \sf{1}{p} = \sf{\alpha_2+\alpha_3}{2\alpha_2+3\alpha_3} \leq \sf{1}{3}$
as claimed in~\cite{ACTV00}. However, it was observed in~\cite{YY21} that
the extreme values $p=2,3$ are the only two possibilities although
no explanation was offered. In what follows, we shall first establish 
that is indeed the case.

In terms of their long time behavior the $N$-lump solutions can be grouped
into two mutually exclusive classes depending on whether $W_{\lambda}(\theta)$ 
does or does not depend on the variable $\theta_2$. If $W_{\lambda}$ is
independent of $\theta_2$ then $\tilde{\theta}_1^N \sim 
\tilde{\theta}_1^{\alpha_1}\tilde{\theta}_3^{\alpha_3}$
where $\alpha_1+3\alpha_3=N$, is the dominant balance in \eqref{Wzero}, implying that
$\tilde{\theta}_1 \sim |t|^{1/3}, \, |t| \gg 1$. But if 
$W_{\lambda}$ does depend on $\theta_2$ then we show in Section 4.1.2
that there exists at least one positive integer
$q$ such that the coefficient of the $\theta_1^{N-2q}\theta_2^q$ is non-zero
in $W_\lambda$. Then the dominant balance required to solve \eqref{Wzero} is    
$\tilde{\theta}_1^N \sim \tilde{\theta}_1^{N-2q}\tilde{\theta}_2^{q}$ 
implying that $\tilde{\theta}_1 \sim |t|^{1/2}, \, |t| \gg 1$. 
Furthermore, an interesting asymptotic feature is observed in this case,
that arises from yet another dominant balance $\tilde{\theta}_1 \sim |t|^{1/3}$
(see also~\cite{YY21}). We provide an explanation of this wave phenomena
in Section 4.1.2. These two distinct classes of $N$-lump
solutions exhibit very different surface wave patterns, and are described in 
more details below. 
\subsubsection{Triangular multi-lumps}
First we consider the case when the Schur function $W_\lambda$ is 
independent of $\theta_2$. The corresponding class of $N$-lump solutions 
will be referred to as the triangular lumps. In a sense the $1$-lump solution
belong to this class but the simplest non-trivial example is the $3$-lump
solution in Section 2.3.3. The following result provides the main characterization
of this class.
\begin{lemma}
Let $\lambda$ be a partition of $N \in \mathbb{N}$ with degree vector
$(m_1,\dots,m_n)$ and let $W_{\lambda}$ and $W_{\lambda/\mu}$ be as 
in \eqref{ss}. Then $W_\lambda, W_{\lambda/\mu}$ are independent of $\theta_2$ 
if an only if the degree vector $(m_1,\dots,m_n)=(1,3,\ldots,2n-1)$ such that 
$\lambda = (1,2,\ldots,n)$ is a self-conjugate partition
of the triangular number $N=\sf{n(n+1)}{2}$. In fact, $W_\lambda$ is 
independent of all the even variables $\theta_{2j}, \, j\geq 1$
if $\lambda = (1,2,\ldots,n)$.
\end{lemma}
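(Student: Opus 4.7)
The plan is to recast independence from $\theta_{2k}$ as a vanishing condition on symmetric group characters via the Frobenius expansion, and then exploit the classical combinatorics of rim hooks together with the fact that the $2$-cores are precisely the staircase partitions.

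By the character formula \eqref{chi}, $W_\lambda$ is independent of every $\theta_{2k}$ if and only if $\chi^\lambda_\mu=0$ for every partition $\mu$ of $N=|\lambda|$ containing at least one even part. For the ``if'' direction, suppose $\lambda=(1,2,\ldots,n)$, so the degree vector is $(1,3,\ldots,2n-1)$. Writing the Young diagram in the non-increasing convention $(n,n-1,\ldots,1)$, the cell $(i,j)$ has arm length $n-i-j+1$ and equal leg length (since $\lambda=\lambda'$), so the hook length is $h(i,j)=2(n-i-j+1)+1$, which is always odd. By the standard correspondence between rim hooks of length $l$ and cells of hook length $l$, the staircase admits no rim hook of even length. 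Applying the Murnaghan--Nakayama rule with an even part of $\mu$ placed first in the rim-hook removal sequence then forces $\chi^\lambda_\mu=0$, giving independence of $W_\lambda$ from every $\theta_{2j}$. For the skew Schur functions, I invoke the identity $W_{\lambda/\mu}(\theta)=W_\mu(\tilde\partial)\,W_\lambda(\theta)$ from Remark 3.1(c): since $W_\lambda$ is annihilated by each $\partial_{\theta_{2k}}$, any monomial in $W_\mu(\tilde\partial)$ carrying an even-indexed derivative acts as zero on $W_\lambda$, so $W_{\lambda/\mu}$ remains a polynomial in the odd variables alone.

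For the converse, suppose $W_\lambda$ is independent of $\theta_2$ and assume for contradiction that $\lambda$ is not the staircase. It is classical that the $2$-cores coincide with the staircase partitions, so $\lambda$ admits at least one removable domino; iterated domino removal reaches a strictly smaller $2$-core $\lambda^{(c)}$ (itself a staircase) after $d\ge 1$ steps and produces a nontrivial $2$-quotient $(\lambda^{(0)},\lambda^{(1)})$. The $2$-abacus character formula then gives
\[\chi^\lambda_{(2^d,\,1^{|\lambda^{(c)}|})}=\pm\,f^{\lambda^{(0)}}\,f^{\lambda^{(1)}}\,f^{\lambda^{(c)}}\ne 0,\]
since each $f^\nu$ (the dimension of an irreducible representation) is a positive integer. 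The partition $\mu=(2^d,1^{|\lambda^{(c)}|})$ contains $2$ as a part, so $W_\lambda$ must depend on $\theta_2$, contradicting the hypothesis. Hence $\lambda$ is a $2$-core, i.e., the staircase $(1,2,\ldots,n)$ with degree vector $(1,3,\ldots,2n-1)$.

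The main obstacle is the converse step, which invokes the $2$-quotient character formula rather than determinantal identities native to this paper. A self-contained alternative would differentiate $W_\lambda=\det(p_{m_j-i+1})$ using $\partial_{\theta_{2k}}p_m=p_{m-2k}$ and show by direct row-by-row analysis that the resulting sum of ``boundary'' determinants (those where the shifted row index exceeds $n-1$) cancels exactly when the degree vector equals $(1,3,\ldots,2n-1)$; this bypasses character theory but requires careful sign accounting from row reorderings.
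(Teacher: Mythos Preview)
Your argument is correct and takes a genuinely different route from the paper. The paper proceeds by direct determinantal calculus: it differentiates the Wronskian $W_\lambda=\Wr(p_{m_1},\ldots,p_{m_n})$ column by column using $\partial_{\theta_2}p_m=p_{m-2}$, obtaining $\sum_j\Wr(p_{m_1},\ldots,p_{m_j-2},\ldots,p_{m_n})$; for the staircase each summand has either a zero column or two equal columns, and for the converse the summands are (up to sign) Schur functions of degree $N-2$, whose linear independence forces each to vanish, hence $m_j-2\in\{m_{j-1}\}\cup\{-1\}$ and the degree vector is $(1,3,\ldots,2n-1)$. Your approach instead passes through the Frobenius expansion \eqref{chi} and the Murnaghan--Nakayama rule: the staircase has only odd hook lengths, so no even border strip exists and every character at a class with an even part vanishes; conversely, a non-staircase partition is not a $2$-core and the domino-removal argument produces a nonzero $\chi^\lambda(1^{N-2q},2^q)$. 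Two remarks. First, the precise $2$-quotient formula you quote should carry a multinomial factor $\binom{d}{|\lambda^{(0)}|}$; this does not affect the nonvanishing conclusion, and in fact the weaker statement $\chi^\lambda(1^{N-2q},2^q)=\pm N(\lambda,\tilde\lambda)\,f^{\tilde\lambda}\neq 0$ that the paper itself establishes in Section~4.1.2 already suffices for your converse. Second, the determinantal alternative you sketch in your final paragraph is exactly the paper's proof. What your route buys is that it proves independence from \emph{all} even $\theta_{2j}$ in one stroke and makes transparent the link to the $2$-core machinery the paper develops later; what the paper's route buys is self-containment, requiring nothing beyond \eqref{propa} and the linear independence of Schur functions.
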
 
\begin{proof}
If $(m_1,\dots,m_n)=(1,3,\ldots,2n-1)$ then differentiating each of the
two determinants in \eqref{schur} with respect to $\theta_2$ splits
it up into $n$ determinants where the $j^{\mathrm th}$ column is
differentiated in the $j^{\mathrm th}$ component determinant.
Applying \eqref{propa} renders the first column of the first component
determinant to vanish, and the $(j-1)^{\mathrm th}$ and $j^{\mathrm th}$ 
columns are identical for the $j^{\mathrm th}$ component determinant 
for $j \geq 2$.

Conversely, differentiating $W_\lambda$ in \eqref{ss} with respect to 
$\theta_2$ leads to
$\sum_j\Wr(p_{m_1},\ldots,p_{m_j-2},\ldots,p_{m_n})=0$
after using \eqref{propa} for $1 \leq j \leq n$. Each term in the 
sum is itself a Schur function 
corresponding to a partition of $N-2$. This set of Schur functions
is linearly independent since it forms a basis 
for the vector space of symmetric functions of degree $N-2$ over integers. 
Hence, $\Wr(p_{m_1},\ldots,p_{m_j-2},\ldots,p_{m_n})=0$ for each $j$.
For each wronskian to vanish, either a column must identically vanish or 
two successive column must be identical since $m_1< \cdot < m_n$. 
This implies that $m_{j-1}=m_j-2$ for 
$1 < j \leq n$, and for $j=1$, $p_{m_1-2}=0$. That is, $m_1-2 < 0$
from \eqref{propa}. But by hypothesis $m_1 \geq 1$, hence $m_1=1$.
The rest of the proof is similar.
\end{proof} 
A triangular number $N=n(n+1)/2$ can be always expressed as either $N=3m$ or
$N=3m+1$ for some $m \in \mathbb{N}$. Then applying Lemma 4.1 to \eqref{Wshift} 
and denoting the shifted Schur function as $W_\triangle$, yield 
\begin{equation*}
W_{\triangle}(\tilde{\theta}) = \Wr(p_1,p_3,\ldots,p_{2n-1})
\sim \sum_{r=0}^m \chi^\triangle(1^{N-3r},3^r)
\frac{\tilde{\theta}_1^{N-3r}\tilde{\theta}_3^{r}} {(N-3r)!r!} \,, 
\qquad |t| \gg 1\,,
\end{equation*}
where \eqref{Wasymp} is utilized in order to collect the dominant terms
in $t$ and neglect terms involving $\tilde{\theta}_{2j+1}, \, j > 1$.
Then it is clear from above that the dominant balance required 
to asymptotically solve $W_\triangle = 0$ must be 
$\tilde{\theta}_1^N \sim \tilde{\theta}_1^{N-3r}\tilde{\theta}_3$
so that $\tilde{\theta}_1 \sim |t|^{1/3}$ for $|t| \gg 1$.
Substituting from \eqref{thetars}
$\tilde{\theta}_1=iz, \,\, z=r+h_1+is+\gamma_1, \,\, \tilde{\theta}_3=i(t+h_3+\gamma_3)$
where $h_1, h_3$ are from \eqref{h}, and using the dominant balance to 
rescale $z=-(\sf{t}{a})^{1/3}\xi$ where $a \neq 0$ is a constant,
in the above asymptotic expression of $W_\triangle$, lead to
\begin{equation}
W_\triangle(r,s,t) \sim 
t^{N/3}\Big[(\sf{i}{a})^{N/3}\frac{\chi^\triangle(1^N)}{N!}\,Q_n(\xi)+O(t^{-1})\Big]\,. 
\label{triangle}
\end{equation}
The polynomial $Q_n(\xi)$ in \eqref{triangle} is derived as follows. Lemma 4.1 
implies that the wronskian $\Wr(p_1,p_3,\ldots,p_{2n-1})$ is independent of 
$\theta_2$, hence it can be evaluated by setting $\tilde{\theta}_2=0$ in 
the generalized Schur polynomials, yielding  
$$p_r(\tilde{\theta}_1, 0,\tilde{\theta}_3) 
\sim p_r(iz, 0, it) = \Big(\frac{-it}{a}\Big)^{r/3}\, q_r(\xi,a)
$$
to leading order when $|t| \gg 1$. The polynomials $q_r(\xi,a)$
together with the generating function are readily obtained from \eqref{pn} and \eqref{gen}
\begin{equation*}
q_r(\xi,a)= \sum_{k,l \geq 0}\frac{\xi^k a^l}{k!l!}, \quad k+3l=r\,, \qquad
\exp(\alpha \xi+\alpha^3 a) = \sum_{r=0}^\infty q_r(\xi, a)\alpha^r\,.
\end{equation*}
Then the leading order term in $W_\triangle=\Wr(p_1,p_3,\ldots,p_{2n-1})$ 
corresponds to the first term in \eqref{triangle} where $\Wr(q_1,q_3,\ldots,q_{2n-1}) =
\sf{\chi^\triangle(1^N)}{N!}Q_n(\xi), \, N=n(n+1)/2$. The polynomials
$Q_n(\xi)$ normalized by choosing the scale factor $a=-\sf{4}{3}$ 
are known as the Yablonskii-Vorob'ev polynomials which were
originally studied to obtain special rational solutions of the second Painlev\'e 
equation PII~\cite{Y59,V65}. These
are monic polynomials of degree $N=n(n+1)/2$, with integer coefficients, defined by
\[ Q_n(\xi)= \xi^{N-3m}\sum_{r=0}^m (-\sf{4}{3})^rc_r\xi^{3(m-r)} \,,
\qquad \quad c_r = \frac{\chi^\triangle(1^{N-3r},3^r)N!}{\chi^\triangle(1^N)(N-3r)!r!}\,,
\quad \quad m=\left\{ \begin{matrix} \sf{N}{3} \,, & \quad N=3m  \vspace{0.05 in} \\    
\sf{N-1}{3} \,, & \quad N=3m+1 \end{matrix} \right. \,,
\]
and are known to have $N$ distinct roots in the complex plane~\cite{FOU00}.
Then it follows immediately from the above expression for $Q_n(\xi)$ that $\xi=0$ is a root
if and only if $N=3m+1$ (also, then $n \equiv 1$ mod $3$) and the non-real roots 
arise as complex conjugate pairs. Moreover, the roots have a ``triangular''
symmetry:\, $\xi \to \xi e^{2\pi i/3}$ since $Q(\xi) = \xi^{N-3m}P(\eta), \,
\eta=\xi^3$. Then for each root $\eta^{(j)}, \, j=1,\ldots,m$ of 
$P(\eta)$, $Q(\xi)$ has a corresponding triplet of roots $\xi^{(j)}_i, \, i=1,2,3$ 
which lie $\sf{2 \pi}{3}$ apart on a circle of radius $|\eta^{(j)}|^{1/3}$ centered at 
the origin. There exists an extensive literature on the Yablonskii-Vorob'ev 
polynomials and patterns of its roots in the complex plane (see e.g.~\cite{CM03}
and references therein). They are also related to rational solutions of the KdV and 
modified KdV equations~\cite{KO96}, so it is natural for them to appear in the context of
the rational solutions of KPI.

Note that the $O(t^{-1})$ term in \eqref{triangle} arises from all
the subdominant terms not included in \eqref{Wasymp} as well as from 
terms linear in the shift $h_3$ obtained when expanding
$\tilde{\theta}_3^r$ in the wronskian $\Wr(p_1,\ldots,p_{2n-1})$.
Then the solution $W_{\triangle}(r,s,t)=0$ in \eqref{triangle} has 
the asymptotic form:\, $\tilde{\theta}_1 =iz(t)$,
$$z(t) = t^{1/3}(z_0 + \epsilon z_1 + \epsilon^2 z_2 + \cdots)\,,
\qquad \epsilon = t^{-1} \,, \qquad z_0=-\frac{\xi_j}{a^{1/3}} \,,  
$$
where $\xi_j$ is a root of $Q_n(\xi)=0$.
Next, solving for $z=r+h_1+is+\gamma_1$, the approximate locations of the $N$-lump 
peaks for $|t| \gg 1$ are obtained as 
\begin{equation}
Z_j(t):= r_j(t) + is_j(t) = (r_0-\sf{n}{2b} + is_0)-
(\sf{3}{4}t)^{1/3}\big(\xi_j+O(t^{-1})\big)\,, \qquad j=1,2, \ldots, N \,, 
\label{trianglezero}
\end{equation}
where $r_0+is_0 = -\gamma_1$ and $h_1=\sf{n}{2b}$ from \eqref{h}.
Therefore, the corresponding rational solutions of the KPI equation have
$N$ {\it distinct} peaks which form a ``triangular'' pattern in the
co-moving $rs$-plane. Hence they were referred to as the {\it triangular} 
$N$-lump solutions at the top of this subsection. 
The wave pattern has an {\it almost} time-reversal symmetry 
in the sense that $Z_j(t)+Z_j(-t) \sim (r_0+\sf{n}{2b}, s_0)$ is independent
of $t$ for $|t| \gg 1$. 

If $n=3k+1$ for some $k=0,1,\ldots$ then $Q_n(\xi)$ has a root at $\xi=0$ 
and $N=3m+1$ for some $m \in \mathbb{N}$. The corresponding peak for the
$N$-lump solution is approximately located at 
$(r_0-\sf{3k+1}{2b}, s_0) + O(|t|^{-2/3})$ for $|t| \gg 1$. The 
$k=0$ case is the $1$-lump solution. The solution corresponding to $k=1$ i.e., 
$n=4$ is shown below in Figure~\ref{nlumppeaks}. 
The simplest, non-trivial triangular lump solution 
corresponds to $n=2, \, N=3$ which was discussed in details earlier in Section 2.3.3. 
Figure~\ref{nlumppeaks} below compares the exact and the approximate peak 
locations given by \eqref{trianglezero}  
for the cases $n=3,4,5$ corresponding to $N=6, 10, 15$ respectively.
\begin{figure}[h!]
\begin{center}
\includegraphics[scale=0.42]{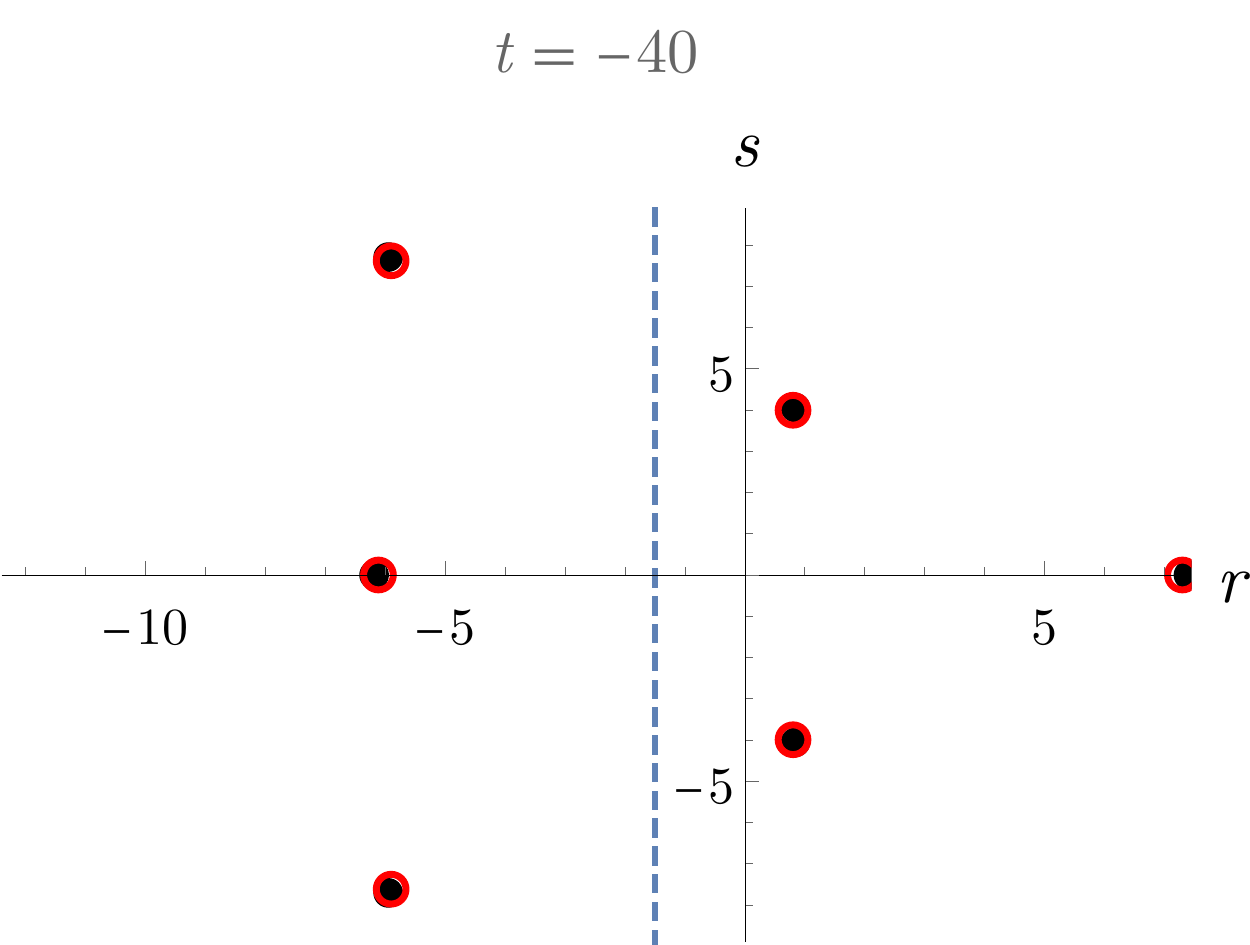} \quad
\includegraphics[scale=0.42]{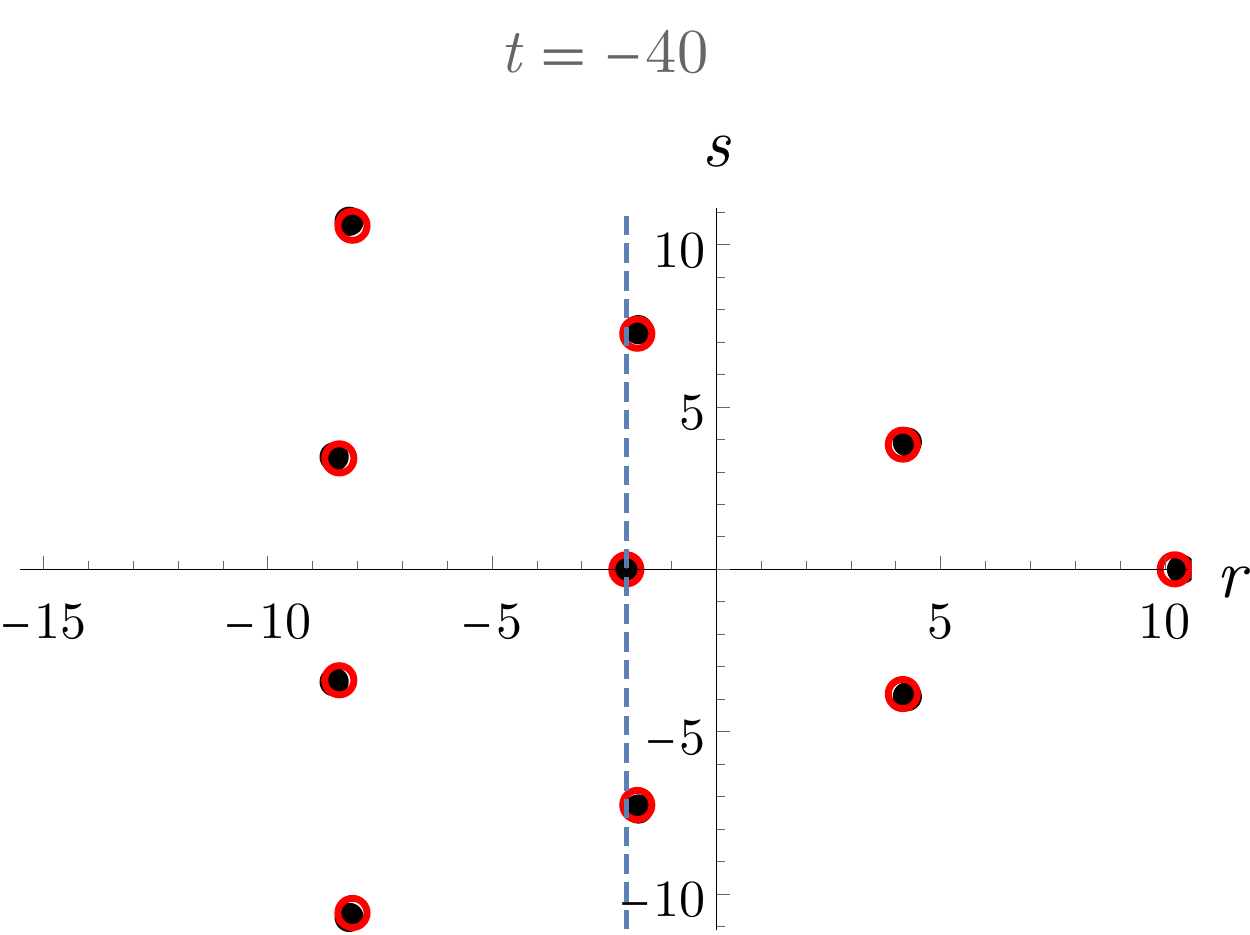} \quad 
\includegraphics[scale=0.42]{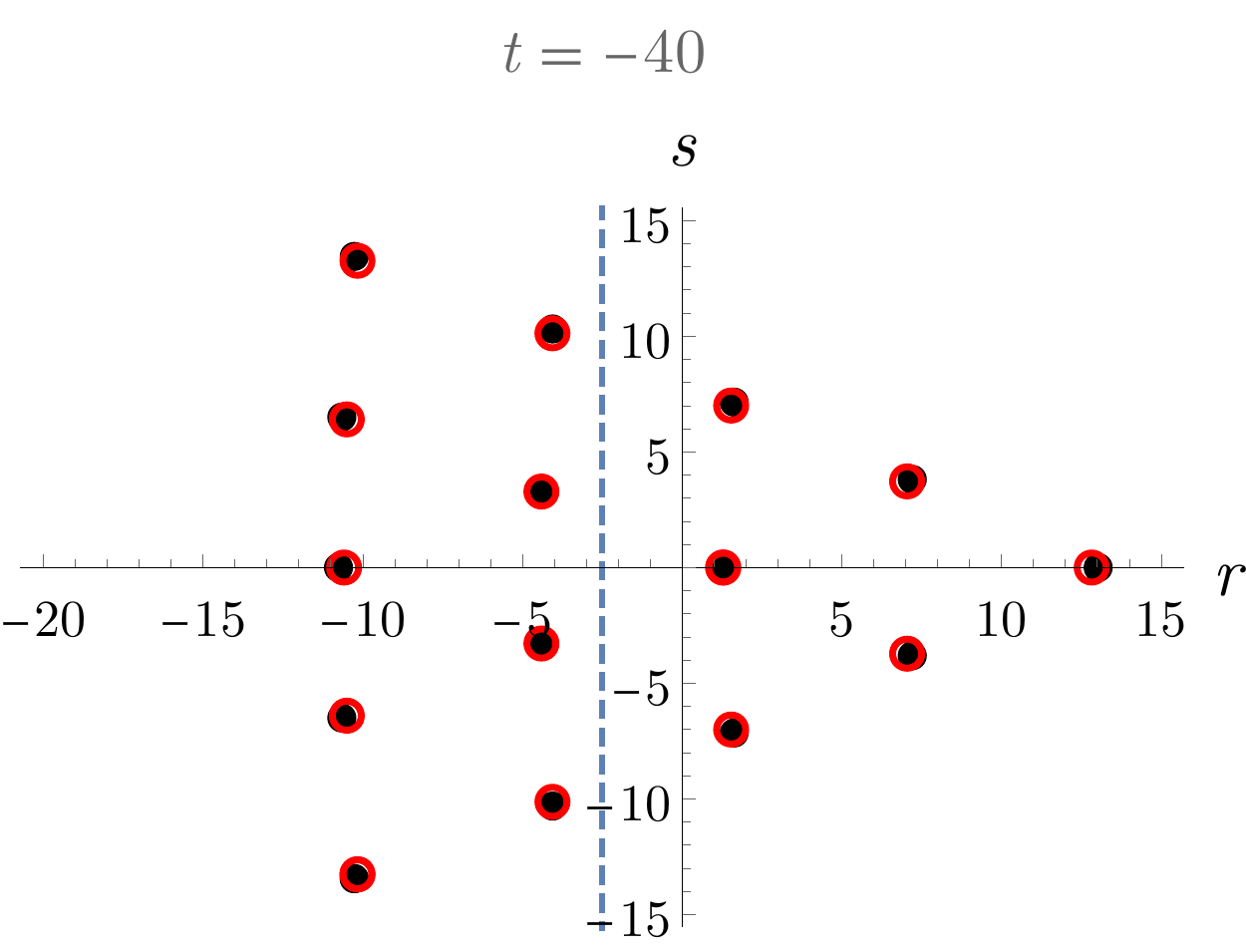} \\
\includegraphics[scale=0.42]{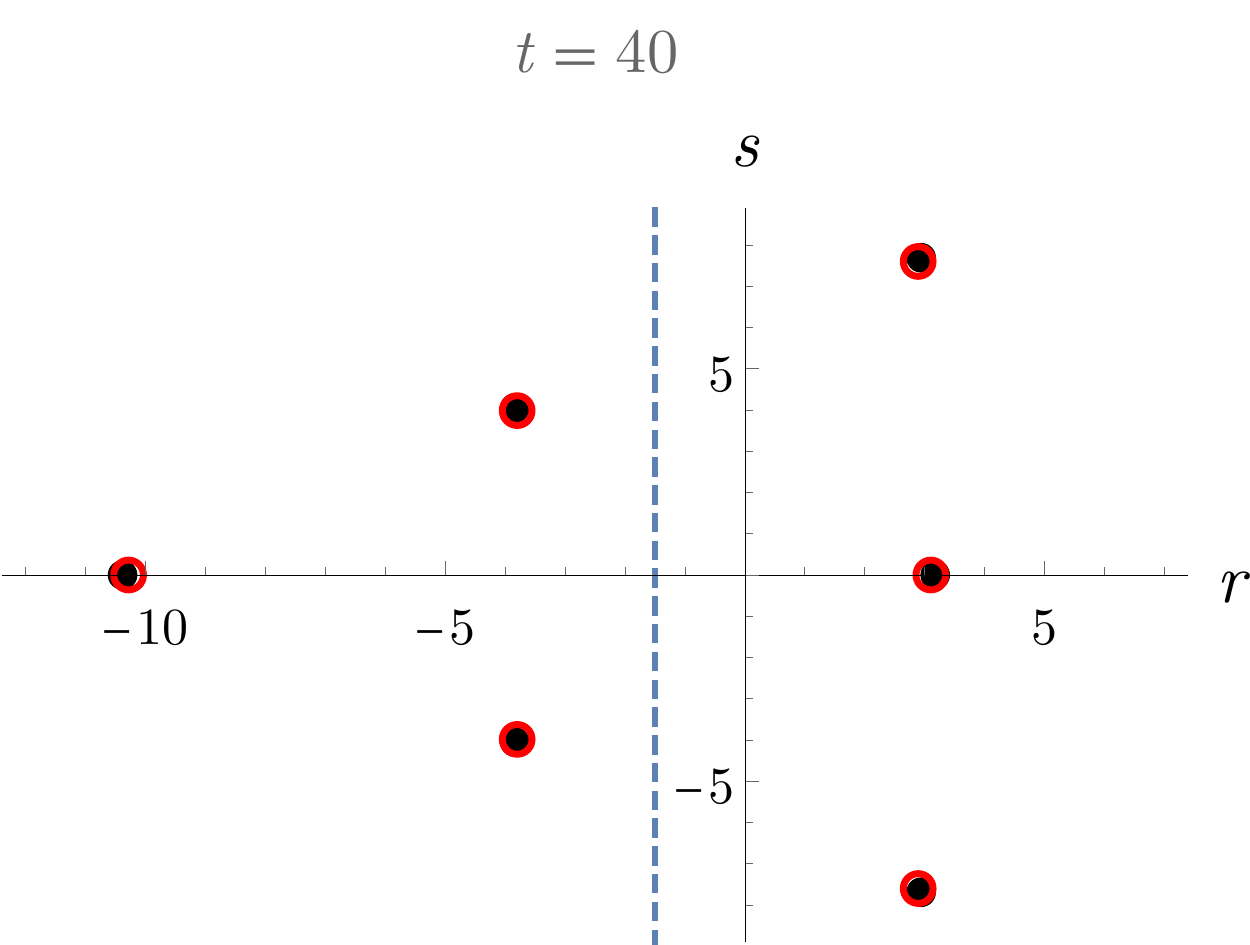} \quad
\includegraphics[scale=0.42]{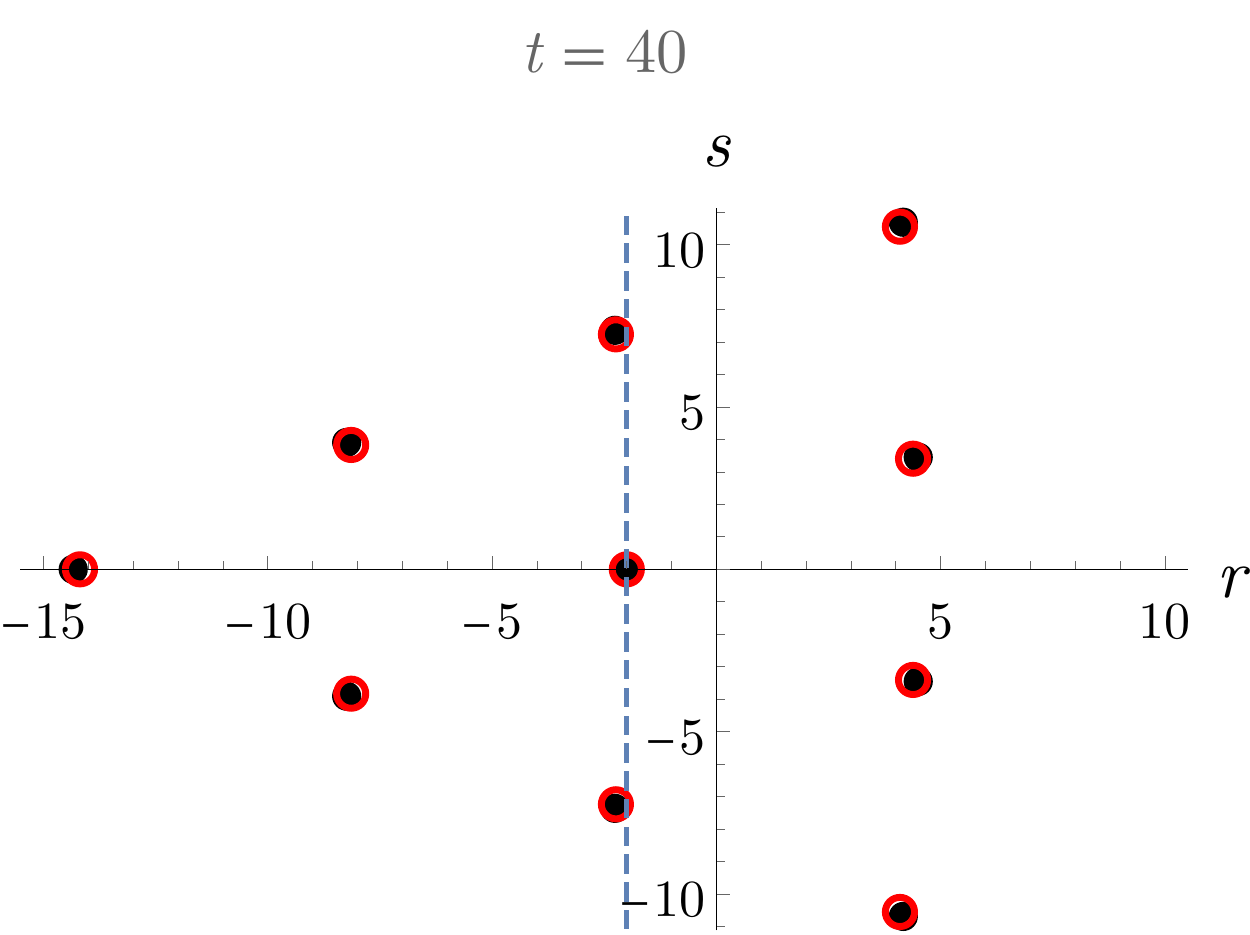} \quad 
\includegraphics[scale=0.42]{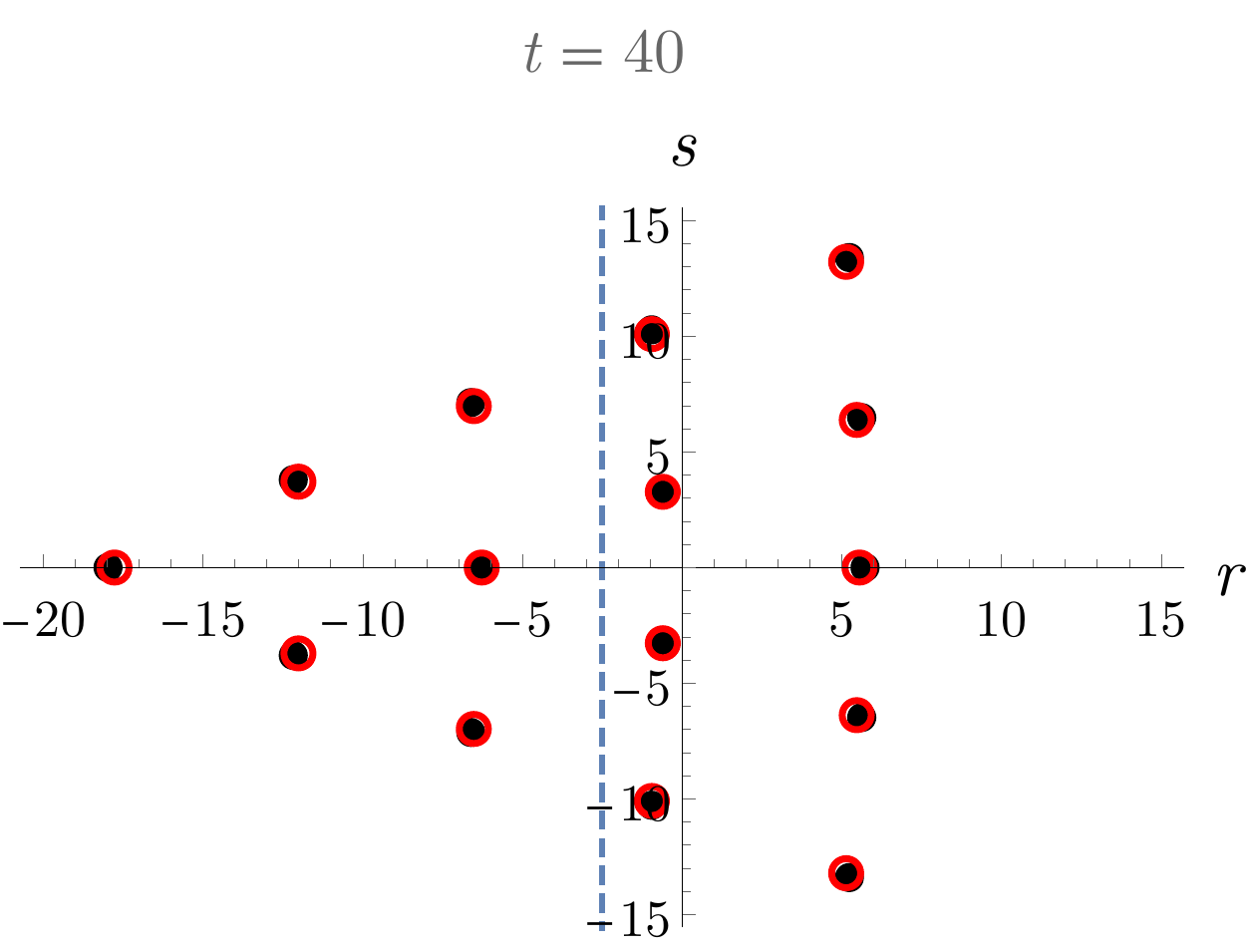}
\end{center}
\vspace{-0.2in}
\caption{Triangular lump peak locations for $n=3,4,5$ at $t=-40$ (top panel) and $t=40$ 
(bottom panel). Total number of peaks $N=\sf{n(n+1)}{2}$ (see text).
Black dots represent locations $z_j$ where $Q({\bf 0})=0$, red open 
circles are exact locations. Under time reversal ($t \to -t$) the peak locations 
are reflected across the dashed vertical line:\, $r=-\sf{n}{2b}$. 
KP parameters: $a=0, b=1$ and $\gamma_j=0, \, j \geq 1$.}
\label{nlumppeaks}
\end{figure}
\begin{remark}
\begin{itemize}
\item[(a)] It should be emphasized that \eqref{trianglezero} gives the
approximate locations of the triangular $N$-lump solutions. The absolute
difference between the exact and approximate locations is $O(|t|^{-1/3})$
when $\xi_k \neq 0$ and $O(|t|^{-1})$ when $\xi_k = 0$.
\item[(b)] The triangular waveform patterns in KPI rational solutions
were previously found in~\cite{GPS93,ACTV00} and more recently
in~\cite{G18,DLZ21,YY21} but no
previous attempt was made to classify these KPI rational solutions.
According to Proposition 3.3, the triangular lumps belong to Class (II) 
corresponding to self-conjugate partitions. 
\item[(c)] The leading order approximate peak locations
is also governed by a well known dynamical
system studied in~\cite{AMM77}. It can be shown using Lemma 4.1 (see also 
Remark 2.2(b)) that the wronskian 
$\Wr(p_1,p_3,\ldots,p_{2n-1})$ is in fact a $\tau$-function $\tau(x,t)$ of 
the KdV equation:\, $4u_t+6uu_x+u_{xxx} =0$ where the generalized
Schur polynomials are redefined as $p_n = p_n(x,y,t)$ instead of 
$\theta_j,\, j\geq 1$ (see Remark 2.1(b)). The corresponding rational
solution of KdV is $u(x,t) = 2(\ln \tau(x,t))_{xx}$
where $\tau(x,t) = (x-x_1(t))(x-x_2(t))\cdots(x-x_N(t))$ are the Adler-Moser
polynomials~\cite{AM78} which are known to have distinct roots if
$N=n(n+1)/2$ for some $n \in \mathbb{N}$.
Plugging this form of $\tau(x,t)$ back into the KdV equation and following
the method outlined in~\cite{DK16} one recovers the first order dynamical system 
for the roots $x_j(t)$ together with the constraints
\[x_{jt} = \sum_{i \neq j}\frac{3}{(x_j-x_i)^2}\,, \quad \qquad
\sum_{i=1, i \neq j}^N\frac{1}{(x_j-x_i)^3}=0\,, \quad j=1,2,\ldots,N \,,\]
found in~\cite{AMM77}. A closer examination of this dynamical system may 
provide further insights into the interaction properties of the triangular
$N$-lump solutions but we do not pursue this matter here.
Substitution of $x_j(t)=-(\sf{3}{4}t)^{1/3}\xi_j$ 
in the dynamical system and the constraints, yield
\[\xi_{j} = -\sum_{i \neq j}\frac{12}{(\xi_j-\xi_i)^2}\,, \quad \qquad
\sum_{i=1, i \neq j}^N\frac{1}{(\xi_j-\xi_i)^3}=0\,, \quad j=1,2,\ldots,N \,,\]
which form a nonlinear system of algebraic equations that may be useful
to compute the roots for the Yablonskii-Vorob'ev polynomials as well as
study their properties.  ` 
\item[(d)] Another possible interesting application 
of the Yablonskii-Vorob'ev polynomials is to obtain explicit formulas for
the characters $\chi^\triangle(1^{N-3r},3^r)$ since there is no easy
algorithm to derive them in general. There are some studies which
derive formulas for the coefficients $c_r$ of $Q_n(\xi)$. For example, it was
shown that $c_{m-1}=0$ if $N=3m+1$~\cite{Ta00} and a formula for $c_m$ has
been derived in~\cite{KO03}, both utilize the recurrence relations for $Q_n(\xi)$.
\end{itemize}
\end{remark}
\subsubsection{General multi-lumps}
We now consider a partition $\lambda \neq (1,2,\ldots,n)$ of $N$ and the 
zeros of the associated shifted Schur function $W_\lambda(\tilde{\theta})$
whose asymptotic form is given in \eqref{Wasymp}. In general, 
$W_\lambda(\tilde{\theta})$ will depend on $\tilde{\theta}_2$. We first look 
for a dominant balance of the type:\, 
$\tilde{\theta}_1^N \sim \tilde{\theta}_1^{N-2j}\tilde{\theta}_2^j$ which
means that the corresponding character coefficient $\chi^\lambda(1^{N-2j},2^j)$
has to be non-zero. To examine this possibility, it is necessary again to
briefly review some ideas from partition theory. We will do so from the
monograph~\cite{M95} (see also the recent work~\cite{BDS20}).

{\bf Murnaghan-Nakayama rule and 2-core of a partition}:\,
A combinatorial way of computing the character $\chi^\lambda(1^{N-2j},2^j)$ 
of the symmetric group $S_N$ is to recursively apply the Murnaghan-Nakayama 
rule which for the present purpose reads as follows
\[\chi^\lambda(1^{N-2j},2^j) = \sum_{\mu \subset \lambda}
(-1)^{ht(S)}\chi^\mu(1^{N-2j},2^{j-1})\,, \]
where the sum is over all partitions $\mu \subset \lambda$ such that
$S=\lambda/\mu$ is either a vertical or horizontal 2-block, i.e. 
either $Y_{\lambda/\mu} = \raisebox{0.05in}{\ytableausetup{boxsize=0.4em}\ydiagram{1,1}}\,,$
or $Y_{\lambda/\mu} = \ytableausetup{boxsize=0.5em}\ydiagram{2}\,,$ and
$ht(S)= \#$ of rows in $S$ minus 1. Thus $ht(S)=1$ for a vertical 2-block
and $ht(S)=0$ for a horizontal one. Successive applications of the 
Murnaghan-Nakayama rule can exhaust all the 2-cycles, which leads to a form
\[\chi^\lambda(1^{N-2j},2^j) = \sum_{\rho}\pm \chi^\rho(1^{N-2j})\,,\]
where $\rho \subset \lambda$ are partitions whose diagrams $Y_\rho$ are
obtained by peeling off $j$ 2-blocks from $Y_\lambda$. The parity $\pm$ depends
on the number of horizontal and vertical 2-blocks peeled away during the
process of obtaining a specific $Y_\rho$ starting from $Y_\lambda$, and the 
sum is over all possible partitions $\rho$ whose diagram $Y_\rho$ consists
of $N-2j$ boxes. The character $\chi^\rho(1^{N-2j}) \neq 0$ is known since
it is the dimension of the irreducible representation $\rho$. Note that if
$\rho=\emptyset, \, \chi^\emptyset=1$. However, this reduction process may lead
to obstructions where a stage is reached when the resulting partition 
$\tilde{\lambda}$ is such that {\it no more} 2-blocks can be peeled away from
$\tilde{\lambda}$ and the recursive process has to prematurely terminate. The partition
$\tilde{\lambda}$ is called the 2-core of the partition $\lambda$. It can be
shown that 2-cores are precisely triangular partitions discussed in Section 4.1.1. 
The key ingredient of this process is to peel off one 2-block at a time so that the
resulting diagram is still a Young diagram of some partition $\mu \subset \lambda$.
The terminating 2-core partition $\tilde{\lambda}$ is always the same no matter
which path is followed. Moreover the parity $\pm$ which depends on the
number of vertical 2-blocks removed along a path is also invariant across all
paths. Thus there exists a positive integer $q$ such that the character
coefficient $\chi^\lambda(1^{N-2q},2^q)$ of $\tilde{\theta}_1^{N-2q}\tilde{\theta}_2^q$ 
of the Schur function $W_{\lambda}$ in \eqref{Wasymp} takes the form
\[\chi^\lambda(1^{N-2q},2^q) = \pm N(\lambda, \tilde{\lambda})\chi^{\tilde{\lambda}}(1^{N-2q}) = 
C(\lambda,\tilde{\lambda}) \neq 0 \]  
for some triangular partition $\tilde{\lambda}$ and where
$N(\lambda, \tilde{\lambda})$ is the total number of possible paths
from $Y_\lambda$ to $Y_{\tilde{\lambda}}$. The expression for 
$C(\lambda,\tilde{\lambda})$ is given in~\cite{BDS20}. More important for our discussion
is the fact that then $\chi^\lambda(1^{N-2q-2},2^{q+1})=0$ since after
removing $q$ 2-blocks the 2-core $\tilde{\lambda}$ is reached.
The example below illustrates the process.
\begin{example}
Let $\lambda = (1,1,3,4)$. All possible sequences of peeling off a vertical or a
horizontal 2-block starting from $Y_\lambda$ and the intermediate
diagrams $Y_\mu$ are shown below. Finally all (3) possible paths indicated
by the arrows terminate at the Young diagram
$Y_{\tilde{\lambda}}=\raisebox{0.05in}{\ytableausetup{boxsize=0.5em}\ydiagram{2,1}}$ at the
lower right corner, which corresponds to a triangular partition.
\[\begin{matrix}
\raisebox{0.1in}{\ytableausetup{boxsize=0.6em}\ydiagram{4,3,1,1}} & \longrightarrow & 
\raisebox{0.1in}{\ytableausetup{boxsize=0.6em}\ydiagram{4,1,1,1}} & \longrightarrow & 
\raisebox{0.1in}{\ytableausetup{boxsize=0.6em}\ydiagram{2,1,1,1}} \\
\big\downarrow &  & \big\downarrow &  & \downarrow \\
\ytableausetup{boxsize=0.6em}\ydiagram{4,3} & \longrightarrow & 
\ytableausetup{boxsize=0.6em}\ydiagram{4,1} & \longrightarrow & 
\ytableausetup{boxsize=0.6em}\ydiagram{2,1}  
\end{matrix} \]
Notice that along each path 1 vertical and 2 horizontal 2-blocks      
are removed to arrive at $Y_{\tilde{\lambda}}$ before the process terminates.
Thus, $\chi^{(1,1,3,4)}(1^3,2^3) = -3\chi^{(1,2)}(1^3)$ but 
$\chi^{(1,1,3,4)}(1,2^4) =0$
since no further 2-blocks can be peeled off from the diagram $Y_{\tilde{\lambda}}$.
\end{example} 

It is also possible to determine the positive integer $q$ which is the maximum
number of 2-blocks removed from $Y_{\lambda}$, precisely as follows.
Let $m=(m_1,\ldots,m_n), \, 1 \leq m_1 < \dots < m_n$ be the degree vector 
of a partition $\lambda$ of
$N \in \mathbb{N}$. After removing a 2-block, one arrives at an (unordered)
sequence $(m_1,\ldots,m_{j}-2,\ldots,m_n)$ for some $1 \leq j \leq n$. After 
rearranging the resulting sequence in ascending order, yields the
sequence $m'$ which is the degree vector
of a new partition $\mu$ so long as the components of $m'$ are all {\it distinct}
and non-negative. Thus, the terminal 2-core $\tilde{\lambda}$ partition is reached
when the successive components of the corresponding degree vector $\tilde{m}$ 
differ by {\it at most} 2. Next consider the initial degree vector $m$
as an unordered set which is a union of $k$ odd and $l$ even positive integers
with $k+l=n$. After removing all possible 2-blocks from the odd set, the residual set
is $\{1,3,\ldots,2k-1\}$ while the residual even set becomes $\{0,2,\ldots,2l-2\}$.
Let us assume $k>l$. Then after rearranging the union of these residual sets in 
ascending order one recovers the degree vector 
$\tilde{m} = (0,1,\ldots,2l-2,2l-1,2l+1,\ldots 2k-1\}$ corresponding to the 2-core 
$$\tilde{\lambda} = (\underbrace{0,0, \ldots,0}_{2l}, 
\underbrace{1,3,\ldots,d}_{k-l})$$
of length $l(\tilde{\lambda}) = d=k-l$, 
using the relation $\tilde{\lambda}_j = \tilde{m}_j-j+1$. $\tilde{\lambda}$
is a triangular partition of size $|\tilde{\lambda}|=\sf{d(d+1)}{2}$.
The total number of 2-blocks removed is 
$$q = \sf{1}{2}(|\lambda|-|\tilde{\lambda}|)\,, \quad \text{since} \quad
|\lambda|-|\tilde{\lambda}|=(m_1+\cdots+m_n)-(1+\cdots+ 2k-1)-(2+\cdots+2l-2)$$
is an even number. If $k<l$ then a similar argument shows that $d=l-k-1$,
and if $k=l$ then $d=0$.
\begin{example}
For $\lambda=(1,1,3,4)$ as in Example 4.2, the degree vector $m=(1,2,5,7)$
where $m_2=2$ is even and the remaining 3 components are odd.
Then $\{2\} \cup \{1,5,7\} \to \{0\} \cup \{1,3,5\}$ after removing  
$(2-0)+(1-1)+(5-3)+(7-5) =6$ boxes from $Y_\lambda$. So $q=3$.
The degree vector for the core partition is $\tilde{m} = (0,1,3,5)$ so that
$\tilde{\lambda} = (0,0,1,2)$ and $d=k-l=3-1=2$.  
\end{example}
We conclude our brief excursion to partition theory with the following.
\begin{proposition}
Let $\lambda \neq (1,2,\ldots,n)$ be a non-triangular partition of
$N \in \mathbb{N}$. Suppose the degree vector of $\lambda$
has $k$ odd and $l$ even components, $k+l=n$.
Then there exists a largest positive integer $q \leq \lfloor\sf{N}{2}\rfloor$ 
such that the character coefficient $\chi^\lambda(1^{N-2q},2^q) \neq 0$ 
in the associated Schur function $W_\lambda(\theta)$. 
The integer $q=\sf{N-|\tilde{\lambda}|}{2}$
where the $n$-tuple $\tilde{\lambda} = (0,\dots,0,1,2,\ldots,d)$ is the 
2-core of $\lambda$ and $d=k-l$ if $k \geq l$ and $d=l-k-1$ if $l > k$. 
Hence, $N-2q =|\tilde{\lambda}|=\sf{d(d+1)}{2}$. Furthermore, the coefficient of 
$\theta_2^q$ in $W_\lambda(\theta)$ is $C_qW_{\tilde{\lambda}}(\theta)$ for
some constant $C_q$ that determines the value of $\chi^\lambda(1^{N-2q},2^q)$. 
\end{proposition}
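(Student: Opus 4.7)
The plan is to combine the Murnaghan--Nakayama recursion (already described in the excerpt) with an explicit degree-vector identification of the 2-core, and then read off the coefficient of $\theta_2^q$ directly from the character expansion \eqref{chi}. First I would invoke repeated Murnaghan--Nakayama to collapse all $q$ copies of the 2-cycle, obtaining
\[\chi^\lambda(1^{N-2q},2^q)=\sum_\rho \varepsilon(\rho)\,\chi^\rho(1^{N-2q}),\]
where $\rho$ ranges over partitions reached from $\lambda$ by peeling $q$ legal 2-ribbons and $\varepsilon(\rho)=(-1)^{v(\rho)}$ with $v(\rho)$ the number of vertical ribbons removed. The recursion must stall at the 2-core $\tilde\lambda$, so the largest $q$ for which the left-hand side can be nonzero satisfies $N-2q=|\tilde\lambda|$. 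At this maximal $q$ the only reachable $\rho$ is $\tilde\lambda$ itself, whence
\[\chi^\lambda(1^{N-2q},2^q)=\varepsilon\,N(\lambda,\tilde\lambda)\,\chi^{\tilde\lambda}(1^{N-2q}),\]
with $N(\lambda,\tilde\lambda)\geq 1$ the number of peeling paths and $\varepsilon\in\{\pm 1\}$ their common sign; nonvanishing follows because $\chi^{\tilde\lambda}(1^{N-2q})=\dim\tilde\lambda>0$.

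Next I would identify $\tilde\lambda$ explicitly via the degree vector. Removing a 2-ribbon from $Y_\lambda$ corresponds, by the standard 2-abacus description, to decreasing one entry $m_j$ by $2$ and reordering, provided the resulting multiset remains a strictly increasing sequence of distinct nonnegative integers. Splitting $m$ into its $k$ odd and $l$ even entries, reductions within the odd subset can only terminate at $\{1,3,\dots,2k-1\}$ and reductions within the even subset at $\{0,2,\dots,2l-2\}$, beyond which distinctness is lost. Sorting the union of these residual sets in ascending order gives the degree vector $\tilde m$ of the 2-core, and $\tilde\lambda_j=\tilde m_j-j+1$ then yields $\tilde\lambda=(0,\dots,0,1,2,\dots,d)$ with $d=k-l$ when $k\geq l$ and $d=l-k-1$ when $l>k$. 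Hence $|\tilde\lambda|=d(d+1)/2$ and $q=(N-|\tilde\lambda|)/2$, establishing the numerical claims.

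For the assertion on the coefficient of $\theta_2^q$ in $W_\lambda(\theta)$, I would collect in \eqref{chi} all terms with $\alpha_2=q$. Such multi-indices satisfy $\alpha_1+3\alpha_3+4\alpha_4+\cdots=|\tilde\lambda|$, and for each of them the same Murnaghan--Nakayama reduction applied only to the $q$ 2-cycles gives
\[\chi^\lambda(1^{\alpha_1},2^q,3^{\alpha_3},\dots)=\varepsilon\,N(\lambda,\tilde\lambda)\,\chi^{\tilde\lambda}(1^{\alpha_1},3^{\alpha_3},\dots),\]
with the same $N(\lambda,\tilde\lambda)$ and $\varepsilon$, since 2-ribbon bookkeeping is independent of the remaining cycle type. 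Summing over all such $\alpha$ and invoking Lemma 4.1 (so that $W_{\tilde\lambda}$ carries no $\theta_2$-dependence that could mix back in) identifies the coefficient of $\theta_2^q$ with $C_q\,W_{\tilde\lambda}(\theta)$, where $C_q=\varepsilon\,N(\lambda,\tilde\lambda)/q!$.

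The main obstacle is the path-independence of the sign $\varepsilon$ across all peeling sequences from $Y_\lambda$ to $Y_{\tilde\lambda}$. This is a classical fact, most transparent through the 2-abacus, where 2-ribbon removals become single-bead slides whose total parity depends only on the initial and terminal bead configurations. Once this invariance is granted, the preceding steps reduce to bookkeeping with the degree vector and the orthogonality-free part of the character expansion.
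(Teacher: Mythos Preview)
Your argument is correct and tracks the paper's own reasoning closely: the Murnaghan--Nakayama reduction of the $q$ 2-cycles down to the 2-core, the explicit identification of $\tilde\lambda$ from the odd/even split of the degree vector, and the path-independence of the sign (for which you correctly flag the 2-abacus as the cleanest justification) are exactly what the paper does in the paragraphs preceding the proposition.

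The one place where you diverge slightly is in extracting the coefficient of $\theta_2^q$. The paper applies the shift identity \eqref{shift} with the splitting $\tilde\theta=\hat\theta+\delta$, $\delta=(0,\tilde\theta_2,0,\ldots)$, to write $W_\lambda(\tilde\theta)=\sum_\mu W_\mu(\hat\theta)\,W_{\lambda/\mu}(\delta)$ and then observes that $W_{\lambda/\mu}(\delta)\neq 0$ only when $\mu$ is reached from $\lambda$ by 2-ribbon removal, terminating at $\mu=\tilde\lambda$; this gives $C_q\tilde\theta_2^q=W_{\lambda/\tilde\lambda}(\delta)$ as the explicit determinantal formula \eqref{Cq}. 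You instead read the coefficient directly from the character expansion \eqref{chi} by applying Murnaghan--Nakayama to the mixed cycle type $(1^{\alpha_1},2^q,3^{\alpha_3},\ldots)$ and summing, which yields $C_q=\varepsilon\,N(\lambda,\tilde\lambda)/q!$ --- the same value the paper records in Remark~4.2(c). Your route is a bit more elementary in that it avoids the skew-Schur shift machinery, while the paper's route has the advantage of producing the closed determinantal expression \eqref{Cq} for $C_q$ without needing to count ribbon-removal paths.
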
 
The last part of Proposition 4.1 will be explained later
in more details. The first part of Proposition 4.1 
implies the dominant balance $\tilde{\theta}_1^N \sim 
\tilde{\theta}_1^{N-2q}\tilde{\theta}_2^{q}$ in \eqref{Wzero} so that 
$\tilde{\theta}_1 \sim |t|^{1/2}, \, |t| \gg 1$. Collecting all the dominant terms
that are of the form $\tilde{\theta}_1^{N-2j}\tilde{\theta}_2^{j} \sim O(|t|^{N/2})$
from \eqref{Wasymp}, yields
\begin{equation}
 W_\lambda(\tilde{\theta}) \sim \sum_{j=0}^q 
\chi^\lambda(1^{N-2j},2^j)\frac{\tilde{\theta}_1^{N-2j}\tilde{\theta}_2^{j}}{(N-2j)!j!}  
= \Wr(p_{m_1}, \ldots,p_{m_n}) \,, 
\label{Wgen}
\end{equation}
where $p_r = p_r(\tilde{\theta}_1,\tilde{\theta}_2,0)$ in the above wronskian.
From \eqref{thetars}, $\tilde{\theta}_1=iz$ where $z=(r+h_1+is+\gamma_1)$ and
$\tilde{\theta}_2 = \frac{is}{2b}-3bt+h_2+\gamma_2$. Proceeding similarly as
in Section 4.1.1, one rescales $z=|t|^{1/2}\xi$, and uses \eqref{pn} to obtain
$$ p_r(\tilde{\theta}_1,\tilde{\theta}_2,0) \sim 
i^r|t|^{r/2}\big(H_r(\xi, \eta) + O(|t|^{-1/2})\big) \,.  $$
The heat polynomials $H_r(\xi,\eta)$ and their generating function 
are obtained directly from \eqref{pn} and \eqref{gen}
\[ H_r(\xi,\eta)= \sum_{j,k \geq 0}\frac{\xi^j\eta^k}{j!k!}, \quad j+2k=r\,,  
\qquad 
\exp(\alpha \xi+\alpha^2\eta) = \sum_{r=0}^\infty H_r(\xi, \eta)\alpha^r\,,
\quad \quad \eta =\left\{ \begin{matrix} 3b \,, & \quad t \geq 0 \vspace{0.05 in} \\    
-3b \,, & \quad t < 0 \end{matrix} \right. \,.
\]
These heat polynomials $H_r(\xi,\eta)$ were also introduced recently  
in order to investigate a special family of KPI lumps~\cite{CZ21}.
Substituting the asymptotic expression of $p_r$ into \eqref{Wgen} leads to 
\begin{subequations}
\begin{equation}
W_\lambda(r,s,t) \sim i^N|t|^{N/2}\big[H_\lambda(\xi,\eta)+O(|t|^{-1/2})\big]\,,
\label{heat-a}
\end{equation}
where $H_\lambda(\xi,\eta)$ is the wronskian of the heat polynomials given by
\begin{equation}
H_\lambda(\xi,\eta) = \Wr(H_{m_1},\ldots,H_{m_n})=\xi^{N-2q}\sum_{j=0}^q  
\frac{\chi^\lambda(1^{N-2j},2^j)}{(N-2j)!j!}\,\xi^{2(q-j)}\eta^j \,, 
\label{heat-b}
\end{equation}
and the positive integer $q$ is defined in Proposition 4.1. 
\label{heat}
\end{subequations}
The heat polynomials $H_r(\xi,\eta)$ above can be normalized to the Hermite polynomials
(see Remark 4.2(b)) so that the polynomials $H_\lambda$ are related
to the wronskian of Hermite polynomials after certain rescalings.
The wronskian of Hermite polynomials arise in the study of the Schr\"odinger equation:\,
$\psi''(z) = (u(z)-\lambda)\psi(z)$ where $u(z)$ is rational, growing as $z^2$
at infinity, and $\psi(z)$ is single-valued in the complex $z$-plane
for all $\lambda \in \mathbb{C}$~\cite{Ob99}. 

Treating $H_\lambda$ as a polynomial
in $\xi$ only, and $\eta=\pm3b$ as a parameter, it is clear that 
$H_\lambda(\xi)$ is of degree $N$ and has a root at $\xi=0$ with multiplicity $N-2q$.
It is conjectured that the non-zero roots of $H_\lambda(\xi)$ are simple~\cite{FHV12}
and will be assumed to be true in what follows. 
Furthermore, since $\xi^{2q-N}H_\lambda(\xi)$ in \eqref{heat-b} is a polynomial 
in $\xi^2$ of degree $q$ with real coefficients, the remaining non-zero roots 
admit the following symmetries:\, (a)\, if $\xi_j \in \mathbb{C}$, then 
the quartet $\{\pm\xi_j, \pm \bar{\xi}_j\}$, and (b)\, if $\xi_j \in \mathbb{R}$, 
then the pair $\pm \xi_j$ are all roots of $H_\lambda(\xi)$. Another symmetry
stems from the dependence of the roots on the parameter $\eta$. It is evident
from the definition of the heat polynomials $H_r(\xi,\eta)$ given above
\eqref{heat-a} that $H_r(\xi,-\eta) = i^rH_r(-i\xi, \eta)$ which implies
that $H_\lambda(\xi,-\eta)=i^N H_\lambda(-i\xi, \eta)$ and its roots satisfy
$\xi_j(\eta=-3b) = i\xi_j(\eta=3b)$.  

When $\xi_j \neq 0$, by putting $W_\lambda(r,s,t)=0$ in \eqref{heat-a}
and solving asymptotically for 
$$z(t)=|t|^{1/2}(z_0+\epsilon z_1+\epsilon^2z_2+\cdots)\,, \qquad 
\epsilon=|t|^{-1/2}$$
like in the triangular $N$-lump case, one
obtains the $2q$ approximate peak locations for the general $N$-lump solutions 
when $|t| \gg 1$ in terms of the roots of $H_\lambda(\xi)$.
Denoting the peak locations by $Z^{\pm}_j=r^\pm_j+is^\pm_j$ corresponding to 
positive and negative $t$ respectively, one obtains for $|t| \gg 1$,
\begin{equation}
Z^{\pm}_j(t) = (r_0-\sf{n}{2b}+is_0)+
|t|^{1/2}\big[\xi_j(\eta=\pm 3b)+O(|t|^{-1/2})\big] \,, \quad j=1,2,\ldots,2q \,,
\label{nonzero}
\end{equation}
where as usual, $h_1=\sf{n}{2b}$ from \eqref{h} and $-\gamma_1=r_0+is_0$.
Furthermore, $(r^+_j(t), s^+_j(t)) \sim (-s^-_j(t), r^-_j(t))$ as $|t| \to \infty$
due to the symmetry of the roots:\, $\xi_j(\eta=-3b) = i\xi_j(\eta=3b)$.
The surface wave pattern due to the peak locations 
in the $rs$-plane are markedly different from those of the triangular $N$-lumps
due to the quartet or doublet root patterns 
of $H_\lambda(\xi)$ as shown below in Figure~\ref{6lump}. Also, unlike 
in \eqref{trianglezero}, the $O(1)$ term in \eqref{nonzero}
will change if $z_1 \neq 0$ in the asymptotic expansion of $z(t)$ above.
Further details of this asymptotics will not be pursued in this paper.
\begin{example}
Consider the partition $\lambda=(3,3)$ with $m=(3,4)$ where $N=6$.
The Schur function is given by
\[W_\lambda = \begin{vmatrix} p_3 & p_4 \\ p_2 & p_3 \end{vmatrix} =
\frac{\tilde{\theta}_1^6}{144} + \frac{\tilde{\theta}_1^4\tilde{\theta}_2}{24}
+ \frac{\tilde{\theta}_1^2\tilde{\theta}_2^2}{4} - \frac{\tilde{\theta}_2^3}{2}
-\frac{\tilde{\theta}_1^3\tilde{\theta}_3}{6}+\tilde{\theta}_1\tilde{\theta}_2\tilde{\theta}_3  
-(\frac{\tilde{\theta}_1^2}{2} - \tilde{\theta}_2)\tilde{\theta}_4 \,, \]
after using \eqref{pn}. The first four terms above are dominant consistent with the
scaling $\tilde{\theta}_1 = i|t|^{1/2}\xi$, and leads to
\[W_\lambda \sim i^6|t|^3\big[H_\lambda(\xi,\eta)+O(|t|^{-1/2}\big]\,, \quad
H_\lambda = \Wr(H_3,H_4) = 
\frac{1}{144}\big(\xi^6+6\xi^4\eta+36\xi^2\eta^2-72\eta^3\big)\,, \]  
as in \eqref{heat-a}. Note that $\xi=0$ is not a root of $H_\lambda$
as $q=N/2=3$ in \eqref{heat-b}. Since $\eta=3b$ for positive $t$, setting 
$w=\sf{\xi^2}{3b}$ it can be easily verified that the resulting cubic polynomial 
$w^3+6w^2+36w-72$ has a single real root that is positive and a pair of complex 
conjugate roots. Consequently, 4 of the 6 distinct zeros of $H_\lambda(\xi)$ 
arise in complex conjugate pairs, and remaining 2 are real which corresponds to
the positive real root of the cubic polynomial in $w$. The 
approximate peak locations given by \eqref{nonzero} in this case are shown
in the right panel of Figure~\ref{6lump}. By using the symmetry 
$\xi_j(\eta=-3b)=i\xi_j(\eta=3b)$ of the zeros of $H_\lambda(\xi,\eta)$
one obtains the leading order approximate peak locations for $t<0$. These
are shown in the left panel of Figure~\ref{6lump} confirming that  
$(r^+_j(t), s^+_j(t)) \sim (-s^-_j(t), r^-_j(t))$ for $|t| \gg 1$.      
\end{example}
\begin{figure}[t!]
\begin{center}
\includegraphics[scale=0.42]{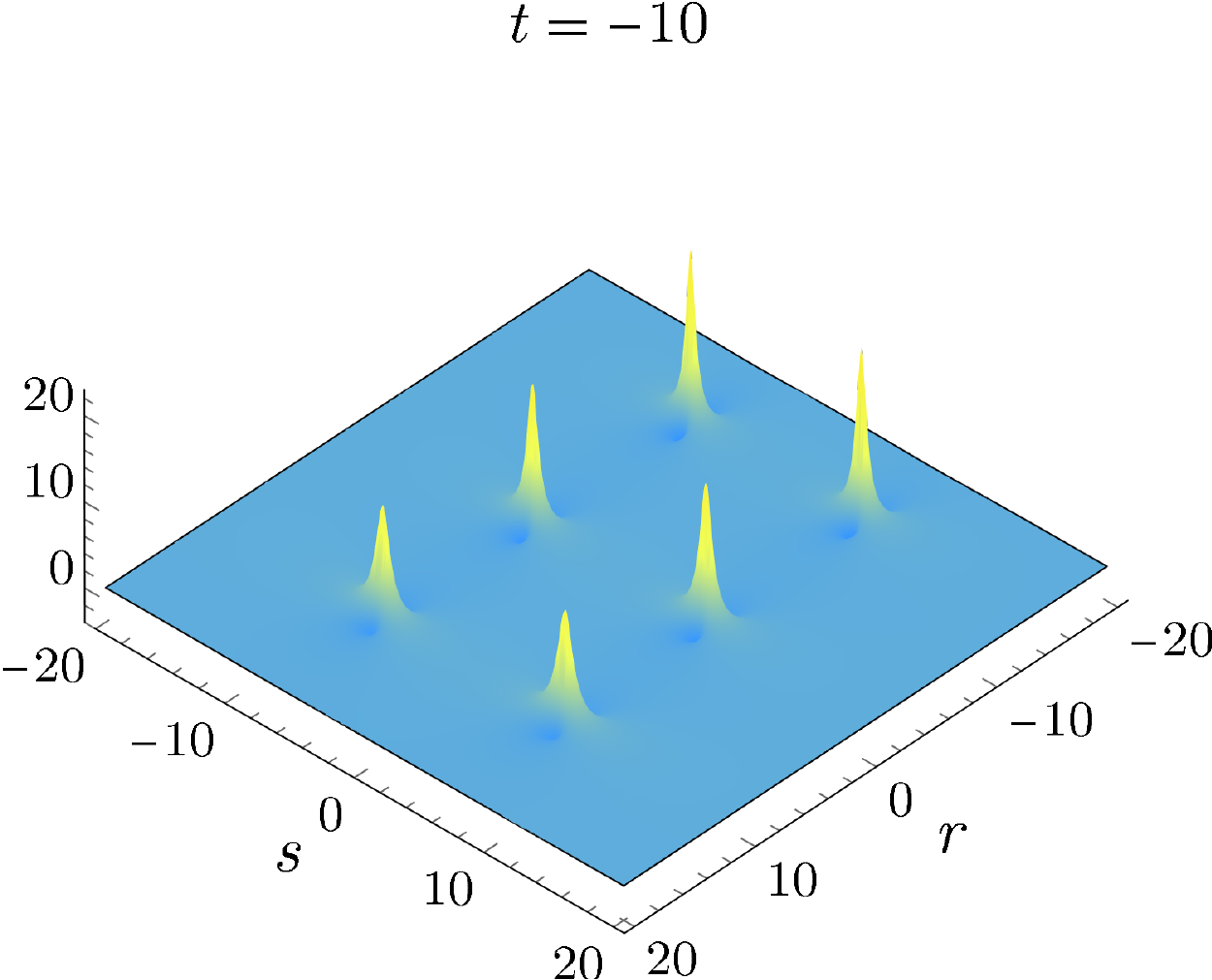} \qquad
\includegraphics[scale=0.42]{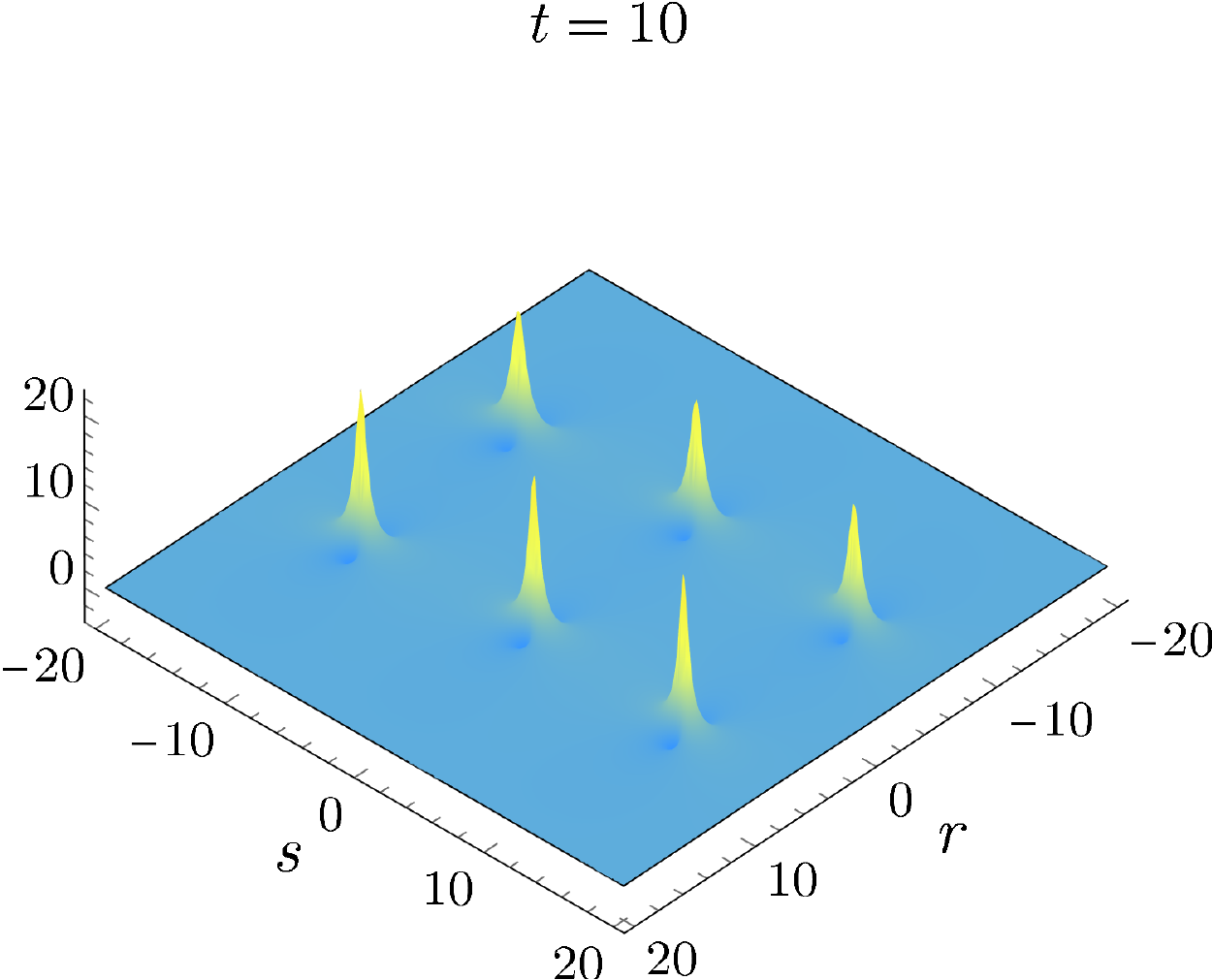} 
\end{center}
\vspace{-0.2in}
\caption{The $6$-lump solutions of Example 4.4. Solutions
corresponding to $t=-10$ (left panel) and $t=10$ (right panel).
KP Parameters:\, $a=0, b=1, \, \gamma_j=0,\, j\geq 1$.}
\label{6lump}
\end{figure}
Yet another interesting result concerns the relation between
the asymptotic peak locations of a pair of class (I) $N$-lump solutions 
in Section 3.2 corresponding to partition $\lambda$ and its conjugate $\lambda'$.
This relationship occurs when the approximate 
peak locations are given by \eqref{nonzero} corresponding to the non-zero 
roots of $H_\lambda$ and $H_{\lambda'}$. Applying the first involution relation 
in \eqref{dualschur} to \eqref{Wgen} yields 
$W_{\lambda'}(\tilde{\theta}_1,\tilde{\theta}_2) = 
W_\lambda(\tilde{\theta}_1,-\tilde{\theta}_2)$.
From $\tilde{\theta}_2 = \frac{is}{2b}-3bt+h_2+\gamma_2$ and the fact
that $s \sim O(|t|^{1/2})$ (since $\theta_1 \sim O(|t|^{1/2}))$ it follows
that $\tilde{\theta}_2(-t) \sim -\tilde{\theta}_2(t)$ for $|t| \gg 1$.
Hence, one obtains the leading order asymptotic relation 
$W_{\lambda'}(r,s,t) \sim W_\lambda(r,s,-t)$ in \eqref{heat-a}. 
Consequently, the asymptotic peak locations given by \eqref{nonzero}
for the $N$-lump solutions corresponding to partitions $\lambda$
and $\lambda'$ satisfy
\[Z^{\lambda'}_j(t) \sim Z^{\lambda}_j(-t)\,, \quad 
j=1,\ldots,2q\,, \qquad |t| \gg 1\,.  \]
In this context, one should note that the 2-core partition $\tilde{\lambda}$
is the {\it same} for both $\lambda$ and its conjugate $\lambda'$ since removing
a horizontal 2-block from $\lambda$ is equivalent to removing a vertical 2-block 
from $\lambda'$ and vice-versa. Hence it follows from Proposition 4.1 that the
positive integer $q$ in \eqref{Wgen} is the same for both Schur functions
$W_\lambda$ and $W_{\lambda'}$. Note that the involution in
\eqref{dualschur} can be applied directly to obtain the relation  
$H_{\lambda'}(\xi,\eta) = H_\lambda(\xi, -\eta)$ reported in~\cite{FHV12,BDS20}.     

When $\xi=0$ is a root of $H_\lambda(\xi)$ with multiplicity $N-2q$, 
the $N-2q$ peak locations are
not separated to leading order at the $O(|t|^{1/2})$ time scale. 
It is then necessary to examine the terms that were subdominant in
\eqref{Wasymp} when $\tilde{\theta}_1 \sim |t|^{1/2}$ and were
neglected to obtain \eqref{Wgen}. Looking back in \eqref{Wasymp} such terms
that are significant for $|t| \gg 1$ are of the form 
$\tilde{\theta}_1^k\tilde{\theta}_2^q\tilde{\theta}_3^{l},\, k+3l=N-2q$
where the exponent of $\tilde{\theta}_2$ is fixed at its largest value $q$.
In other words, one needs to collect the coefficient of $\tilde{\theta}_2^q$ 
from the Schur function $W_\lambda(\tilde{\theta})$ in \eqref{Wasymp}.
For this purpose we split $\tilde{\theta} = \hat{\theta}+\delta$ where
$\hat{\theta} = (\tilde{\theta}_1, 0, \tilde{\theta}_3, \ldots) 
+ (0, \tilde{\theta}_2,0, \ldots)$ and apply \eqref{shift} to 
$W_\lambda(\tilde{\theta})$, which leads to the following 
\[ W_\lambda(\hat{\theta}+\delta) = \sum_{\mu}W_\mu(\hat{\theta})W_{\lambda/\mu}(\delta)\,,
\qquad \mu \subseteq \lambda \,.\]
It can be shown from the definition of $W_{\lambda/\mu}$ in \eqref{ss} that
since $\delta$ depends only on $\tilde{\theta}_2$, $W_{\lambda/\mu}(\delta)$
is nonvanishing only when $\mu$ is obtained from $\lambda$ by removing
successive 2-blocks, i.e., $|\lambda|-|\mu| = 2j, \, j \geq 0$ so that
$W_{\lambda/\mu}(\delta) = C_j\tilde{\theta}_2^j$ for some constant $C_j$. This is
precisely the process of obtaining the 2-core partition of $\lambda$ as
explained at the beginning of this subsection (see also Example 4.2).
Therefore, the above sum starting from $\mu = \emptyset$ must terminate when
$\mu = \tilde{\lambda}$ which is the 2-core of $\lambda$ and 
$W_{\lambda/\tilde{\lambda}}(\delta) = C_q\tilde{\theta}_2^q$ as claimed at the
end of Proposition 4.1. Since $\tilde{\lambda}$ is a triangular partition the
analysis presented in Section 4.1.1 applies to this case. Using \eqref{ss} and
the degree vector $\tilde{m} = (0,1,\ldots,2l-2,2l-1,2l+1,\ldots 2k-1\}$ for
the partition $\tilde{\lambda}$ in Proposition 4.1, one obtains
\[W_{\tilde{\lambda}}(\hat{\theta}) = 
\Wr(p_0,p_1,\ldots,p_{2l-1},p_{2l+1},\ldots,p_{2k-1})
=\Wr(p_1,p_3,\ldots,p_{2d-1}) \,, \qquad d=k-l\,, \quad k \geq l \,.      
\]
A similar expression can be obtained when $k<l$. Also 
when $k=l, \, d=0$, and $\tilde{\lambda} = \emptyset$. In this case,
$N$ is even and $q=N/2$ so that $\xi=0$ is not a root of $H_\lambda(\xi)$
(cf. Example 4.4).

Isolating the coefficient of $\tilde{\theta}_2^q$ in 
$W_\lambda(\tilde{\theta})$ leads to the dominant behavior
\begin{equation}
W_\lambda(\tilde{\theta}) \sim W_{\lambda/\tilde{\lambda}}(\delta)
W_{\tilde{\lambda}}(\hat{\theta}) = C_q\tilde{\theta}_2^q\Wr(p_1,p_3,\dots,p_{2d-1})\,,
\label{Wd}
\end{equation}
and when $W_\lambda(\tilde{\theta})=0$ the corresponding dominant balance is 
$\tilde{\theta}_1 \sim O(|t|^{1/3})$ as in Section 4.1.1. In this case,
when $|t| \gg 1$, $W_\lambda(\tilde{\theta}) \sim O(|t|^{(N+q)/3})$ while the 
terms in \eqref{Wgen}, $\tilde{\theta}_1^{N-2j}\tilde{\theta}_2^j \sim O(|t|^{(N+j)/3})$ 
for $j<q$, are now sub-dominant. Then the remaining $d=N-2q$ approximate peak
locations for the $N$-lump solution corresponding to partition $\lambda$
as well as its conjugate $\lambda'$ are obtained by zeros of the 
Yablonskii-Voro'bev polynomial $Q_d(\xi)$ as in \eqref{trianglezero}
for $j=1,\ldots,d$. Also note that the coefficient $C_q$ in \eqref{Wd}
is recovered from $W_{\lambda/\tilde{\lambda}}(\delta)$. Indeed from \eqref{ss},
\begin{equation}
C_q\tilde{\theta}_2^q = W_{\lambda/\tilde{\lambda}}(\delta) 
=\det(p_{m_j-\tilde{m}_i})(\delta)\,, \qquad  
\delta=(0,\tilde{\theta}_2, 0, \ldots)\,,
\label{Cq}
\end{equation}
where $m, \tilde{m}$ denote the degree vectors of the 
partitions $\lambda, \tilde{\lambda}$, respectively.

In summary, it is established that there are indeed $N$ distinct peaks for the general 
$N$-lump solutions of KPI equation \eqref{kp}. The wave pattern in the
$rs$-plane exhibit two distinct time scales such that the dynamics of 
$q$ out of the $N$ peaks occurs at a faster $O(|t|^{1/2})$ time scale,
while the remaining $N-2q$ peaks evolve at a slower $O(|t|^{1/3})$ time scale.    
Consequently, the $q$ peaks in the far field region of the wave pattern
form a rectangular shape or lie symmetrically along the $r$- or $s$-axis,
whereas the other $N-2q$ peaks form a triangular pattern
in the near-field region of the $rs$-plane.
\begin{example}
Let $\lambda = (1,4)$ corresponding to $m=(1,5)$ and $N=5$. Then the Schur function
from \eqref{Wshift} is
\[W_\lambda(\tilde{\theta}) = \Wr(p_1,p_5) = \frac{\tilde{\theta}_1^5}{30} +
\frac{\tilde{\theta}_1^3\tilde{\theta}_2}{3}+\frac{\tilde{\theta}_1^2\tilde{\theta}_3}{2}
- \tilde{\theta}_2\tilde{\theta}_3-\tilde{\theta}_5 \,.\]
The largest exponent of $\tilde{\theta}_2$ in $W_\lambda(\tilde{\theta})$ is
$q=1$. So the first dominant balance 
$\tilde{\theta}_1^5 \sim \tilde{\theta}_1^3\tilde{\theta}_2$ to solve \eqref{Wzero}
gives the scaling $\tilde{\theta}_1 \sim |t|^{1/2}$, Collecting the dominant terms
which are $O(|t|^{5/2})$, one obtains as in \eqref{Wgen}
$W_\lambda \sim \sf{\tilde{\theta}_1^3}{30}(\tilde{\theta}_1^2+\tilde{\theta}_2$),
the rest of the terms in $W_\lambda$ are subdominant being $O(|t|^2)$.
Next putting $\tilde{\theta}_1=iz, \, z=|t|^{1/2}\xi$, and after
using \eqref{thetars} for $\tilde{\theta}_1, \tilde{\theta}_2$, the dominant terms give
the following asymptotics accordingly as \eqref{heat}
\[ W_{\lambda} \sim i^5|t|^{5/2}\big[H_\lambda(\xi,\eta)+O(|t|^{-1/2})\big]\,, \quad \qquad
H_\lambda(\xi,\eta) = \Wr(H_1, H_5) = \frac{\xi^3}{30}\big(\xi^2+10\eta\big)\,. 
\]
\begin{figure}[t!]
\begin{center}
\includegraphics[scale=0.42]{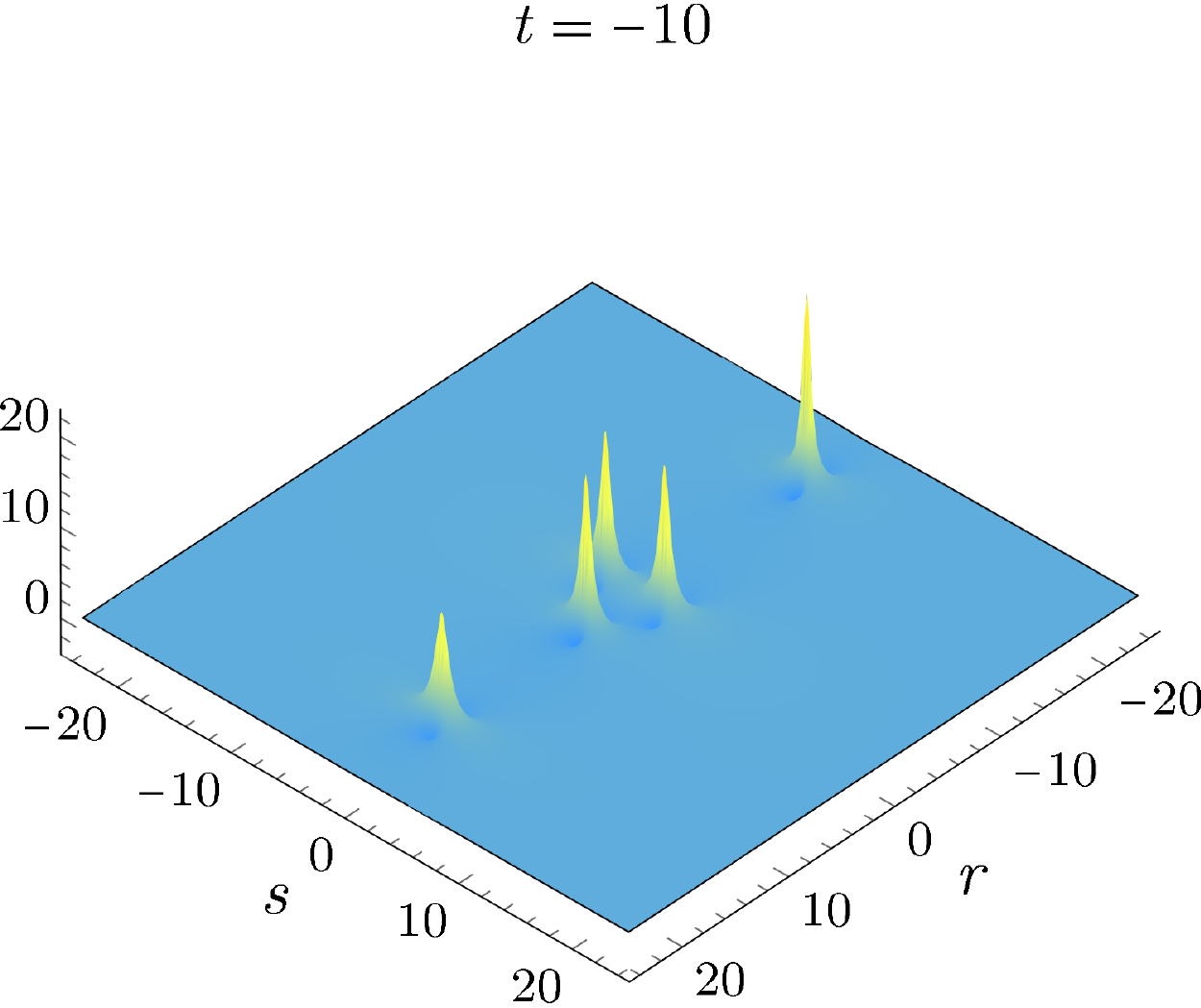} \quad
\includegraphics[scale=0.42]{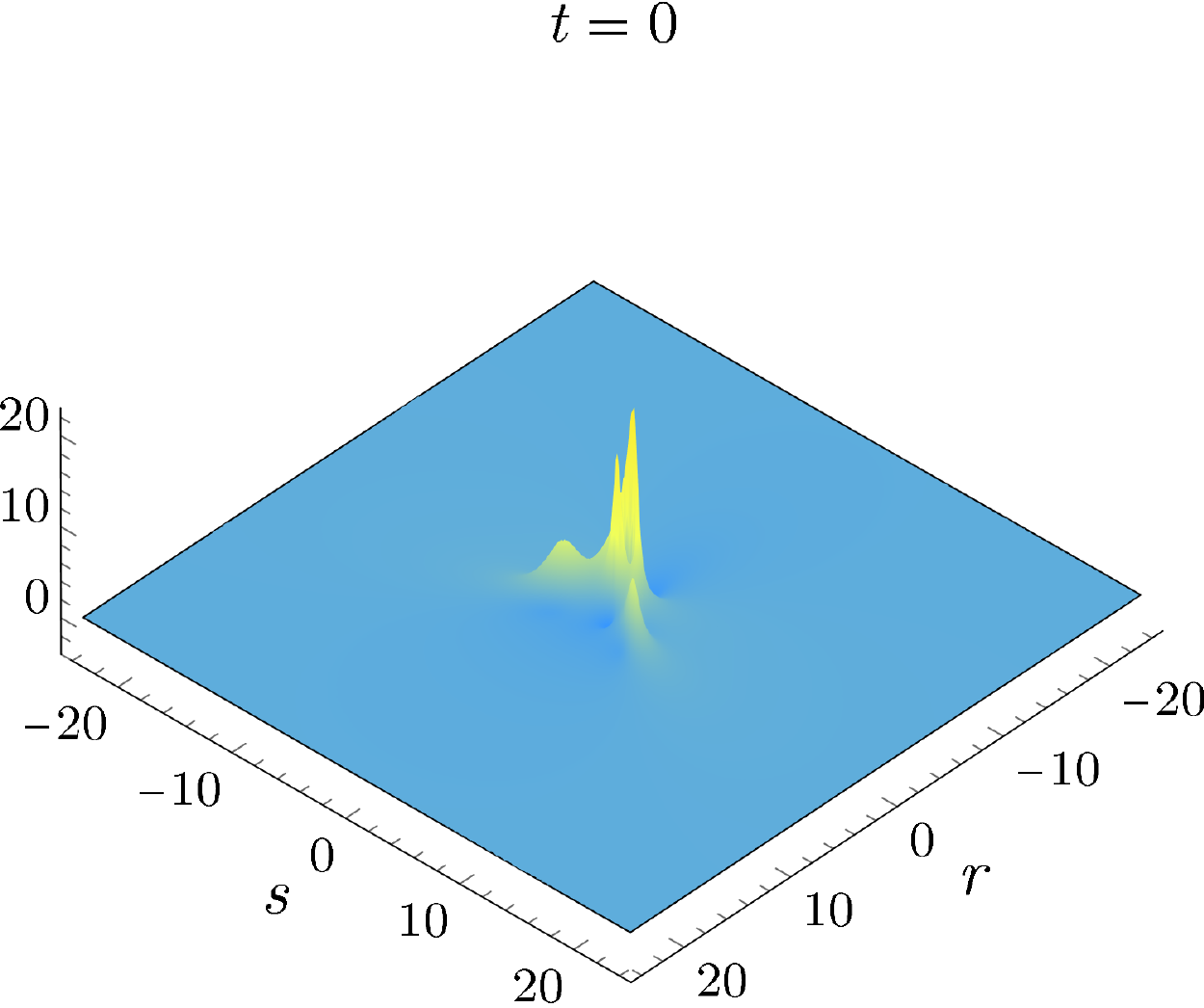} \quad
\includegraphics[scale=0.42]{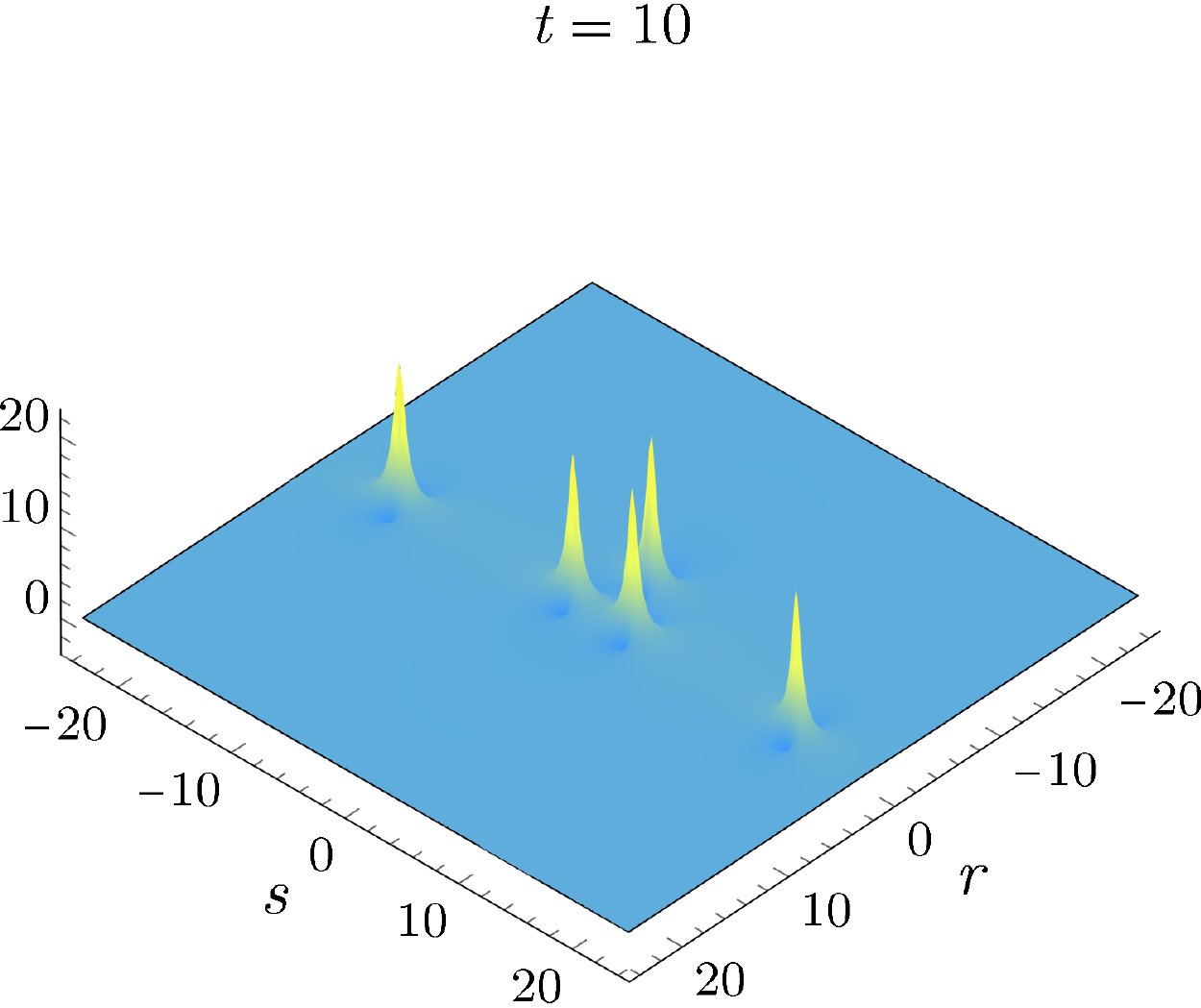} \\
\includegraphics[scale=0.42]{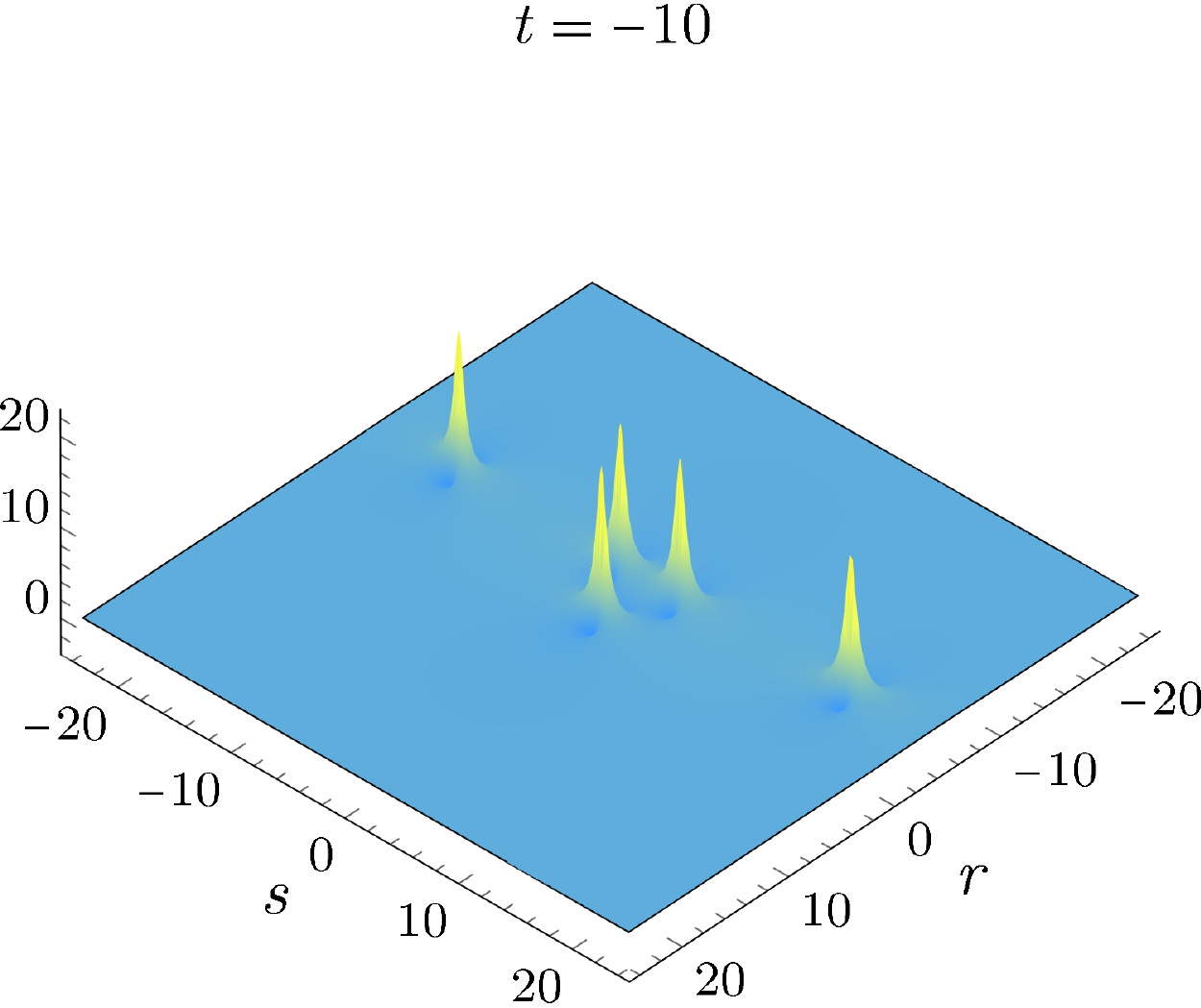} \quad
\includegraphics[scale=0.42]{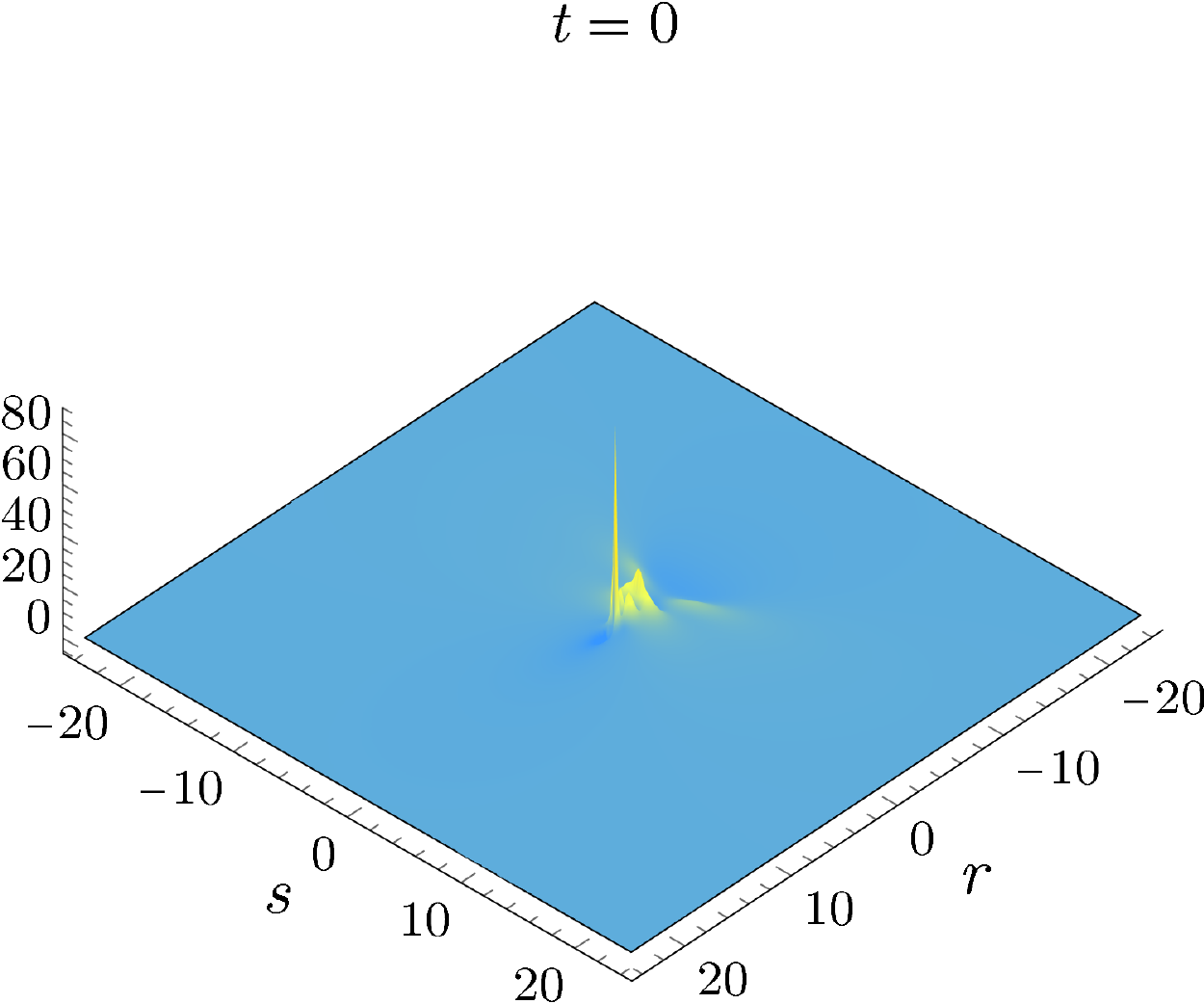} \quad
\includegraphics[scale=0.42]{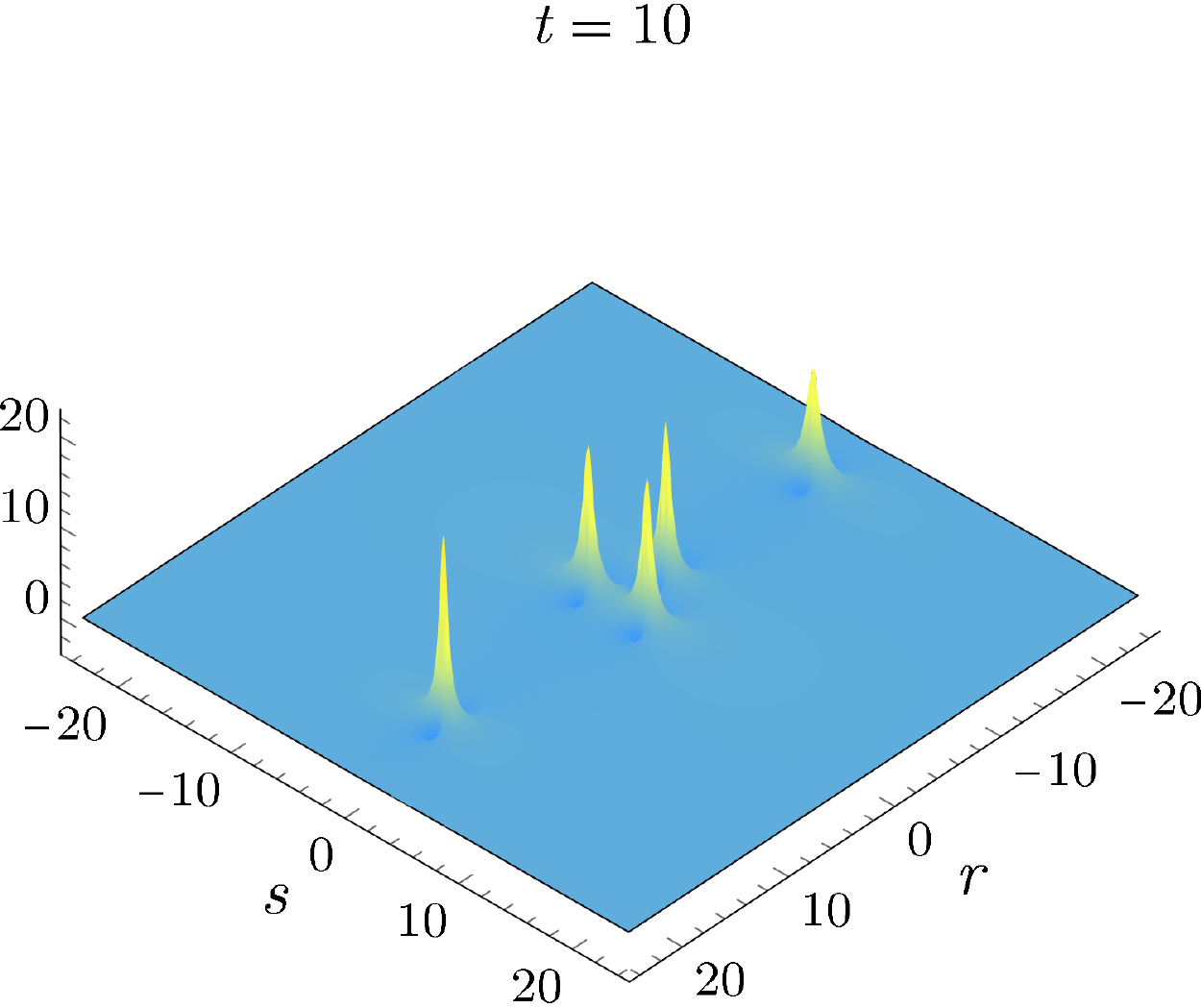} 
\end{center}
\vspace{-0.2in}
\caption{The $5$-lump solutions of Example 4.4. Solutions
corresponding to $\lambda=(1,4)$ (top panel) and the conjugate
$\lambda'=(1,1,1,2)$ (bottom panel). Note that the two peaks in the
outer region of the plane satisfy the symmetry $Z^{\lambda'}_j(t)
\sim Z^{\lambda}_j(-t)$ at $t = \pm 10$. A triangular configuration
of three lumps are formed in the inner region which is the same
for both $\lambda$ and $\lambda'$. KP Parameters: same as in Figure~\ref{6lump}.}
\label{5lump}
\end{figure}
The approximate peak locations for $|t| \gg 1$ are obtained by 
solving $W_\lambda=0$ for $z(t) = |t|^{1/2}(z_0+\epsilon z_1+ \cdots)$
where $\epsilon = |t|^{-1/2}$.
Hence, to leading order the distinct peak locations $z_0$
in this example correspond
to the two non-zero roots $\xi = \pm\sqrt{-10\eta}$ of $H_\lambda(\xi)$ above.
Using $\eta=\pm 3b$ and the shift from \eqref{h}
$h_1=\sf{n}{2b}=\sf{1}{b}$ (since $n=2$ here), the approximate peak locations
for $|t| \gg 1$ are given by 
\[ r^\pm(t)+i s^\pm(t) = (r_0-\sf{1}{b}+is_0) + 
|t|^{1/2}\big(\sqrt{\mp 30b}+O(|t|^{-1/2})\big)\,, \qquad -\gamma_1=r_0+is_0 \,,\]
as in \eqref{nonzero}. When $t<0$, $\eta=-3b$ corresponds to real roots
$\xi = \pm\sqrt{30b}$. Hence for $t<0$ and $|t| \gg 1$, the two peaks lie 
on the $r$-axis symmetrically about the origin. For $t \gg 1$, the
two peaks are instead located on the $s$-axis symmetrically about the origin
consistent with the symmetry $(r^+_j(t), s^+_j(t)) \sim (-s^-_j(t), r^-_j(t))$ 
as $|t| \to \infty$. These features are illustrated in Figure~\ref{5lumppeaks}. 
From the asymptotic expansion of $z(t)$, one can also calculate 
$z_1$ which has a further $O(1)$ contribution to the peak locations.
For example, when $t \gg 1$, $z_1=\sf{11}{5b}$ which shifts the two peak locations
slightly to the {\it right} of the $s$-axis. This effect is evident in the top 
right panel of Figure~\ref{5lumppeaks}.

The remaining approximate peak locations are obtained from a second dominant
balance $\tilde{\theta}_1 \sim O(|t|^{1/3})$ in $W_\lambda(\tilde{\theta})$ above, 
arising from the terms linear in $\tilde{\theta}_2$. That is,
$W_\lambda(\tilde{\theta}) \sim 
\tilde{\theta}_2(\tilde{\theta}_1^3/3-\tilde{\theta}_3) \sim O(|t|^2)$,
and the remaining terms are then $O(|t|^{5/3})$, hence subdominant.
Solving for $\tilde{\theta}_1^3/3-\tilde{\theta}_3 \sim 0$, and using \eqref{h}
with $n=2$ leads to the 
approximate peak locations that are exactly the same as the ones
for the $3$-lump solution given above Figure 4 of Section 2.3.3.
The origin of the second dominant balance is from the 2-core 
$\tilde{\lambda} = (1,2)$ of the partition $\lambda= (1,4)$ obtained by
removing a single horizontal 2-block from its Young diagram i.e., 
$Y_\lambda = \raisebox{0.05in}{\ytableausetup{boxsize=0.4em}\ydiagram{4,1}}   
\longrightarrow \raisebox{0.05in}{\ytableausetup{boxsize=0.4em}\ydiagram{2,1}}  
= Y_{\tilde{\lambda}}$. Since $m_1, m_2$ are both odd, $d=k-l=2$ from Proposition 4.1.
Then $|\tilde{\lambda}| = \sf{d(d+1)}{2}=3$. Again from Proposition 4.1, 
$N-2q = |\tilde{\lambda}|=3$ implies that $q=1$ which has already been found as
the largest exponent of $\tilde{\theta}_2$ in $W_\lambda(\tilde{\theta})$ above. 
Using \eqref{Wd} one can directly obtain the dominant term
$W_\lambda(\tilde{\theta}) \sim C_1\tilde{\theta}_2\Wr(p_1,p_3) =
C_1\tilde{\theta}_2(\tilde{\theta}_1^3/3-\tilde{\theta}_3)$. The coefficient $C_1$ 
is obtained by applying \eqref{Cq}. Here $m=(1,5)$ and $\tilde{m}=(1,3)$.
Hence, $C_1\tilde{\theta}_2 = \big|\begin{smallmatrix} p_0 & p_4 \\ 0 & p_2 
\end{smallmatrix}\big| = p_2(0,\tilde{\theta}_2) = \tilde{\theta}_2 \Rightarrow C_1=1$.
The full wave pattern as well as the peak locations in the $rs$-plane
are illustrated in the top panels of Figures~\ref{5lump} and ~\ref{5lumppeaks}.

Let us also briefly consider the the $5$-lump solution corresponding to
the conjugate partition $\lambda' = (1,1,1,2)$ with degree vector $m'=(1,2,3,5)$.
By direct computation one verifies that 
\[W_{\lambda'}(\tilde{\theta}) = \Wr(p_1,p_2,p_3,p_5) = \frac{\tilde{\theta}_1^5}{30} 
-\frac{\tilde{\theta}_1^3\tilde{\theta}_2}{3}+\frac{\tilde{\theta}_1^2\tilde{\theta}_3}{2}
+ \tilde{\theta}_2\tilde{\theta}_3-\tilde{\theta}_5 \,,\]
which also follows from the first involution relation in \eqref{dualschur}
by simply switching $\tilde{\theta}_2 \to -\tilde{\theta}_2$ in $W_\lambda(\tilde{\theta})$
above. Therefore, again $q=1$ and the dominant balances are same as before.
However, when $\tilde{\theta}_1 \sim O(|t|^{1/2})$,
\[ W_{\lambda'} \sim i^5|t|^{5/2}\big[H_{\lambda'}(\xi,\eta)+O(|t|^{-1/2})\big]\,, 
\quad \qquad H_{\lambda'}(\xi,\eta) = H_{\lambda}(\xi,-\eta)=i^NH_{\lambda}(i\xi,\eta) 
\,,\] 
so that the real and pure imaginary roots are switched from the previous case.
This leads to the symmetry $Z_j^{\lambda'}(t) \sim Z_j^\lambda(-t), \,\, j=1,2$
when $|t| \gg 1$ for the two peaks located in far field region of the $rs$-plane.
However, the approximate configuration of the remaining three peaks that form
the inner core in the near field region remains the same as in the previous case.
Note that in this case $n=4$ so that the shift $h_1=\sf{2}{b}$ is larger than
the previous case. This causes the peak locations to shift further toward the left
of the $s$-axis as clearly evident from e.g., the bottom left frame in 
Figure~\ref{5lumppeaks}. The full wave pattern and the peak locations are shown in 
the bottom panels of Figures~\ref{5lump} and ~\ref{5lumppeaks}.
\end{example}   
\begin{figure}[h!]
\begin{center}
\includegraphics[scale=0.4]{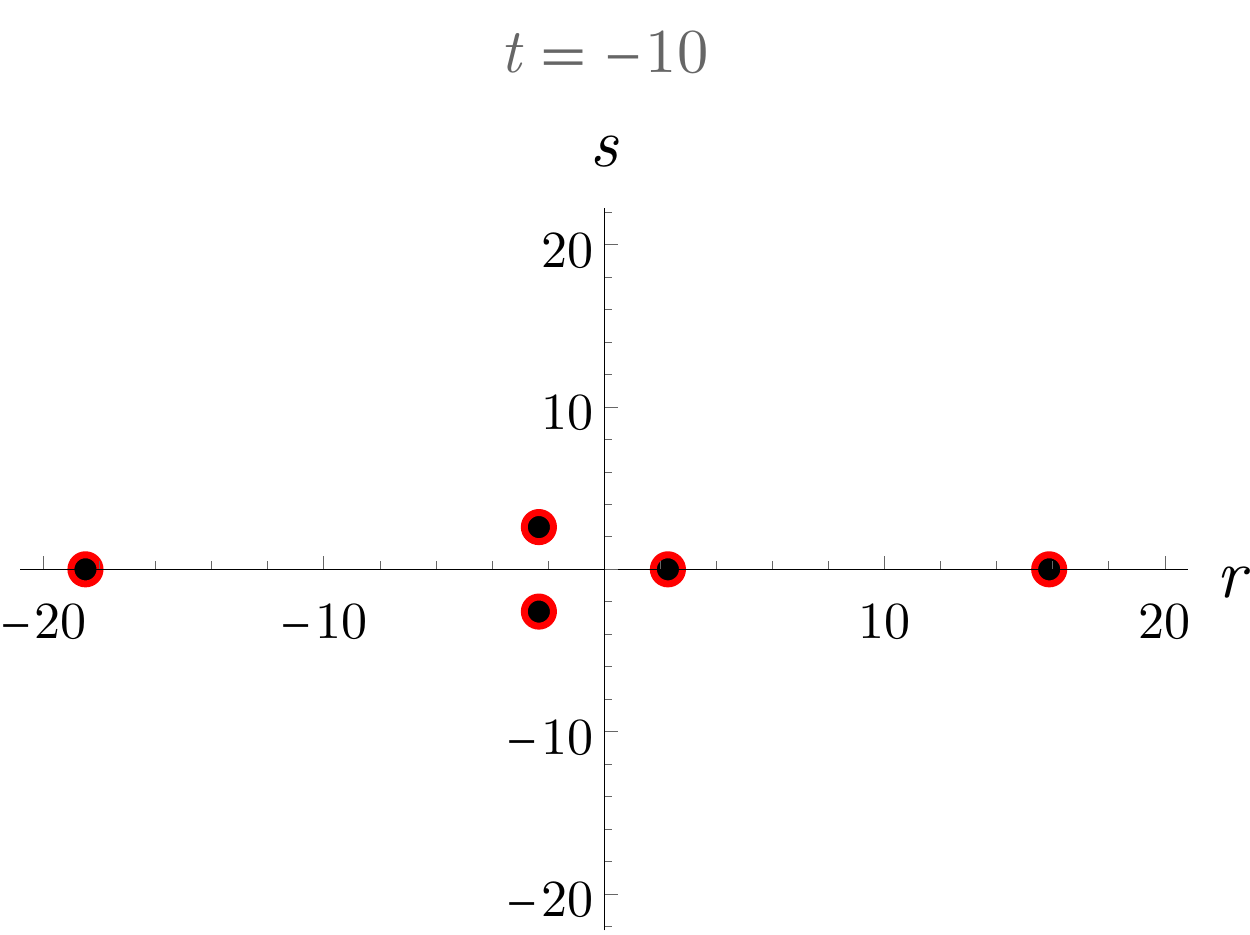} \qquad \qquad
\includegraphics[scale=0.4]{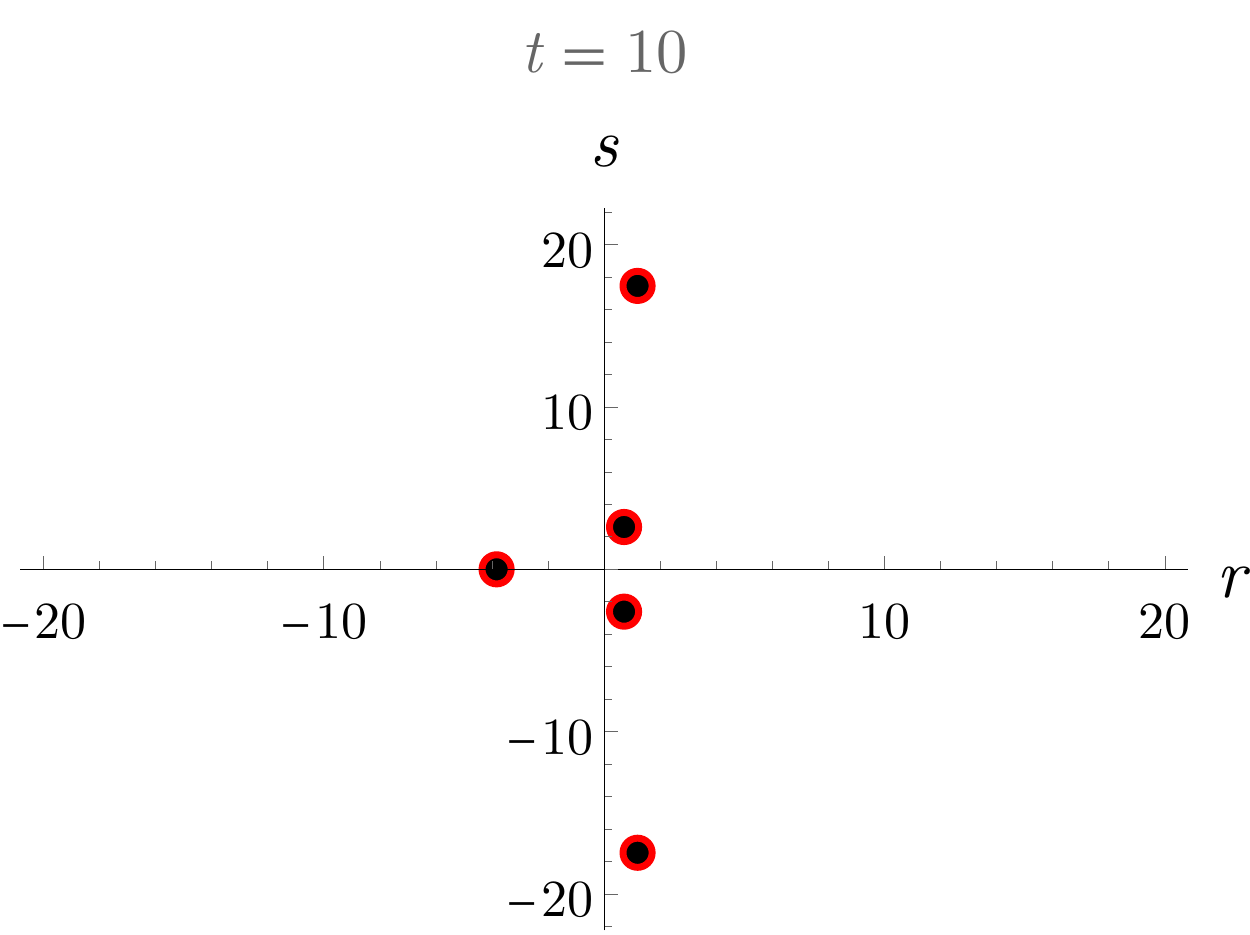} \\
\includegraphics[scale=0.4]{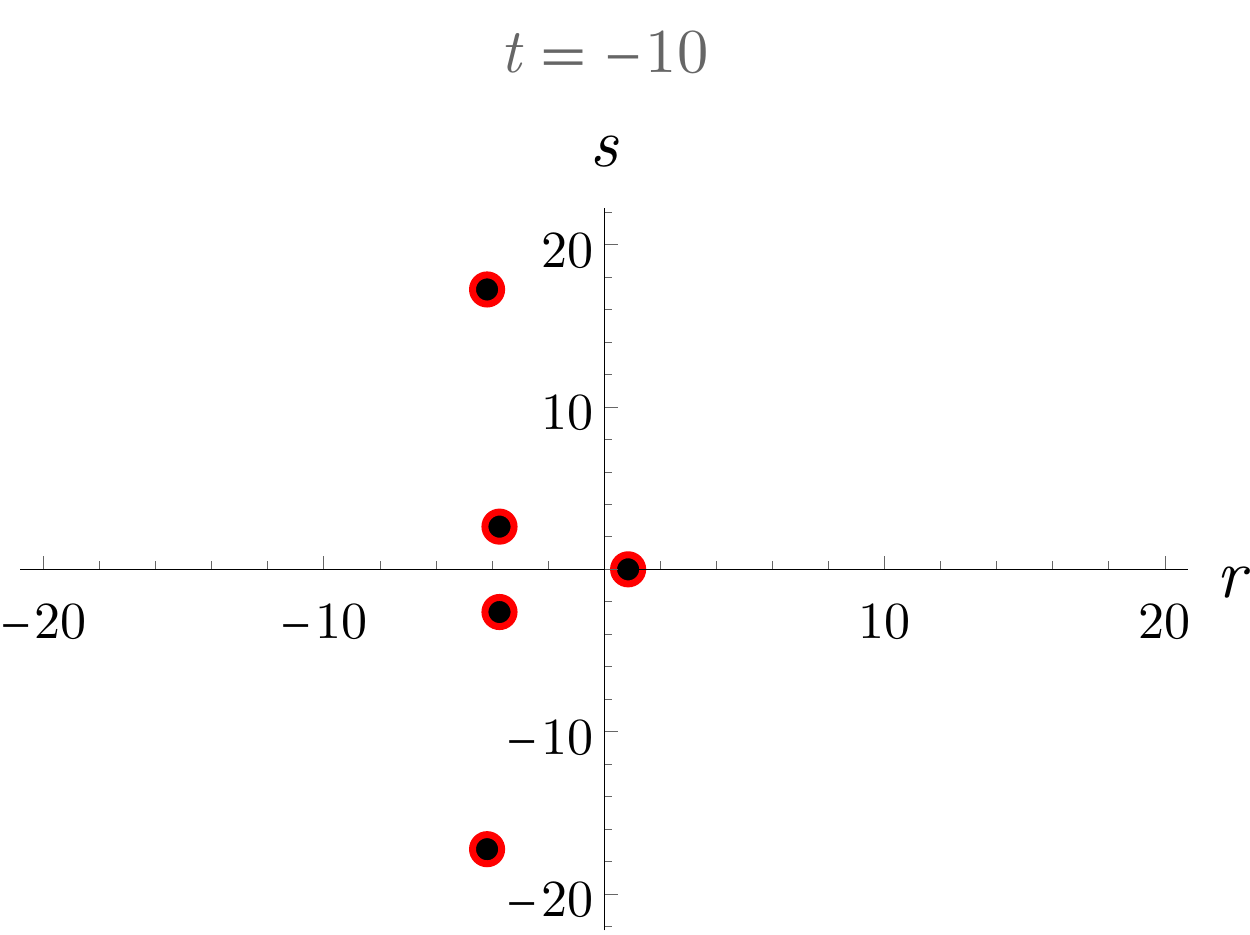} \qquad \qquad
\includegraphics[scale=0.4]{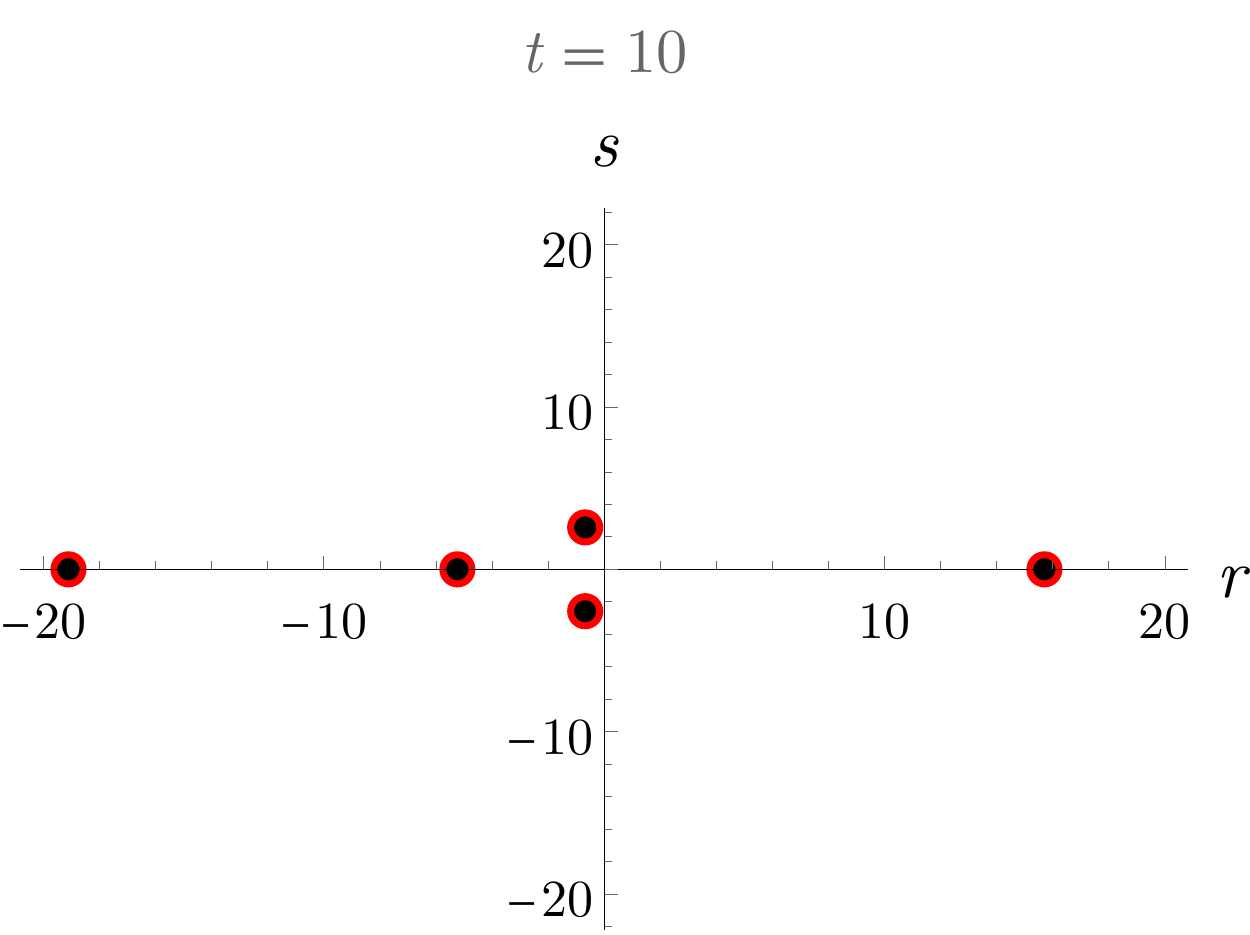}
\end{center}
\vspace{-0.2in}
\caption{Peak locations for Example 4.4 at $t=-10$ (left frames) and $t=10$ (right frames).
Black dots represent approximate peak locations where $Q({\bf 0})=0$
as in \eqref{Wzero}, red open circles are exact locations.
The top panel shows $5$-lump peak locations associated with the partition
$\lambda=(1,4)$, the bottom panel corresponds to its conjugate partition. 
KP parameters: same as in Figure~\ref{5lump}.}
\label{5lumppeaks}
\end{figure}
\begin{remark}
\begin{itemize}
\item[(a)] It should be emphasized that \eqref{nonzero} assumes that
the non-zero roots of the polynomial $H_\lambda(\xi)$ are simple.
Our assumption is based on a conjecture made in~\cite{FHV12} and
we are yet to find a counter-example. 
\item[(b)] The heat polynomials $H_n(\xi,\eta)$ introduced above 
\eqref{heat-a} satisfies the heat equation $h_{\xi\xi} = h_\eta$ in 
two-dimensions with initial condition $h(\xi,0)=\sf{\xi^n}{n!}$. 
They are closely related to the Hermite 
polynomials via $H_n(\xi,\eta) = \sf{(-\eta)^{n/2}}{n!}\,H_n(z), \,\, 
z=\xi(-4\eta)^{-1/2}$. The roots of the heat polynomials $H_n(\xi, \eta)$  
are real (in $\xi$) if the parameter $\eta < 0$ and pure imaginary
if $\eta > 0$. These roots describe the approximate peak locations of
$N$-lump solutions of KPI when the associated partition $\lambda$ has only
one part, i.e., $n=1$ and $\lambda=(N)$. These were recently studied in~\cite{CZ21}.
\item[(c)] The coefficient $C_q$ in Proposition 4.1 is related to the 
character $\chi^\lambda(1^{N-2q},2^q)$ by the formula
\[C_q = \frac{\chi^\lambda(1^{N-2q},2^q)}{\chi^{\tilde{\lambda}}(1^{N-2q})q!} \,. \]
An expression for $C_q$ was derived
in~\cite{BDS20} using the same ideas related to the 2-core of a partition
as the present article. However, we believe that \eqref{Cq} of this paper
provides a more direct formula for $C_q$.
\item[(d)] The long time asymptotics of the approximate peak locations
described in this subsection was also obtained in~\cite{YY21} but from
a different approach as explained in Section 1.    
\end{itemize}
\end{remark}
\subsection{Asymptotic peak heights}
In Section 4.1, the asymptotic form of the approximate peak
locations $Z_j(t)=r_j(t)+is_j(t), \, j=1,\ldots,N$ for the
KPI $N$-lump solutions were obtained by solving \eqref{Wzero}.
That is, the $Z_j(t)$ satisfy 
\begin{equation} Q({\bf 0})(Z_j(t))=0 \quad \Leftrightarrow \quad 
W_\lambda(\tilde{\theta})(Z_j(t))=0\,, \qquad j=1,\ldots,N \,.
\label{Qzero}
\end{equation}
The approximate peak heights are then obtained by calculating
$u_j(t) = u(Z_j(t))$ which estimates the local maximum values of the
solution \eqref{u} in the co-moving $rs$-plane.

In order to obtain an asymptotic expression for $u_j(t)$,
we decompose the $\tau$-function in \eqref{square-b} as 
$$
\tau_p = \frac{|Q({\bf 0})|^2 +l}{(2b)^{n(n-1)}}\,, \qquad
l = \sum_{{\bf r} > {\bf 0}} \frac{|Q({\bf r})|^2}{(2b)^{2|\lambda({\bf r})|}}\,,
$$
where recall that $|{\bf 0}|= \sf{n(n-1)}{2}$ and (from Section 3.1) that  
$|{\bf r}|-|{\bf 0}|=|\lambda({\bf r})|$.
Thus $l$ denotes all other square terms in the sum in \eqref{square-b}
except the first one. Then the $N$-lump solution in \eqref{u} can be expressed as
\[u = 2\partial^2_{x}\ln(|Q({\bf 0})|^2+l) = 
2\,\frac{\big(|Q({\bf 0})|^2\big)_{xx}+l_{xx}}{l}-
2\left(\frac{\big(|Q({\bf 0})|^2\big)_x+l_x}{l}\right)^2 \,. \]
Writing $|Q({\bf 0})|^2=|A+iB|^2=A^2+B^2$ and using \eqref{Qzero},
one obtains
$$\partial_x|Q({\bf 0})|^2 = 2(AA_x+BB_x) =0,, \quad \text{and} \quad  
\partial^2_{xx}|Q({\bf 0})|^2=2(A_x^2+B_x^2)$$ 
at $Z_j(t)$. Furthermore, $A_x^2+B_x^2 = |A_x+iB_x|^2 = |\partial_xQ({\bf 0})|^2$.
In order to calculate $\partial_xQ({\bf 0})$, first note that from either
the definition in \eqref{ss} or the explicit form in \eqref{schur},
$$\partial_xW_\lambda = i\partial_{\theta_1}W({\bf 0}) = iW({\bf 1}) = 
iW_{\lambda/\lambda({\bf 1})}$$ 
by differentiating each row of the determinant $W({\bf 0})$. 
Recall that the multi-index ${\bf 1} := (01 \cdots (n-2)n)$ and the corresponding
partition is given by $\lambda({\bf 1})=(0,0,\ldots,1)$.
Then it follows from \eqref{Wshift} that
$$\partial_xQ({\bf 0})
= i\partial_{\tilde{\theta}_1}i^{|{\bf 0}|}W_\lambda(\tilde{\theta})=
i^{\sf{(n-2)(n+1)}{2}}W_{\lambda/\lambda({\bf 1})}(\tilde{\theta}) \,.$$
Therefore, the expression for each peak height takes the form
\begin{equation*}
u_j(t) = 4\frac{|W_{\lambda/\lambda({\bf 1})}(\tilde{\theta})|^2}{l}+ 2(\ln l)_{xx} \,,
\end{equation*}
where the right hand side is evaluated at $\theta(Z_j(t))$.
The next task is to estimate $W_{\lambda/\lambda({\bf 1})}$ and $l$ 
in the above expression for $u_j$ at the peak locations $Z_j(t)$.

From Proposition 3.1 it follows that the degree of $\theta_1$ in
$W_\lambda$ and $W_{\lambda/\lambda({\bf r})}$ are $N$ and 
$N-|\lambda({\bf r})|$, respectively. Consider first the case when 
$\theta_1 \sim |t|^{1/p}$, $p=2$ or $p=3$ for $|t| \gg 1$. 
Then $W_{\lambda/\lambda({\bf r})} \sim O(|t|^{(N-|\lambda({\bf r})|)/p})$ 
at $Z_j(t)$ for all ${\bf r} \geq 0$. It is then evident from
\eqref{skewschur-b} that the leading order term for $|t| \gg 1$ in 
$l$ is $\sf{1}{4b^2}|Q({\bf 1})|^2$
so that $l \sim \sf{1}{4b^2}|Q({\bf 1})|^2\big[1+O(|t|^{-1/p})\big]$. Like $Q({\bf 0})$ in
\eqref{Wshift}, $Q({\bf 1})$ can also be expressed as a shifted skew Schur
function. Differentiating \eqref{shift} with respect to the shift variables
$h=(h_1,\ldots,h_N)$, one can derive an analogous property for the
shifted skew Schur function~\cite{M95,K18}   
\begin{equation*}
W_{\lambda/\nu}(\theta+h) = \sum_\mu W_{\mu/\nu}(h)W_{\lambda/\mu}(\theta)\,, \qquad 
\nu \subseteq \mu \subseteq \lambda \,.     
\end{equation*}
Then by putting ${\bf r} = {\bf 1}, \, \lambda({\bf 1})=\nu, \,\lambda({\bf s})=\mu$ 
in \eqref{skewschur-b}, and suitably choosing
a set of new shifted variables $c=(c_1,c_2, \ldots, c_N)$ such that  
$W_{\lambda({\bf s})/\lambda({\bf 1})}(c) = 
i^{|\lambda({\bf s})|-|\lambda({\bf 1})|}U\tbinom{\bf 1}{\bf s}$,
the above relation for shifted skew Schur function leads to
\[ Q({\bf 1}) = W_{\lambda/\lambda({\bf 1})}(\hat{\theta})\,, \qquad
\hat{\theta} := \theta + c \quad \text{then,} \quad 
l \sim \sf{1}{4b^2}|W_{\lambda/\lambda({\bf 1})}(\hat{\theta})|^2\big[1+O(|t|^{-1/p})\big]\,.
\]
Since $l$ is a polynomial in $\theta_1$ (and $\bar{\theta}_1$), $(\ln l)_{xx}$
decays as $\theta_1^{-2}$ (and $\bar{\theta}_1^{-2}$), thus 
$(\ln l)_{xx} \sim O(|t|^{-2/p})$ for $|t| \gg 1$. Substituting these estimates 
into the expression of $u_j(t)$ above yields 
\begin{equation}
u_j(t) \sim 16b^2\frac{|W_{\lambda/\lambda({\bf 1})}(\tilde{\theta})|^2}
{|W_{\lambda/\lambda({\bf 1})}(\hat{\theta})|^2}\big(1+ O(|t|^{-1/p})\big) \,,
\label{upeak}
\end{equation}
and where $W_{\lambda/\lambda({\bf 1})}(\tilde{\theta}) \sim 
W_{\lambda/\lambda({\bf 1})}(\hat{\theta}) \sim O(|t|^{(N-1)/p})$ as $|t| \gg 1$.
Then from \eqref{upeak}, it follows that
asymptotically as $|t| \to \infty$, each peak height $u_j(t) \sim 16b^2$
for $j=1,\ldots,N$. Thus the height of each of the $N$ peaks of a KPI $N$-lump
solution asymptotically approaches the $1$-lump peak height.

The asymptotics presented above need to be slightly modified for the case
of triangular $N$-lump solutions. As mentioned in Section 4.1.1, if 
$N=3m+1, \, m \in \mathbb{N}$, then $\xi=0$ is a root of the Yablonskii-Vorob'ev
polynomials $Q_n(\xi)$. The corresponding peak location $Z_0(t) \sim O(1)$  
as mentioned earlier. The asymptotic formula for the approximate peak 
height $u(Z_0(t))$ is obtained in the same way as prescribed above except 
that one needs to re-estimate
$W_{\lambda/\lambda({\bf 1})}$ and the derivatives $l_x, l_{xx}$ of the lower order
term $l$ to obtain a corresponding expression that would replace \eqref{upeak}.
To that end, one needs to first consider the asymptotic expression for 
$W_\triangle(\tilde{\theta})$ given just above \eqref{triangle}. For $|t| \gg 1$,
the dominant contribution arises from the last term 
$\tilde{\theta}_1\tilde{\theta}_3^m$ in the sum since the coefficient
$\chi^\triangle(1,3^m) \neq 0$~\cite{KO03} (see Remark 4.1(d)). Since 
$\tilde{\theta}_1 \sim O(1)$ and $\tilde{\theta}_3 \sim O(t)$, 
\[W_\triangle(\tilde{\theta}) \sim O(t^m)\,, \quad 
\partial_{\tilde{\theta}_1}W_\triangle(\tilde{\theta})=
W_{\triangle/\lambda({\bf 1})}(\tilde{\theta}) = O(t^m)\,, \quad  
W_{\triangle/\lambda({\bf r})}(\tilde{\theta}) \sim  O(t^{m({\bf r})})\,,
\quad m({\bf r}) = \lfloor\sf{N-\lambda({\bf r})}{3}\rfloor\,,  \]
where $\tilde{\theta} = \tilde{\theta}(Z_0(t))$. Interestingly however, 
$\partial^2_{\tilde{\theta}_1}W_\triangle(\tilde{\theta})\sim O(t^{m-2})$ instead of 
$O(t^{m-1})$ because the $\tilde{\theta}_1^4\tilde{\theta}_3^{m-1}$ is missing from
$W_\triangle(\tilde{\theta})$ since $\chi^\triangle(1^4,3^{m-1}) = 0$ when 
$N=3m=1$~\cite{Ta00} (see Remark 4.1(d)). The above estimates then give
\[ l \sim \sf{1}{4b^2}|W_{\triangle/\lambda({\bf 1})}(\hat{\theta})|^2\big[1+O(|t|^{-2})\big]\,,
\quad \quad l_x/l \sim l_{xx}/l = O(t^{-2}) \]
since $l_x$ involves $\partial^2_{\tilde{\theta}_1}W_\triangle(\tilde{\theta})$
and its complex conjugate. Therefore, \eqref{upeak} is modified accordingly as
\begin{equation*}
u_j(t) \sim 16b^2\frac{|W_{\lambda/\lambda({\bf 1})}(\tilde{\theta})|^2}
{|W_{\lambda/\lambda({\bf 1})}(\hat{\theta})|^2}\big(1 + O(t^{-2})\big) \,,
\end{equation*}
where $W_{\triangle/\lambda({\bf 1})}(\tilde{\theta}) \sim 
W_{\triangle/\lambda({\bf 1})}(\hat{\theta}) \sim O(t^{(N-1)/3})$ as $|t| \gg 1$.
Thus $u(Z_0(t)) \sim 16b^2$ as $|t| \to \infty$ like in the other previous cases.  
 
The asymptotic form of $N$-lump solution in the neighborhood of each
peak location is also derived from the estimates provided above.
In the $rs$-plane let $(r,s) = (r_j(t)+\Delta r, s_j(t)+\Delta s)$ where 
$\Delta r, \Delta s \sim O(1)$ and $Z_j(t) = r_j(t)+is_j(t)$ is the 
$j^{\rm th}$ approximate peak location, $1 \leq j \leq N$.  
Then the (re-scaled) KPI $\tau$-function near $Z_j(t)$
has the following form for $|t| \gg 1$
\[(2b)^{n(n-1)}\tau_P(r,s,t)=|Q({\bf 0})|^2+l \sim |W_\lambda(\tilde{\theta})|^2 + 
\sf{1}{4b^2}|W_{\lambda/\lambda({\bf 1})}(\hat{\theta})|^2\big[1+O(|t|^{-1/p})\big]\,,\]
where as before $\tilde{\theta} = \theta+h$ and $\hat{\theta}+\theta+c$.
Using \eqref{thetars}
\[W_\lambda(\tilde{\theta})(r,s,t) = W_\lambda(\tilde{\theta}(Z_j(t))
+ (i\Delta r-\Delta s) \partial_{\tilde{\theta_1}} W_\lambda(\tilde{\theta}(Z_j(t))
+\sf{i \Delta s}{2b} \partial_{\tilde{\theta_2}} W_\lambda(\tilde{\theta}(Z_j(t))
+H(\Delta r, \Delta s) \,, \]
where $H(\Delta r, \Delta s)$ consists of quadratic and higher powers in 
$\Delta r, \Delta s$. Now $W_\lambda(\tilde{\theta}(Z_j(t))=0$ from \eqref{Qzero},
and $\partial_{\tilde{\theta_1}} W_\lambda(\tilde{\theta}(Z_j(t)) = 
W_{\lambda/\lambda({\bf 1})}(\tilde{\theta})(Z_j(t)) = O(|t|^{(N-1)/p})$
for $|t| \gg 1$ where we take $p=2,3$ according to whether 
$\tilde{\theta}_1 \sim |t|^{1/2}$ or $\tilde{\theta}_1 \sim |t|^{1/3}$.
Moreover, differentiating $W_\lambda=W({\bf 0})$ with respect to
$\tilde{\theta}_2$ either in \eqref{schur} or in \eqref{ss}, it follows
that $\partial_{\tilde{\theta_2}} W_\lambda(\tilde{\theta})(Z_j(t)) =
W_{\lambda/\lambda({\bf r})}(\tilde{\theta})(Z_j(t)) \sim O(|t|^{(N-2)/p})$ 
since in this case $|\lambda({\bf r})|=2$. In a similar fashion, it
can be shown that $H(\Delta r, \Delta s) \sim 
O(|t|^{(N-|\lambda({\bf r})|)/p}), \, |\lambda({\bf r})| \geq 2$
since $H(\Delta r, \Delta s)$ involves 
$\partial^j_{\tilde{\theta_1}} W_\lambda(\tilde{\theta})$ and
$\partial^j_{\tilde{\theta_2}} W_\lambda(\tilde{\theta}), \, j \geq 2$ terms.
Collecting all the dominant terms, yields the following asymptotic form of
the $\tau$-function in an $O(1)$ neighborhood of each peak $Z_j(t)$  
\begin{subequations}
\begin{equation}
(2b)^{n(n-1)}\tau_P(r,s,t) \sim |W_{\lambda/\lambda({\bf 1})}(\tilde{\theta})|^2(Z_j(t))
\Big[(\Delta r)^2+(\Delta s)^2 + \sf{1}{4b^2} +O(|t|^{-1/p}) \Big] \,. 
\label{taupeak-a}
\end{equation}
Finally, substituting \eqref{taupeak-a} in \eqref{u} gives  
\[u(r,s,t) \sim u_1(\Delta r, \Delta s) + O(|t|^{-1/p}) \,,\]
near each $Z_j(t), \, j \geq 1$, where $u_1$ is the $1$-lump solution.
For the special case of triangular
$N=3m+1$-lumps where one of the approximate peak locations, 
$Z_0(t) \sim O(1)$, a similar asymptotic analysis as above leads to
\begin{equation}
(2b)^{n(n-1)}\tau_P(r,s,t) \sim |W_\lambda(\tilde{\theta})|^2(Z_0(t))
\Big[(\Delta r)^2+(\Delta s)^2 + \sf{1}{4b^2} +O(|t|^{-1}) \Big] \,. 
\label{taupeak-b}
\end{equation}
We omit the details.
\label{taupeak}
\end{subequations}
Thus it has been shown here that asymptotically 
as $|t| \to \infty$, the $N$-lump solution of KPI splits into $N$
distinct lumps whose approximate peak locations are given in Sections 4.1.1 and 4.1.2.
Furthermore, the solution $u$ in an $O(1)$ neighborhood of each peak locations
is asymptotic to the $1$-lump solution $u_1$ and decays algebraically as
$O(|t|^{1/p})$ elsewhere in the co-moving frame.
In this sense the $N$-lump solution can be viewed as a superposition of 
$N$ distinct $1$-lump solutions. This asymptotics is useful to compute
the conserved quantities. For example, the time-invariance
of $\int\!\!\int_{\mathbb{R}^2}u^2$ implies that
$$\int\!\!\int_{\mathbb{R}^2}u^2 = 
N\int\!\!\int_{\mathbb{R}^2}u_1^2 + O(|t|^{-1/p}) 
= N\int\!\!\int_{\mathbb{R}^2}u_1^2 = N(16\pi b)\,,$$  
for $p=3,2$ or $1$, where the second to last expression above follows
by letting $|t| \to \infty$. 
Other conserved quantities for the KPI $N$-lump solutions can be
computed using same decomposition rule.  
\section{Concluding remarks}
A comprehensive study of a class of rationally decaying, nonsingular,
multi-lump solutions of the 
KPI equation is carried out. These solutions are constructed
employing the binary Darboux transformation. It is
shown that geometrically the KPI $\tau$-function corresponds to a point 
in the complex Grassmannian ${\rm Gr}_{\mathbb{C}}(n,m_n+1)$ endowed with 
a hermitian inner product. Explicit formula for the $\tau$-function as a 
sum of squares is derived in terms of Schur functions associated   
with partition of a positive integer $N$. A classification scheme of the
multi-lump solutions is then developed based on integer partition theory.
Moreover, it is shown that there exists a {\it duality} among the multi-lump
solutions associated with a partition $\lambda$ and its conjugate $\lambda'$
of $N$.

The solution structure of the multi-lumps associated with each partition $\lambda$
of $N \in \mathbb{N}$ is analyzed to show that the waveform consists of
$N$ distinct peaks for $|t| \gg 1$. Moreover, near each peak 
the solution behaves asymptotically 
as a $1$-lump solution and is vanishingly small elsewhere.
The key ingredient in this analysis is the fact that each multi-lump
solution is uniquely characterized by a 
specific Schur function which is the characteristic of the irreducible 
representation of the symmetric group $S_N$.
Consequently, asymptotic analysis of the peak locations for large $|t|$ 
in conjunction with combinatorial methods from integer partition
theory reveal that 
to leading order, the peak distribution in the co-moving plane is described by the
zeros of the Yablonskii-Vorob'ev and Wronskian-Hermite polynomials whose
coefficients are determined by the irreducible characters of the symmetric
group $S_N$. The root structure of these polynomials give rise to novel,
geometric surface wave patterns exhibited by the KPI multi-lumps 
for $|t| \gg 1$. Furthermore, it is shown that the duality of the Schur functions 
associated with a partition and its conjugate leads to certain symmetries 
of the peak distributions for large positive and negative values of $t$.

It is possible to generate a number of interesting rational solutions
by considering special reductions of the KPI multi-lump solutions
described in this article. In certain cases it is also possible to obtain
special rational solutions of $1+1$-equations such as the Boussinesq and the 
non-linear Schr\"odinger equation as symmetry reductions of the
KPI equation. A future direction of study is to investigate
systematically such reduction processes. It is well known that the multi-lump
solutions of KPI form rational potentials associated with the classical
non-stationary Schr\"odinger equation. We plan to report our investigation
on this topic in the future and to relate our work to the IST scheme discussed 
in~\cite{AV97,VA99} in the hope of addressing certain open issues. 

\section{Acknowledgments}
SC thanks Prof. Yuji Kodama (OSU) for useful discussions. 
Initial part of this work was partially supported
by NSF grant No. DMS-1410862.



\end{document}